\newenvironment{proof}{
\noindent {\em Proof.}\ }{{\hfill\
    $\Box$}\vspace{.3pc}}
\newcommand{\diam}{\mathcal D}
\newcommand{\algoU}{{\tt Algo}_{\mathcal U}}
\newcommand{\algoHC}{{\tt Algo}_{\mathcal {HC}}}
\newcommand{\algoFHC}{{\tt Algo}_{\mathcal {FHC}}}
\newcommand{\algoB}{{\tt Algo}_{\mathcal B}}
\newcommand{\neig}{\mathcal N}
\newcommand{\mind}{Min\_d}
\newcommand{\Best}{bestParent}
\newcommand{\update}{update}
\newcommand{\correctb}{Pred\_correct\_node}
\newcommand{\correctc}{Pred\_correct\_d}
\newcommand{\correcta}{Pred\_SUB\_d}
\newcommand{\correctx}{Pred\_UB\_d}
\newcommand{\Root}{\textit{R}} 
\newcommand{\UNBOUNDED}{VUB}
\newtheorem{theorem}{Theorem}
\newtheorem{lemma}{Lemma}
\newtheorem{remark}{Remark}
\newtheorem{corollary}{Corollary}
\newtheorem{definition}{Definition}
\newtheorem{observation}{Observation}
\newtheorem{notation}{Notation}
\newcommand{\cache}[1]{}
\newcommand{\Ca}{conf1} 
\newcommand{\Cb}{conf2} 
\newcommand{\Cc}{conf3} 
\newcommand{\Cd}{conf4}
\newcommand{\const}{nbSteps} 
\newcommand{\nbTotal}{nbTotal}
\newcommand{\D}{D} 
\newcommand{\B}{Att\_HC}
\newcommand{\A}{Att\_B}
\newcommand{\CD}{Att\_dist}
\newcommand{\UBD}{Att\_UB}
\newcommand{\UB}{Att\_SUB}
\newcommand{\LA}{Att}
\begin{document}
\title{Silent Self-stabilizing BFS Tree Algorithms Revised}
\author{St\'ephane Devismes}
\affil{VERIMAG}
\author{Colette Johnen}
\affil{Univ. Bordeaux, LaBRI, UMR 5800, F-33400 Talence, France}

\date{}
\maketitle

\begin{abstract}
  In this paper, we revisit two fundamental results of the
  self-stabilizing literature about silent BFS spanning tree
  constructions: the Dolev {\em et al} algorithm and the Huang and
  Chen's algorithm. More precisely, we propose in the composite
  atomicity model three straightforward adaptations inspired from
  those algorithms. We then present a deep study of these three
  algorithms.  Our results are related to both correctness
  (convergence and closure, assuming a distributed unfair daemon) and
  complexity (analysis of the stabilization time in terms of rounds
  and steps).
\end{abstract}

\paragraph{Keywords:} Self-stabilization, BFS spanning tree,
composite atomicity model, distributed unfair daemon, stabilization
time, round and step complexity.


\section{Introduction}

\textit{Self-stabilization}~\cite{Dijkstra74} is a versatile technique
to withstand \textit{any finite number} of transient faults in a
distributed system: a self-stabilizing algorithm is able to recover a
correct behavior in finite time, regardless of the \emph{arbitrary}
initial configuration of the system, and therefore, also after the
occurrence of transient faults.

After the seminal work of Dijkstra, several self-stabilizing
algorithms have been proposed to solve various tasks such as token
circulations~\cite{HuangC93}, clock
synchronization~\cite{CouvreurFG92}, propagation of information with
feedbacks~\cite{BuiDPV99}, {\em etc}. Among the vast self-stabilizing
literature, many works more precisely focus on the construction of
distributed data structures, {\em e.g.}, minimal dominating
sets~\cite{KakugawaM06}, clustering~\cite{CaronDDL10}, spanning
trees~\cite{CYH91}.
Most of the self-stabilizing algorithms which construct distributed
data structures actually achieve an additional property called {\em
  silence}~\cite{DolevGS96}: a silent self-stabilizing algorithm
converges within finite time to a configuration from which the value
of all its communication variables are constant.

\paragraph{Related Works.} We focus here on silent self-stabilizing
spanning tree constructions, {\em
  e.g.},~\cite{CYH91,HC92,DIM93,CollinD94,KK05,CRV11}. Spanning tree
constructions are of major interest in networking, {\em e.g.}, they
are often involved in the design of routing and broadcasting tasks.
Moreover, (silent) self-stabilizing spanning tree constructions are
widely used as a basic building blocks of more complex
self-stabilizing solutions. Indeed, {\em composition} is a natural way
to design self-stabilizing algorithms~\cite{T01} since it allows to
simplify both the design and proofs of self-stabilizing algorithms.
Various composition techniques have been introduced so far, {\em
  e.g.}, collateral composition~\cite{GoudaH91}, fair
composition~\cite{D00b}, and conditional
composition~\cite{DattaGPV01}; and many self-stabilizing algorithms
actually are made as a composition of a silent spanning tree algorithm
and another algorithm designed for tree topologies.
For example, collateral, fair, and conditional compositions
are respectively used the design of the algorithms given
in~\cite{Lin},~\cite{ButelleLB95}, and~\cite{CournierDV05}.  Notably,
the silence property is not mandatory in such designs, however it
allows to write simpler proofs~\cite{DattaLDHR13}.

Many self-stabilizing spanning tree constructions have been proposed,
{\em e.g.},~\cite{CYH91,HC92,DIM93,CollinD94,KK05,CDV09,CRV11}. These
constructions mainly differ by the type of tree they compute, {\em
  e.g.}, the tree can be arbitrary~\cite{CYH91},
depth-first~\cite{CollinD94},
breadth-first~\cite{HC92,DIM93,CDV09,CRV11}, {\em etc}.
In this paper, we focus two particular {\em Breadth-First Search}
(BFS) spanning tree constructions: the one of Huang and
Chen~\cite{HC92}, and the one of {\em Dolev et al}~\cite{DIM93}. These
two constructions are among the most commonly used in the
self-stabilizing literature.\footnote{As a matter of facts,
  \cite{HC92} and~\cite{DIM93} are respectively cited 109 and 409
  times in \texttt{Google Scholar}.} Indeed, these constructions
cumulate several advantages:
\begin{enumerate}
\item Their design is simple.
\item The BFS spanning tree is really popular because of its
  minimum height.
\item They are silent. Notice, by contrast, that the solution given
  in~\cite{CDV09} is not silent.
\item Despite their time complexity was not analyzed until now, they
  are commonly assumed to be asymptotically optimal in rounds, {\em
    i.e.}, $O(\diam)$ rounds, where $\diam$ is the diameter of the
  network. Notice, by contrast, that the stabilization time of the
  solutions proposed in~\cite{CDV09} and~\cite{CRV11} are
  $O(n+\diam^2)$ rounds and $O(\diam^2)$ rounds, respectively.
\end{enumerate}
More precisely, the Huang and Chen's algorithm~\cite{HC92} is written
in the composite atomicity model. It assumes the processes have the
knowledge of $n$, the size of the network. This assumption allows
processes to have a bounded memory: $\Theta(\log n)$ bits are required
per process.  The algorithm is proven assuming a {\em distributed
  unfair daemon}, the most general scheduling assumption of the
model. However, no complexity analysis is given about its
stabilization time in steps or rounds, the two main complexity
metrics of the model.

The Dolev {\em et al}'s algorithm~\cite{DIM93} is written in the
read/write atomicity model. This model is more general than the
composite atomicity model.  The algorithm does not assume any
knowledge on any global parameter of the network, such as $n$ for
example. The counterpart being that there is no bound on process local
memories.  The algorithm is proven under the central fair assumption
({\em n.b.}, the notion of unfair daemon is meaningless in this
model). Despite no complexity analysis is given in the paper, authors
conjecture the stabilization time is asymptotically optimal, {\em
  i.e.}, $O(\diam)$ rounds. By definition, a straightforward
translation of the algorithm of Dolev {\em et al} also works in the
composite atomicity model assuming a distributed weakly fair
daemon. However, an ad hoc proof is necessary if one want to establish
its self-stabilization under a distributed unfair daemon.
Notice that the algorithm of Dolev {\em et al} and its bounded memory
variant are used in the design of several algorithms written in this
composite atomicity model, {\em e.g.},~\cite{AroraG94,Karaata02}.
Moreover, several algorithms are based on similar principles, {\em
  i.e.},~\cite{DLV11j0,GHIJ14}.  Hence, a proof of its
self-stabilization assuming a distributed unfair daemon is highly
desirable.

\paragraph{Contribution.} In this paper, we study three silent
self-stabilizing BFS spanning tree algorithms written in the composite
atomicity model.  The first algorithm, called $\algoU$, is the
straightforward translation of the Dolev {\em et al}'s
algorithm~\cite{DIM93} into the composite atomicity model. The second
one, $\algoB(D)$, is a variant of $\algoU$, where the process local
memories are bounded. To that goal, the knowledge of some upper bound
$D$ on the network diameter is assumed. Finally, The third algorithm,
noted $\algoHC(D)$, is a generalization of the Huang and Chen's
algorithm~\cite{HC92}, where the exact knowledge of $n$ is replaced by
the knowledge of some upper bound $D$ on the network diameter ({\em
  n.b.}, by definition, $n$ is a particular upper bound on the
diameter).

The general purpose of this paper is twofold. First, we show the close
relationship between these three algorithms. To see this, we propose
a general and simple proof of self-stabilization for the
three algorithms under the distributed unfair daemon. This proof
implies that every executions of each of the three algorithms is
finite in terms of steps.  Moreover, notice that the proof shows in
particular that the assumption on the exact knowledge of $n$ in the
initial algorithm of Huang and Chen was too strong. Second, the proof
also validates the use of the Dolev {\em et al}'s algorithm and its
bounded-memory variant in the composite atomicity model assuming any
daemon (in particular, the unfair one).

Second, we propose a complexity analysis the stabilization time of
these three algorithms in both steps and rounds. Our results are both
positive and negative.  First, we show that the stabilization time of
$\algoU$ and $\algoB(D)$ is {\em optimal} in rounds by showing that in
both cases the worst case is exactly $\diam$ rounds. With few
modifications our proof can be adapted for the read/write atomicity
model, validating then the conjecture in~\cite{DIM93} which claimed
that the stabilization time the Dolev {\em et al}'s algorithm was
asymptotically optimal in rounds.

We then establish a lower bound in $\Omega(D)$ rounds on the
stabilization time of $\algoHC(D)$. Now, $\algoHC(n)$ is actually the
algorithm of Huang and Chen. Thus, the algorithm of Huang and Chen
stabilizes in $\Omega(n)$ rounds. This result may be surprising as
until now this algorithm was conjectured to stabilize in $O(\diam)$
rounds. More precisely, this negative result is mainly due to the fact
that two rules of the algorithm are not mutually exclusive, and when
both rules are enabled at the same process $p$, the daemon may choose
to activate any of them. Our lower bound is thus established when the
daemon gives priority to one of the two rules ($HC_1$). Hence, to
circumvent this problem we proposed a straightforward variant, noted
$\algoFHC(D)$, where we give priority to the other rule ($HC_2$). We
then establish that in this latter case the stabilization time becomes
exactly $\diam+1$ rounds in the worst case.

Finally, we consider the stabilization time in steps. Our results are
all negative. Indeed, we first show that the stabilization time in
steps of $\algoU$ cannot be bounded by any function depending on
topological parameters. We then exhibit a lower bound exponential in
$\diam$ (the actual diameter of the network) on the stabilization time
in steps which holds for both $\algoB(D)$ and $\algoHC(D)$.  Notice,
by contrast, that the stabilization time of the solutions proposed
in~\cite{CDV09} and~\cite{CRV11} are $O(\Delta.n^3)$ steps
and~$O(n^6)$ steps, respectively.

\paragraph{Roadmap.} The rest of the paper is organized as
follows. The next section is dedicated to the description of the
computation model and definitions. The formal codes of the three
algorithms are given in Section~\ref{sect:algo}. In
Section~\ref{sect:corr}, we propose our general proof of
self-stabilization assuming a distributed unfair daemon. In
Section~\ref{sect:round}, we analyze the stabilization time in rounds
of the three solutions. We present an analysis of the stabilization
time in steps of the three solutions in
Section~\ref{sect:step}. Section~\ref{sect:ccl} is dedicated to
concluding remarks.

\section{Preliminaries}

\subsection{Distributed Systems}

We assume distributed systems of $n>0$ interconnected processes. One
process, called the {\em root}, is distinguished, the others are {\em
  anonymous}. The root of the system is simply denoted by $\Root$.
Each process $p$ can directly communicate with a subset $\neig_p$ of
other processes, called its {\em neighbors}.  We assume bidirectional
communications, {\em i.e.}, if $q \in \neig_p$, then $p \in
\neig_q$. The topology of the system is a simple undirected connected
graph $G = (V,E)$, where $V$ is the set of processes and $E$ is the
set of edges, each edge being an unordered pair of neighboring
processes.   
$\|p,q\|$ denotes the distance (the length of the
shortest path) from $p$ to $q$ in $G$. We denote by $\diam$ the
diameter of $G$, {\em i.e.}, $\diam= \max_{p,q\in V} \|p,q\|$.

\subsection{Computational Model} We consider the {\em locally shared
  memory model\/ with composite atomicity} introduced by
Dijkstra~\cite{Dijkstra74}, where each process communicates with its
neighbors using a finite set of {\em locally shared variables},
henceforth called simply {\em variables}.  Each process can read its
own variables and those of its neighbors, but can write only to its
own variables.  Each process operates according to its (local) {\em
  program}.  A \emph{distributed algorithm\/} ${\mathcal A}$ consists
of one local program $\mathcal A(p)$ per process $p$.

$\mathcal A(p)$ is given as a finite set of \textit{rules}: $\{Label_i
: Guard_i \to Action_i\}$.  {\em Labels} are only used to identify
rules in the reasoning.  The {\em guard\/} of a rule in $\mathcal
A(p)$ is a Boolean expression involving the variables of $p$ and its
neighbors. The {\em action\/} part of a rule in $\mathcal A(p)$
updates some variables of $p$.  The \emph{state} of a process in
${\mathcal A}$ is defined by the values of its variables in ${\mathcal
  A}$.  A {\em configuration\/} of ${\mathcal A}$ is an instance of
the states of every process in ${\mathcal A}$.  $\mathcal C_{\mathcal A}$ is
the set of all possible configurations of ${\mathcal A}$. (When there
is no ambiguity, we omit the subscript ${\mathcal A}$.)  A rule can be
executed only if its guard evaluates to {\em true}; in this case, the
rule is said to be {\em enabled\/}.  A process is said to be enabled
if at least one of its rules is enabled.  We denote by $\mbox{\it
  Enabled}(\gamma)$ the subset of processes that are enabled in
configuration $\gamma$.  When the configuration is $\gamma$ and
$\mbox{\it Enabled}(\gamma) \neq \emptyset$, a {\em daemon} selects a
non-empty set $\mathcal X \subseteq \mbox{\it Enabled}(\gamma)$; then
every process of $\mathcal X$ {\em atomically} executes one of its
enabled rule, leading to a new configuration $\gamma^\prime$, and so
on.  The transition from $\gamma$ to $\gamma^\prime$ is called a {\em
  step} (of ${\mathcal A}$). The possible steps induce a binary
relation over $\mathcal C_{\mathcal A}$, denoted by $\mapsto_{\mathcal
  A}$ (or, simply $\mapsto$, when it is clear from the context).  An
{\em execution\/} of $\mathcal A$ is a maximal sequence of its
configurations $e=\gamma_0\gamma_1\ldots \gamma_i\ldots$ such that
$\gamma_{i-1}\mapsto\gamma_i$ for all $i>0$. The term ``maximal''
means that the execution is either infinite, or ends at a {\em
  terminal\/} configuration in which no rule of ${\mathcal A}$ is
enabled at any process.  We denote by $\mathcal{E}_{\mathcal A}$ (or,
simply $\mathcal{E}$, when it is clear from the context) the set of
all possible executions of ${\mathcal A}$.  The set of all executions
starting from a particular configuration $\gamma$ is denoted
$\mathcal{E}(\gamma)$. Similarly, $\mathcal{E}(\mathcal S)$ is the set
of execution whose the initial configuration belongs to $\mathcal S
\subseteq \mathcal C$.

As previously stated, each step from a configuration to another is
driven by a daemon.  In this paper we assume the daemon is {\em
  distributed} and {\em unfair}. ``Distributed'' means that while the
configuration is not terminal, the daemon should select at least one
enabled process, maybe more. ``Unfair'' means that there is no fairness
constraint, {\em i.e.}, the daemon might never select an enabled
process unless it is the only enabled process.

We say that a process $p$ is \emph{neutralized\/} during the step
$\gamma_i \mapsto \gamma_{i+1}$ if $p$ is enabled at $\gamma_i$ and
not enabled at $\gamma_{i+1}$, but does not execute any rule between
these two configurations.  An enabled process is neutralized
if at least one neighbor of $p$ changes its
state between $\gamma_i$ and $\gamma_{i+1}$, and this change makes the
guards of all rules of $p$ {\em false}.
To evaluate time complexity, we use the notion of \emph{round}.  This
notion captures the execution rate of the slowest process in any
execution.  The first round of an execution $e$, noted $e^{\prime}$,
is the minimal prefix of $e$ in which every process that is enabled in
the initial configuration either executes an action or becomes
neutralized.  Let $e^{\prime \prime}$ be the suffix of $e$ starting
from the last configuration of $e^{\prime}$.  The second round of $e$
is the first round of $e^{\prime \prime}$, and so forth.

\subsection{Self-Stabilization and Silence}

We are interesting in algorithms which converge from an arbitrary
configuration to a configuration where output variables define a
specific data structure, namely a BFS spanning tree. Hence,
we define a specification as a predicate $\mathds{SP}$ on $\mathcal C$
which is {\em true} if and only if the outputs define the expected
data structure.

{\em Silent Self-stabilization} is a particular form of
self-stabilization defined by Dolev {\em et al}~\cite{DolevGS96} as
follows. A distributed algorithm $\mathcal A$ is {\em silent
  self-stabilizing w.r.t. specification $\mathds{SP}$} if following
two conditions holds:
\begin{description}
\item[Termination:] all executions of $\mathcal A$ are finite; and
\item[Partial Correctness:] all terminal configurations of $\mathcal A$ satisfy $\mathds{SP}$.
\end{description}

In this context, the {\em
  stabilization time} is the maximum time (in steps or rounds) to
reach a terminal configuration starting from any configuration.
 
\section{Three Self-Stabilizing BFS Constructions}\label{sect:algo}

Below we give three self-stabilizing BFS constructions: $\algoU$,
$\algoB(D)$, and $\algoHC(D)$.\footnote{$\mathcal U$, $\mathcal B$,
  $\mathcal{HC}$ respectively stand for {\em unbounded}, {\em
    bounded}, and {\em Huang-Chen}.} In all these variants, the local
program of the $\Root$ just consists in the following constant:
$d_{\Root} = 0$.

Each non-root process $p$ maintains two variables: $d_p$ and
$par_p$. The domain of $par_p$ is $\neig_p$, the set of $p$'s
neighbors. The domain of $d_p$ differs depending on the version:
\begin{itemize}
\item $d_p$ is an unbounded positive integer in the first version,
  $\algoU$.
\item $d_p \in [1..D]$ in the two other algorithms, $\algoB(D)$ and
  $\algoHC(D)$. The correctness of both $\algoB(D)$ and $\algoHC(D)$
  will be established for any $D \geq \diam$.
\end{itemize}
The three algorithms use the following three macros:
\begin{itemize}
\item $\mind(p) = \min \{d_q\ |\ q \in \neig_p\}$
\item $\Best(p)$  is  any neighbor $q$ such that $d_q = \mind(p)$
\item $\update(p)$: $d_p \gets \mind(p)+1$; $par_p \gets \Best(p)$
\end{itemize}
Moreover, the following two predicates are used in the algorithms:
\begin{itemize}
\item $dOk(p) \equiv (d_p = \mind(p) + 1)$
\item $parOk(p) \equiv  (d_p = d_{par_p} +1)$
\end{itemize}
The two variables are maintained using the actions defined below.

\subsection{Actions of $\algoU$}\label{sub:u}

The BFS algorithm in~\cite{DIM93} is designed
for the read/write atomicity model.
The straightforward adaptation of this algorithm
in the locally shared memory model is given below.\\

\begin{tabular}{cclcl}
$U_1$ & $::$ & $\neg dOk(p)$ & $\to$ & $\update(p)$
\\
$U_2$ & $::$ & $dOk(p) \wedge \neg parOk(p)$ & $\to$ & $par_p \gets \Best(p)$;\\
\end{tabular}

\subsection{Actions of $\algoB(D)$}\label{sub:d}

The following algorithm is a variant of $\algoU$, where the
domain of the $d$-variable is now bounded by the input parameter $D$.\\

\begin{tabular}{cclcl}
$B_1$ & $::$ & $\mind(p)<D \wedge \neg dOk(p)$ & $\to$ & $\update(p)$\\

$B_2$ & $::$ & $\mind(p)<D \wedge dOk(p) \wedge \neg parOk(p)$ & $\to$ & $par_p \gets \Best(p)$;\\
$B_3$ & $::$ & $\mind(p)=D \wedge( d_p\neq D)$ & $\to$ & $d_p \gets D$\\
\end{tabular}

\subsection{Actions of $\algoHC(D)$}\label{sub:HC}

Actions given below are essentially the same as those of the BFS
algorithm given in~\cite{HC92}.  Actually, they differ in two points
from the version of~\cite{HC92}.
\begin{itemize}
\item In~\cite{HC92}, $d$-variables are defined in such way that $d_R
  = 1$ and $d_p \geq 2, \forall p \in V \setminus \{\Root\}$. We have
  changed the domain definition of the $d$-variables for sake of
  uniformity. However, this difference on the domain definitions has
  no impact on the behavior of the algorithm.
\item Moreover, to be more general, we have replaced in the code the
  exact value $n$ (the number of processes) by $D$. 
\end{itemize}

\begin{tabular}{cclcl}
$HC_1$ & $::$ & $\neg parOk(p) \wedge  d_{par_p} < D$ & $\to$ & $d_p \gets d_{par_p}+1$\\
$HC_2$ & $::$ & $d_{par_p} > \mind(p)$ & $\to$ & $\update(p)$\\
\end{tabular}

\bigskip

Notice that $HC_1$ and $HC_2$ are not mutually exclusive, {\em i.e.},
in some configurations, both rules can be enabled at the same process.
For instance, in the initial configuration of Figure \ref{fig:execHC}
(page \pageref{fig:execHC}), rules $HC_1$ and $HC_2$ are enabled at
process $a$. In this case, if $a$ is selected by the daemon, the
daemon also chooses which of the two rules is executed.

\section{Correctness}\label{sect:corr}

In this section, we give a general proof which establishes the
self-stabilization of the three algorithms under a distributed unfair
daemon. The proof of correctness consists of the following two main steps:
\begin{description}
\item[Partial correctness (Theorems~\ref{bfs:5} and~\ref{pc})] which
  means that if an execution terminates, then the output of the
  terminal configuration is correct.  (In our context, the output of
  the terminal configuration define a BFS spanning tree rooted at
  \Root.)
\item[Termination (Theorem~\ref{term})] which means that every
  possible execution
  is finite in terms of steps.
\end{description}

\subsection{Partial Correctness}

Below we establish the partial correctness of $\algoU$, $\algoB(D)$,
and $\algoHC(D)$ using three main steps: (1) we define a set of {\em
  legitimate} configurations (Definition~\ref{def:legi}), and show
that (2) a BFS spanning tree is defined in each legitimate
configuration (Theorem~\ref{bfs:5}) and (3) every terminal
configuration of each algorithm is legitimate
(Theorem~\ref{pc}). additionally, we show that every legitimate
configuration is terminal in any of the three algorithms
(Theorem~\ref{legi}).

\subsubsection{Legitimate Configurations}
\begin{definition}\label{def:legi}
  A configuration is {\em legitimate} if and only if for every process
  $p$, we have $d_p = \|p,\Root\|$ and if $p \neq \Root$, then $d_{p}
  = d_{par_p}+1$.
\end{definition}

Let $T_\gamma = (V,E_T)$, where $E_{T_\gamma} = \{\{p,q\} \in E\ |\ q
\neq \Root \wedge par_q = p$ in $\gamma\}$.
 
\begin{theorem}\label{bfs:5}
 $T_\gamma$ is a BFS spanning tree in every legitimate
  configuration $\gamma$.
\end{theorem}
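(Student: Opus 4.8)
The plan is to verify the two ingredients of ``BFS spanning tree'' in turn: that $T_\gamma$ is a spanning tree of $G$ rooted at $\Root$, and that it is breadth-first, meaning that for every process $p$ the length of the path joining $p$ to $\Root$ in $T_\gamma$ equals $\|p,\Root\|$. The only facts I would use are those provided by legitimacy, namely $d_p=\|p,\Root\|$ for every $p$ and $d_q=d_{par_q}+1$ for every $q\neq\Root$, together with $d_{\Root}=\|\Root,\Root\|=0$ and the model assumption $par_q\in\neig_q$ (so that $\{par_q,q\}$ is genuinely an edge of $G$).

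For the tree part, I would first note that $E_{T_\gamma}$ consists of exactly the $n-1$ edges $\{par_q,q\}$, one for each non-root process $q$, and that these edges are pairwise distinct: if $\{par_q,q\}=\{par_{q'},q'\}$ with $q\neq q'$, then necessarily $q'=par_q$ and $q=par_{q'}$, which forces both $d_q=d_{q'}+1$ and $d_{q'}=d_q+1$, a contradiction. Hence $|E_{T_\gamma}|=n-1$. Then, starting from an arbitrary process $p$ and repeatedly following the parent pointer, the value of $d$ strictly decreases by one at each step, remains a non-negative integer (it is a distance), and takes the value $0$ only at $\Root$ (since $\|p,\Root\|=0$ iff $p=\Root$); consequently this walk is simple, lies in $T_\gamma$, and terminates at $\Root$ after exactly $d_p$ steps. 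This shows $T_\gamma$ is connected, and a connected graph on $n$ vertices with $n-1$ edges is a tree; equivalently, one may argue acyclicity directly from the strict monotonicity of $d$ along parent edges. Either way, $T_\gamma$ is a spanning tree rooted at $\Root$, in which the parent of each non-root $q$ is $par_q$.

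For the BFS part, since $T_\gamma$ is a tree the parent walk constructed above is the unique path from $p$ to $\Root$ in $T_\gamma$, and its length is $d_p$. By legitimacy $d_p=\|p,\Root\|$, which is the minimum possible length of a path from $p$ to $\Root$ in $G$; hence $T_\gamma$ contains a shortest path from every process to the root, i.e., it is a BFS spanning tree.

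I do not anticipate a real difficulty: the whole argument reduces to the strict monotonicity of $d$ along parent edges, which is immediate from Definition~\ref{def:legi}. The only points deserving attention are the verification that no edge of $T_\gamma$ is counted twice (so that the edge count is exactly $n-1$), and being explicit that ``BFS spanning tree'' is understood as ``spanning tree whose distances to $\Root$ coincide with the distances to $\Root$ in $G$''; once these are fixed the proof is a short parent-pointer chase.
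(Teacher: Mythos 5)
Your proof is correct and follows essentially the same approach as the paper: both arguments rest on the strict monotonicity of $d$ along parent edges (to get a spanning tree with $n-1$ distinct parent edges) and on the telescoping identity $d_p = \|p,\Root\|$ along the tree path to conclude breadth-firstness. The only cosmetic difference is that you certify the tree via ``connected with $n-1$ edges'' (using the parent-pointer chase down to $\Root$) where the paper uses ``acyclic with $n-1$ edges''; your explicit check that distinct non-root processes contribute distinct edges is in fact slightly cleaner than the paper's corresponding step.
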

\begin{proof}
Let $\gamma$ be any legitimate configuration.  We first demonstrate
that $T_\gamma$ is a spanning tree by showing the following two
claims:
\begin{description}
\item[$T_\gamma$ is acyclic:] Assume, by contradiction, that
  $T_\gamma$ contains a cycle $p_0,\ldots, p_k, p_0$. By
  definition, $\Root$ is not involved into the cycle. Assume,
  without loss of generality, that for all $i>0$, $par_{p_{i}} =
  p_{i-1}$ et $par_{p_0} = p_{k}$ in $\gamma$. From
  Definition~\ref{def:legi}, in $\gamma$ we have $d_{p_i}>d_{p_{i-1}}$
  and, by transitivity, $d_{p_k} > d_{p_0}$. Now, as $par_{p_0} =
  p_{k}$, we have $d_{p_0} > d_{p_k}$ by Definition~\ref{def:legi}, a
  contradiction.

\item[$|E_{T_\gamma}| = n-1$:] First, by definition we have
  $|E_{T_\gamma}| \leq n-1$. Now, if $|E_{T_\gamma}| < n-1$, then
  there is at least one edge $\{p,q\}$ such that $par_p = q$ and $par_q
  = p$.  This contradicts the fact that $T_\gamma$ is acyclic. Hence,
  $|E_{T_\gamma}| = n-1$.
\end{description}
We now show that $T_\gamma$ is breadth-first.  Let $p_0=\Root, \ldots,
p$ the unique path from $\Root$ to $p$ in the tree.  By definition,
for all $i>0$, $par_{p_{i}} = p_{i-1}$.  By Definition~\ref{def:legi},
we have $d_{p_i}=d_{p_{i-1}}+1$ and, by transitivity,
$d_{p_k}=d_{p_{0}}+k$. Moreover, $d_{p_0} = d_\Root = 0$. So,
$d_{p_k}$ is the length $k$ of the path from $\Root$ to $p_k$. Now, by
Definition~\ref{def:legi}, $d_{p_k}$ is also equal to
$\|\Root,p_k\|$. Hence, the length $k$ of the path from $\Root$ to
$p_k$ is equal to the distance from $\Root$ to $p_k$ in $G$.
\end{proof}

\subsubsection{Legitimacy of Terminal Configurations}
Let $D\geq \diam$.  Let $\mathcal{TC}_{\algoU}$,
$\mathcal{TC}_{\algoB(D)}$, and $\mathcal{TC}_{\algoHC(D)}$ be the set of
terminal configurations of $\algoU$, $\algoB(D)$, and $\algoHC(D)$,
respectively.

\begin{lemma}
Let $\gamma$ be a configuration of
$\mathcal{TC}_{\algoU}$. Let $X$ be the largest distance value in $\gamma$.
$\gamma$ is a
configuration of $\mathcal{TC}_{\algoB(X)}$. 
\end{lemma}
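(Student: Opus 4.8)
The plan is to unfold the definition of terminality for $\algoU$, re-read $\gamma$ as a configuration of $\algoB(X)$, and then check that none of the rules $B_1$, $B_2$, $B_3$ is enabled anywhere in $\gamma$; no appeal to legitimacy or partial correctness is needed, the argument being purely syntactic.

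First I would extract what terminality of $\algoU$ gives at each (necessarily non-root) process $p$: since $U_1$ is disabled, $dOk(p)$ holds, and then, since $dOk(p)$ holds and $U_2$ is disabled, $parOk(p)$ holds as well. So in $\gamma$ we have $d_p=\mind(p)+1$ and $d_p=d_{par_p}+1$ for every non-root $p$, together with $d_\Root=0$. Next I would verify that $\gamma$ is a legal configuration of $\algoB(X)$: the domain of $par_p$ is $\neig_p$ in both algorithms, so the only point to check is that $d_p$ lies in the restricted domain $[1..X]$ for every non-root $p$. The lower bound $d_p\geq 1$ follows from $d_p=\mind(p)+1$ together with the fact that every $d$-value is at least $0$ (the root's being $0$, the others positive integers), and the upper bound $d_p\leq X$ is immediate from the choice of $X$ as the largest $d$-value occurring in $\gamma$.

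It then remains to rule out the three rules of $\algoB(X)$. For $B_1$ and $B_2$ this is immediate, since their guards contain $\neg dOk(p)$ and $\neg parOk(p)$ respectively, both already known to be false in $\gamma$. The only real content — and the step I would flag as the crux, small as it is — is the exclusion of $B_3$, whose guard requires $\mind(p)=X$: were this to hold, $dOk(p)$ would force $d_p=\mind(p)+1=X+1$, contradicting $d_p\leq X$; hence the first conjunct of the guard of $B_3$ is false at every process. Consequently no process is enabled in $\gamma$ under $\algoB(X)$, i.e., $\gamma\in\mathcal{TC}_{\algoB(X)}$. This is also precisely the place where taking $D=X$ (rather than a smaller value) matters. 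The corner case $n=1$ is covered vacuously, there being then no non-root process and hence no candidate enabled rule.
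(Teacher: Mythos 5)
Your proof is correct and follows essentially the same route as the paper: extract $dOk(p)\wedge parOk(p)$ from the terminality of $\algoU$, kill $B_1$ and $B_2$ immediately, and rule out $B_3$ by showing $\mind(p)\neq X$ using the maximality of $X$. The only (immaterial) difference is that you derive $\mind(p)\neq X$ from $dOk(p)$ by contradiction, whereas the paper derives $\mind(p)<X$ from $parOk(p)$; your additional domain check and the $n=1$ remark are harmless extras the paper omits.
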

\begin{proof}
  Every process $p$ satisfies $dOk(p) \wedge parOk(p)$ in $\gamma$.
  So, the rule $B_1$ and $B_2$ of $\mathcal{TC}_{\algoB(X)}$ are
  disabled at $p$ in $\gamma$. Moreover, by definition, $d_p \leq X$
  and $parOk(p)$ in $\gamma$ implies that $d_{par_p} < X$ in
  $\gamma$. So, $\mind(p) < X$ in $\gamma$ and, consequently, $B_3$ of
  $\mathcal{TC}_{\algoB(X)}$ is disabled at $p$ in $\gamma$.
\end{proof}

\begin{lemma}
Let $D \geq \diam$ and $\gamma$ be a configuration. If $\gamma$ be a terminal configuration of $\algoB(D)$, then 
$\gamma$ is a terminal configuration of $\algoHC(D)$. 
\end{lemma}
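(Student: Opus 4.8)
The plan is to fix an arbitrary terminal configuration $\gamma$ of $\algoB(D)$ and show that no rule of $\algoHC(D)$ is enabled at any process in $\gamma$. Since the two algorithms use exactly the same variable domains ($d_p \in [1..D]$ for every non-root $p$ in both) and the $\Root$ has no rule in either one, this suffices to conclude that $\gamma$ is a terminal configuration of $\algoHC(D)$. So I would take an arbitrary non-root process $p$ and prove that both $HC_1$ and $HC_2$ are disabled at $p$ in $\gamma$. Since $\mind(p)$ is the minimum of the distance values among the neighbors of $p$, and those values lie in $\{0\} \cup [1..D]$, we always have $\mind(p) \le D$; this suggests splitting the argument into the two cases $\mind(p) < D$ and $\mind(p) = D$.

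In the case $\mind(p) < D$, I would first use that $B_1$ is disabled at $p$ to deduce $dOk(p)$, and then, since $\mind(p) < D$ and $dOk(p)$ hold, use that $B_2$ is disabled at $p$ to deduce $parOk(p)$. Combining $dOk(p)$ and $parOk(p)$ gives $d_{par_p} + 1 = d_p = \mind(p) + 1$, hence $d_{par_p} = \mind(p)$. Then $parOk(p)$ makes the guard of $HC_1$ false, and $d_{par_p} = \mind(p)$ (so $\neg(d_{par_p} > \mind(p))$) makes the guard of $HC_2$ false.

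In the case $\mind(p) = D$, I would use that $B_3$ is disabled at $p$ to deduce $d_p = D$. The one mildly delicate point, which I expect to be the only real obstacle, is here: from $par_p \in \neig_p$ we get $d_{par_p} \ge \mind(p) = D$, and since the domain $[1..D]$ of $d_p$ is non-empty we have $D \ge 1 > 0 = d_{\Root}$, so $par_p$ cannot be the root and therefore $d_{par_p} \le D$; hence $d_{par_p} = D = \mind(p)$. This equality makes the guard of $HC_2$ false, and $d_{par_p} = D$ makes the guard of $HC_1$ false as well.

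In both cases $p$ is disabled in $\algoHC(D)$; as $p$ was arbitrary, $\gamma$ is terminal for $\algoHC(D)$. (Note that the hypothesis $D \ge \diam$ is not actually needed for this lemma; it is stated only for uniformity with the neighboring results.) Apart from the case distinction and the small argument ruling out $par_p = \Root$ in the second case, everything is a direct reading of the disabled-guard conditions, so no further difficulty is expected.
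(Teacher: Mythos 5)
Your proof is correct. The only difference from the paper is organizational: the paper proves the contrapositive (it picks a process at which $HC_1$ or $HC_2$ is enabled, splits on which of the two guards holds, and in each case exhibits an enabled rule among $B_1$, $B_2$), whereas you argue directly, splitting on $\mind(p)<D$ versus $\mind(p)=D$ so that the case analysis mirrors the guards of the $B$-rules rather than those of the $HC$-rules. Your route forces you to handle the extra case $\mind(p)=D$, where you must establish $d_{par_p}=\mind(p)=D$; your argument for this is fine, though the detour through ruling out $par_p=\Root$ is unnecessary, since every $d$-value (including $d_\Root=0$) is at most $D$, so $d_{par_p}\leq D$ holds outright. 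The paper's contrapositive avoids that case entirely because each enabled $HC$-guard already supplies the inequality $\mind(p)<D$ needed to enable $B_1$ or $B_2$. Both arguments are equally elementary; your observation that the hypothesis $D\geq\diam$ is never used is accurate and applies to the paper's proof as well.
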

\begin{proof}
  We establish this lemma by showing its contrapositive. Let $\gamma'$
  be a non-terminal configuration of $\algoHC(D)$. There is a process
  $p$ such that $(\neg parOk(p) \wedge d_{par_p} < D) \vee (d_{par_p} >
  \mind(p))$ holds in $\gamma'$.

  Assume first that $\neg parOk(p) \wedge d_{par_p} < D$ holds in
  $\gamma'$. Then, $d_{par_p} < D$ implies that $\mind(p)<D$
  holds. Thus either $B_1$ or $B_2$ is enabled in $\gamma'$ because
  $\neg parOk(p)$ holds.

  Assume then that $d_{par_p} > \mind(p)$.  Then, as $d_{par_p} \leq
  D$ (by definition), we have $\mind(p) < D$. So, if $\neg dOk(p)$,
  then $B_1$ is enabled in $\gamma'$. Otherwise, we have $d_p =
  \mind(p)+1$. So, $d_p \neq d_{par_p}+1$ and, consequently, $B_2$ is
  enabled in $\gamma'$.

  Hence, every non-terminal configuration of $\algoHC(D)$ is a
  non-terminal configuration of $\algoB(D)$.
\end {proof}

From the two previous lemmas, we have:

\begin{corollary}\label{coro:term}
$\mathcal{TC}_{\algoU} \subseteq \bigcup_{i=\diam}^\infty
  \mathcal{TC}_{\algoB(i)} \subseteq \bigcup_{i=\diam}^\infty
  \mathcal{TC}_{\algoHC(i)}$.
\end{corollary}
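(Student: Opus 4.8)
The plan is to obtain both inclusions by transporting the two preceding lemmas through the set-theoretic unions, with a little care about the index ranges.

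The second inclusion is the immediate one. The second lemma states that $\mathcal{TC}_{\algoB(D)}\subseteq\mathcal{TC}_{\algoHC(D)}$ for every $D\geq\diam$; applying it with $D=i$ for each $i\geq\diam$ and taking the union over all such $i$ gives $\bigcup_{i=\diam}^{\infty}\mathcal{TC}_{\algoB(i)}\subseteq\bigcup_{i=\diam}^{\infty}\mathcal{TC}_{\algoHC(i)}$.

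For the first inclusion I would fix an arbitrary $\gamma\in\mathcal{TC}_{\algoU}$ and let $X$ be its largest $d$-value. The first lemma gives $\gamma\in\mathcal{TC}_{\algoB(X)}$, but since $X$ need not be at least $\diam$, this alone does not place $\gamma$ in the target union. The observation I would use is that the proof of the first lemma in fact certifies something uniform in the bound: in $\gamma$ every process satisfies $dOk(p)\wedge parOk(p)$ (so $B_1$ and $B_2$ are disabled whatever the bound is), and every non-root process $p$ satisfies $\mind(p)<X$, hence $\mind(p)<D$ for every $D\geq X$, so $B_3$ is disabled in $\algoB(D)$ as well. Therefore $\gamma\in\mathcal{TC}_{\algoB(D)}$ for every $D\geq X$, and choosing $D=\max(X,\diam)\geq\diam$ yields $\gamma\in\bigcup_{i=\diam}^{\infty}\mathcal{TC}_{\algoB(i)}$.

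The only genuine subtlety, and it is a mild one, is precisely this gap between the particular index $X$ produced by the first lemma and the lower bound $\diam$ occurring in the union; it arises because in a terminal configuration of $\algoU$ the quantity $X$ equals the eccentricity of $\Root$, which may be strictly smaller than $\diam$. Once one records that the first lemma's argument is monotone in the parameter $D$ for configurations that are already terminal for $\algoU$, and that the second lemma is uniform in $D$, both inclusions follow by routine union-chasing.
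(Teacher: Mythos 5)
Your proof is correct and follows the same route as the paper, which simply derives the corollary from the two preceding lemmas by taking unions and gives no further detail. Your observation that the first lemma's index $X$ may fall below $\diam$ --- and your repair via the monotonicity of that lemma's argument in the bound (the guards of $B_1$ and $B_2$ stay false under $dOk(p)\wedge parOk(p)$ for any bound, and $\mind(p)<X\leq D$ keeps $B_3$ disabled for every $D\geq X$, so $\gamma\in\mathcal{TC}_{\algoB(\max(X,\diam))}$) --- correctly patches a small index gap that the paper leaves implicit.
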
 

From the previous corollary, we know that it is sufficient to show that
any configuration of $\bigcup_{i=\diam}^\infty \mathcal{TC}_{\algoHC(i)}$
is legitimate to establish that any configuration of
$\mathcal{TC}_{\algoU}$, $\bigcup_{i=\diam}^\infty \mathcal{TC}_{\algoB(i)}$, 
and $\bigcup_{i=\diam}^\infty  \mathcal{TC}_{\algoHC(i)}$ are legitimate.

\begin{lemma}\label{lem:leq}
Let $D \geq \diam$.
In any configuration of $\mathcal{TC}_{\algoHC(D)}$ 
we have $d_p \leq \|p,\Root\|$ for every
process $p$.
\end{lemma}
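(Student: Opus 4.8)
The plan is to proceed by induction on the distance $\|p,\Root\|$. First I would record what being terminal means for a configuration $\gamma \in \mathcal{TC}_{\algoHC(D)}$: since neither $HC_1$ nor $HC_2$ is enabled at any non-root process $p$, we have in $\gamma$ both $\big(parOk(p) \vee d_{par_p} = D\big)$ — the negation of the guard of $HC_1$, using that $d_{par_p}\le D$ by the domain definition — and $d_{par_p} \le \mind(p)$ — the negation of the guard of $HC_2$. The base case is immediate: the root satisfies $d_\Root = 0 = \|\Root,\Root\|$.

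For the inductive step I would fix a non-root process $p$ with $\|p,\Root\| = k+1$ and assume the claim holds for every process at distance at most $k$. Since $G$ is connected, $p$ has a neighbor $q$ lying on a shortest path from $p$ to $\Root$, so $\|q,\Root\| = k$; by the induction hypothesis $d_q \le k$, hence $\mind(p) \le d_q \le k$. Combining this with the terminal condition $d_{par_p} \le \mind(p)$ yields $d_{par_p} \le k$.

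Here comes the crucial use of the hypothesis $D \ge \diam$: from $k+1 = \|p,\Root\| \le \diam \le D$ we get $d_{par_p} \le k < D$, so the disjunct $d_{par_p} = D$ coming from $HC_1$ being disabled is impossible; this forces $parOk(p)$, i.e. $d_p = d_{par_p} + 1 \le k + 1 = \|p,\Root\|$, closing the induction. The only delicate point — and the single place where $D \ge \diam$ is genuinely needed — is discarding the ``saturated'' case $d_{par_p} = D$; the rest is routine manipulation of the two negated guards.
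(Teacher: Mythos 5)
Your proof is correct and follows essentially the same route as the paper's: the paper argues by contradiction on a counterexample $p$ chosen closest to $\Root$, which is exactly the contrapositive form of your induction on $\|p,\Root\|$, and both arguments use the same two ingredients (the negated guard of $HC_2$ to get $d_{par_p}\le \mind(p)\le \|p,\Root\|-1$, then $D\ge\diam$ to rule out $d_{par_p}=D$ and force $parOk(p)$).
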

\begin{proof}
  Let $\gamma$ be any terminal configuration of $\algoHC(D)$. Assume,
  by the contradiction, that there is a process $p$ such that $d_p >
  \|p,\Root\|$ in $\gamma$.  Without loss of generality, assume $p$ is
  one of the closest processes from $\Root$ such that $d_p >
  \|p,\Root\|$ in $\gamma$. By definition, $\|p,\Root\| \leq \diam$,
  and $p$ has at least a neighbor, $q$, such that $\|q,\Root\| =
  \|p,\Root\|-1 < \diam$.  By hypothesis, $d_q \leq \|q,\Root\| =
  \|p,\Root\|-1$ in $\gamma$.  So, we have $\mind(p) \leq \|p,\Root\|-1 <
  \diam \leq D$ in $\gamma$. Then, by definition, $d_{par_p} \geq
  \mind(p)$ in $\gamma$. Now, $HC_2$ is disabled at $p$ in
  $\gamma$. So, $d_{par_p} = \mind(p)$ in $\gamma$. Consequently,
  $d_{par_p} \leq \|p,\Root\|-1 < D$ in $\gamma$. Now, $HC_1$ is
  disabled at $p$ in $\gamma$. So, $d_p = d_{par_p} + 1 \leq
  \|p,\Root\|$ in $\gamma$, a contradiction.
\end{proof}

\begin{corollary}
\label{cor:leq}
Let $\gamma$ be a configuration of $\mathcal{TC}_{\algoHC(D)}$ where $
D \geq \diam$.  Every process $p \neq \Root$ satisfies $d_p =
d_{par_p}+1$ and $d_{par_p} = \mind(p)$ in $\gamma$.
\end{corollary}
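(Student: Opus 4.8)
The corollary follows directly by combining Lemma~\ref{lem:leq} with a matching lower bound $d_p \geq \|p,\Root\|$ obtained from the disabled guards of $\algoHC(D)$. My plan is therefore first to establish that lower bound, and then to read off the two claimed equalities.

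First I would prove, by induction on $k = \|p,\Root\|$, that every process $p$ satisfies $d_p \geq \|p,\Root\|$ in any $\gamma \in \mathcal{TC}_{\algoHC(D)}$. The base case $k=0$ is the $\Root$, for which $d_{\Root} = 0$ by definition. For the inductive step, take $p \neq \Root$ with $\|p,\Root\| = k$. Every neighbor $q$ of $p$ satisfies $\|q,\Root\| \geq k-1$, so by the induction hypothesis $d_q \geq k-1$ for all $q \in \neig_p$, hence $\mind(p) \geq k-1$. Since $HC_2$ is disabled at $p$ in the terminal configuration $\gamma$, we have $d_{par_p} \leq \mind(p)$; but also $d_{par_p} \geq \mind(p)$ by the definition of $\mind$, so $d_{par_p} = \mind(p) \geq k-1$. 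Now I need $d_p = d_{par_p}+1$: since $d_{par_p} = \mind(p) \leq \|p,\Root\|-1 < \diam \leq D$ (using Lemma~\ref{lem:leq} applied to $par_p$, or more simply the bound just derived together with $k-1 \le \diam - 1$), the guard condition $d_{par_p} < D$ holds, so disabledness of $HC_1$ forces $parOk(p)$, i.e. $d_p = d_{par_p}+1 \geq k$. This closes the induction.

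Combining this with Lemma~\ref{lem:leq} gives $d_p = \|p,\Root\|$ for every $p$. For a non-root process $p$, the argument above already produced $d_{par_p} = \mind(p)$ and $d_p = d_{par_p}+1$ in $\gamma$, which are exactly the two asserted equalities; alternatively, once $d_p = \|p,\Root\|$ and $d_{par_p} = \|par_p,\Root\| \ge \|p,\Root\|-1$ are known, disabledness of $HC_2$ and $HC_1$ (the latter applicable since $d_{par_p} < D$) pin down $d_{par_p} = \mind(p)$ and $d_p = d_{par_p}+1$.

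The only mildly delicate point is making sure the guard $d_{par_p} < D$ is actually available when invoking disabledness of $HC_1$; this is why one wants the bound $d_{par_p} \leq \|p,\Root\|-1 < \diam \leq D$ in hand before concluding $parOk(p)$, and it is the place where the hypothesis $D \geq \diam$ is genuinely used. Everything else is a routine unwinding of the two guards.
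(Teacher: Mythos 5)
Your derivation of the two asserted equalities is, at its core, exactly the paper's proof: $HC_2$ being disabled at $p$ (together with $par_p \in \neig_p$, so $d_{par_p} \geq \mind(p)$ by definition) gives $d_{par_p} = \mind(p)$; Lemma~\ref{lem:leq} applied to a neighbor of $p$ at distance $\|p,\Root\|-1 \leq \diam-1$ gives $\mind(p) < D$; and then disabledness of $HC_1$ with $d_{par_p} < D$ forces $d_p = d_{par_p}+1$. None of these steps depends on the induction you wrap them in, so the corollary itself is correctly established. Two caveats about the surrounding scaffolding, though. First, the lower bound $d_p \geq \|p,\Root\|$ is not part of this corollary; the paper deliberately defers it to Lemma~\ref{lem:eq}, where it is derived \emph{from} the corollary via the spanning-tree argument of Theorem~\ref{bfs:5}, so your induction duplicates later work. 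Second, that induction is not well-founded as written: inducting on the exact distance $k=\|p,\Root\|$, the hypothesis only covers processes at distance $<k$, yet you claim ``$d_q \geq k-1$ for all $q \in \neig_p$,'' which for neighbors at distance $k$ or $k+1$ is not yet available. The standard repair is to prove, by induction on $k$, the uniform statement that every process at distance at least $k$ from $\Root$ has $d$-value at least $k$. Finally, a small imprecision: ``Lemma~\ref{lem:leq} applied to $par_p$'' only yields $d_{par_p} \leq \|par_p,\Root\| \leq \diam \leq D$, i.e. $\leq D$ rather than the needed strict inequality; the bound you actually want is $\mind(p) \leq \|p,\Root\|-1 < D$, obtained by applying Lemma~\ref{lem:leq} to a neighbor of $p$ closest to $\Root$, as the paper does.
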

\begin{proof}
  Let $p\neq \Root$ be process. 
By definition, $d_{par_p} \geq \mind(p)$ in $\gamma$. 
$HC_2$ is disabled at $p$ in $\gamma$. So,
  $d_{par_p} = \mind(p)$ in $\gamma$.  $p$ has a neighbor
  such that $\|q,\Root\| \leq \diam -1$.  So, we have $d_q \leq \diam-1$ in
  $\gamma$ (by Lemma \ref{lem:leq}) and, consequently, $\mind(p) < D$
  in $\gamma$. So, $d_{par_p} < D$ in $\gamma$. As $HC_1$ is
  disabled at $p$ in $\gamma$, we have also $d_p = d_{par_p}+1$.
\end{proof}

\begin{lemma}\label{lem:eq}
Let $D \geq \diam$.
Let $\gamma$ be a configuration of $\mathcal{TC}_{\algoHC(D)}$. 
$d_p = \|p,\Root\|$ holds for every process $p$, in $\gamma$.
\end{lemma}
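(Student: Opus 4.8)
We want to show that in any terminal configuration $\gamma$ of $\algoHC(D)$ with $D \geq \diam$, we have $d_p = \|p,\Root\|$ for every process $p$. By Lemma~\ref{lem:leq} we already have the upper bound $d_p \leq \|p,\Root\|$, so the only thing left is the matching lower bound $d_p \geq \|p,\Root\|$. I would prove this by induction on $k = \|p,\Root\|$, using the structural information supplied by Corollary~\ref{cor:leq}, namely that every non-root process $p$ satisfies $d_p = d_{par_p}+1$ and $d_{par_p} = \mind(p)$ in $\gamma$.

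**Base case.** For $k=0$ the only process at distance $0$ is $\Root$ itself, and $d_\Root = 0$ by definition of the algorithm, so $d_\Root = \|{\Root},\Root\|$.

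**Inductive step.** Suppose that every process $q$ with $\|q,\Root\| \leq k-1$ satisfies $d_q \geq \|q,\Root\|$ (combined with Lemma~\ref{lem:leq}, in fact $d_q = \|q,\Root\|$ for all such $q$). Let $p$ be a process with $\|p,\Root\| = k \geq 1$. The key point is that every neighbor $q$ of $p$ satisfies $\|q,\Root\| \geq k-1$: otherwise the edge $\{p,q\}$ would give a path from $\Root$ to $p$ of length $\|q,\Root\|+1 < k$, contradicting $\|p,\Root\|=k$. Hence, by the induction hypothesis and Lemma~\ref{lem:leq}, every neighbor $q$ of $p$ has $d_q \geq k-1$, so $\mind(p) \geq k-1$. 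By Corollary~\ref{cor:leq}, $d_p = d_{par_p}+1 = \mind(p)+1 \geq k = \|p,\Root\|$. Together with Lemma~\ref{lem:leq}, this yields $d_p = \|p,\Root\|$, completing the induction.

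**Main obstacle.** There is no real obstacle here; the heavy lifting has already been done in Lemma~\ref{lem:leq} and Corollary~\ref{cor:leq}. The only subtlety worth stating carefully is why every neighbor of $p$ lies at distance at least $k-1$ from $\Root$ — this is just the triangle-type inequality $\|p,\Root\| \leq \|q,\Root\| + 1$ for adjacent $p,q$ — and making sure the induction hypothesis is applied to the right quantity ($\mind(p)$, which picks out a neighbor, rather than to $p$ directly). Once those two points are in place the proof is a short induction, and it immediately combines with the earlier lemmas to conclude.
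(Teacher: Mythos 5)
Your overall decomposition is right: given Lemma~\ref{lem:leq} and Corollary~\ref{cor:leq}, all that remains is the lower bound $d_p \geq \|p,\Root\|$. But your inductive step does not go through. To conclude $\mind(p) \geq k-1$ you need $d_q \geq k-1$ for \emph{every} neighbor $q$ of $p$, and the neighbors of $p$ sit at distance $k-1$, $k$, or $k+1$ from $\Root$. Your induction hypothesis only lower-bounds $d_q$ for processes at distance at most $k-1$, and Lemma~\ref{lem:leq} gives an \emph{upper} bound on $d_q$, not a lower one. So for a neighbor $q$ with $\|q,\Root\| \in \{k,k+1\}$ nothing you have established forbids $d_q < k-1$, and the minimum in $\mind(p)$ could a priori be realized by such a neighbor. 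The assertion ``every neighbor $q$ of $p$ has $d_q \geq k-1$'' is, for those neighbors, essentially as hard as the lemma itself, so the induction is circular at exactly this point.

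The fix is to run the induction on the \emph{value} $d_p$ rather than on the distance $\|p,\Root\|$ (equivalently, to follow parent pointers, which is what the paper does). By Corollary~\ref{cor:leq}, $d_p = d_{par_p}+1$ for every non-root $p$, and $d_q = 0$ only for $q=\Root$ since non-root $d$-values lie in $[1..D]$. Hence the parent chain from $p$ strictly decreases $d$ by one at each hop, cannot cycle, and reaches $\Root$ after exactly $d_p$ hops; this exhibits a path of length $d_p$ from $p$ to $\Root$, whence $\|p,\Root\| \leq d_p$. Combined with Lemma~\ref{lem:leq} this gives the equality. (The paper phrases this as: the $par$-pointers form a spanning tree as in Theorem~\ref{bfs:5}, and $d_p$ is the length of the tree path from $\Root$ to $p$.)
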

\begin{proof}
  We have $d_\Root = 0$ (by definition) and $d_q = d_{par_q}+1$ for
  every process $q \neq \Root$ in $\gamma$ (corollary \ref{cor:leq}).
  So, similarly to the proof of Theorem \ref{bfs:5}, we can establish
  that $T_\gamma$ is spanning tree and for every process $p$, $d_p$ is
  the length of the path in $T_\gamma$ from $\Root$ to $p$. So, we
  have $d_p \geq \|p,\Root\|$ in $\gamma$.  Now, $d_p \leq
  \|p,\Root\|$ in $\gamma$, by Lemma~\ref{lem:leq}. So, we conclude
  that $d_p = \|p,\Root\|$ in $\gamma$, for every process $p$.

\end{proof}

From Corollaries~\ref{coro:term},\ref{cor:leq}, and
Lemma~\ref{lem:eq}, we can deduce the following theorem:

\begin{theorem}\label{pc}
  Let $D \geq \diam$. Every terminal configuration of 
  $\algoU$, $\algoB(D)$, or $\algoHC(D)$
is $\gamma$ is legitimate.
\end{theorem}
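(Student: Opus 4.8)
The plan is to reduce all three cases to terminal configurations of $\algoHC(\cdot)$ via Corollary~\ref{coro:term}, and then simply read off the two defining properties of a legitimate configuration from Lemma~\ref{lem:eq} and Corollary~\ref{cor:leq}. So first I would let $\gamma$ be a terminal configuration of one of the three algorithms. If $\gamma$ is terminal for $\algoHC(D)$, then it trivially lies in $\bigcup_{i=\diam}^\infty \mathcal{TC}_{\algoHC(i)}$ since $D \geq \diam$. If $\gamma$ is terminal for $\algoB(D)$, then $\gamma \in \bigcup_{i=\diam}^\infty \mathcal{TC}_{\algoB(i)} \subseteq \bigcup_{i=\diam}^\infty \mathcal{TC}_{\algoHC(i)}$, again using $D \geq \diam$ together with Corollary~\ref{coro:term}. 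And if $\gamma$ is terminal for $\algoU$, then $\gamma \in \mathcal{TC}_{\algoU} \subseteq \bigcup_{i=\diam}^\infty \mathcal{TC}_{\algoHC(i)}$ by Corollary~\ref{coro:term}. In every case we obtain some integer $D' \geq \diam$ with $\gamma \in \mathcal{TC}_{\algoHC(D')}$.

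Next I would apply the two earlier results with parameter $D'$. Corollary~\ref{cor:leq} (with $D = D'$) gives $d_p = d_{par_p}+1$ for every process $p \neq \Root$ in $\gamma$, and Lemma~\ref{lem:eq} (with $D = D'$) gives $d_p = \|p,\Root\|$ for every process $p$ in $\gamma$; in particular $d_{\Root} = 0 = \|\Root,\Root\|$. These are precisely the two conditions listed in Definition~\ref{def:legi}, so $\gamma$ is legitimate, which is what we had to prove.

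There is no real obstacle here: all the substantive work has already been carried out in Lemmas~\ref{lem:leq}~and~\ref{lem:eq} and in Corollaries~\ref{coro:term}~and~\ref{cor:leq}, so the theorem amounts to a short bookkeeping step. The only point demanding a little care is the routing through $\algoHC$: one must note that the hypothesis $D \geq \diam$ is exactly what guarantees that the terminal configurations of both $\algoB(D)$ and $\algoHC(D)$ fall inside the union $\bigcup_{i=\diam}^\infty \mathcal{TC}_{\algoHC(i)}$, which is the range over which Lemma~\ref{lem:eq} and Corollary~\ref{cor:leq} have been established.
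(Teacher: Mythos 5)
Your proposal is correct and follows exactly the same route as the paper, which derives the theorem directly from Corollary~\ref{coro:term}, Corollary~\ref{cor:leq}, and Lemma~\ref{lem:eq}; you have merely made the bookkeeping explicit. Nothing is missing.
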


\subsubsection{Legitimate Configurations are Terminal}
\begin{theorem}\label{legi}
  Let $D \geq \diam$.  Every legitimate configuration is a terminal
  configuration of $\algoU$, $\algoB(D)$, and $\algoHC(D)$,
  respectively.
\end{theorem}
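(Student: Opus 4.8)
The plan is to show that in any legitimate configuration $\gamma$, every rule of each of the three algorithms is disabled at every process, so $\gamma$ is terminal. The key fact to extract from Definition~\ref{def:legi} is that for every process $p$ we have $d_p = \|p,\Root\|$, which immediately gives $\mind(p) = \min_{q \in \neig_p} \|q,\Root\| = \|p,\Root\| - 1$ for every non-root $p$ (since $\|p,\Root\|$ is the graph distance, $p$ has a neighbor at distance $\|p,\Root\|-1$ and no neighbor closer than that). Hence $dOk(p)$ holds, because $d_p = \|p,\Root\| = \mind(p)+1$. Also, by Definition~\ref{def:legi}, $d_p = d_{par_p}+1$, so $parOk(p)$ holds. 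These two observations, together with the bound $\|p,\Root\| \leq \diam \leq D$, will knock out all the guards.

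First I would handle $\algoU$: since $dOk(p)$ holds everywhere, $U_1$ is disabled; since $parOk(p)$ holds everywhere, $U_2$ is disabled; and the root has no rule. So $\gamma$ is terminal for $\algoU$. Next, $\algoB(D)$: $B_1$ and $B_2$ are disabled because $\neg dOk(p)$ and $\neg parOk(p)$ are both false; for $B_3$, note $\mind(p) = \|p,\Root\|-1 \leq \diam - 1 < D$, so the guard $\mind(p) = D$ fails. Finally, $\algoHC(D)$: $HC_1$ requires $\neg parOk(p)$, which is false; $HC_2$ requires $d_{par_p} > \mind(p)$, but $d_{par_p} = d_p - 1 = \mind(p)$, so this guard is false too. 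In all three cases no process is enabled, so $\gamma$ is terminal.

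The step that requires the most care is the claim $\mind(p) = \|p,\Root\|-1$ for non-root $p$, which relies on $p$ actually having a neighbor at distance $\|p,\Root\|-1$ in $G$; this is just the definition of graph distance in a connected graph together with $p \neq \Root$, so it is not really an obstacle, merely a point to state cleanly. I would write it as a short preliminary paragraph inside the proof, then dispatch the three algorithms in three short paragraphs as above. The proof is almost entirely a matter of substituting $d_p = \|p,\Root\|$ into each guard and observing it evaluates to false, so there is no genuine difficulty here beyond bookkeeping the rule labels of each algorithm.

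\begin{proof}
Let $\gamma$ be a legitimate configuration and let $p$ be any non-root process. By Definition~\ref{def:legi}, $d_q = \|q,\Root\|$ for every process $q$. Since $G$ is connected and $p \neq \Root$, the vertex $p$ has a neighbor at distance $\|p,\Root\|-1$ from $\Root$, and no neighbor is closer to $\Root$ than that; hence $\mind(p) = \|p,\Root\|-1$. Therefore $d_p = \|p,\Root\| = \mind(p)+1$, so $dOk(p)$ holds in $\gamma$. Moreover $d_p = d_{par_p}+1$ by Definition~\ref{def:legi}, so $parOk(p)$ holds in $\gamma$. Finally $\mind(p) = \|p,\Root\|-1 \leq \diam - 1 < D$, since $D \geq \diam$.

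For $\algoU$: $U_1$ is disabled at $p$ since $dOk(p)$ holds, and $U_2$ is disabled at $p$ since $parOk(p)$ holds. The root has no rule. Hence $\gamma$ is a terminal configuration of $\algoU$.

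For $\algoB(D)$: $B_1$ and $B_2$ are disabled at $p$ since $dOk(p)$ and $parOk(p)$ hold; $B_3$ is disabled at $p$ since $\mind(p) < D$. Hence $\gamma$ is a terminal configuration of $\algoB(D)$.

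For $\algoHC(D)$: $HC_1$ is disabled at $p$ since $parOk(p)$ holds; and since $d_{par_p} = d_p - 1 = \mind(p)$, the guard of $HC_2$ fails at $p$. Hence $\gamma$ is a terminal configuration of $\algoHC(D)$.
\end{proof}
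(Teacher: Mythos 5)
Your proof is correct and follows essentially the same route as the paper's: substitute $d_p = \|p,\Root\|$ into each guard, using $\mind(p) = \|p,\Root\|-1 < D$ and $parOk(p)$ to show every rule of every algorithm is disabled. The only detail the paper states that you leave implicit is that $d_p = \|p,\Root\| \leq \diam \leq D$ for all $p$, so a legitimate configuration is indeed a valid configuration of the bounded algorithms $\algoB(D)$ and $\algoHC(D)$, whose $d$-variables range over $[1..D]$.
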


\begin{proof} 
  Let $\gamma$ be a legitimate configuration. First, for every process
  $p$, $d_p = \|p,\Root\| \leq \diam$. So, $\gamma$ is a possible
  configuration of $\algoU$, $\algoB(D)$, and $\algoHC(D)$,
  respectively.

  Let $p$ be a non-root process.  We have $\mind(p) = \min \{d_q\ |\ q
  \in \neig_p\} = \min \{\|q,\Root\|\ |\ q \in \neig_p\} = \|p,\Root\| -
  1 < \diam \leq D$ in $\gamma$.  So, $B_3$ is disabled at every
  non-root process in $\gamma$.

  Moreover, $\mind(p)+1 = \|p,\Root\| = d_p$ in $\gamma$. So, $U_1$
  and $B_1$ are disabled at every non-root process in $\gamma$.

  By definition, $parOk(p)$ holds in $\gamma$. So, $U_2$, $B_2$, and
  $HC_1$ are disabled at non-root process in $\gamma$.

  Finally, $parOk(p)$ implies that $d_{par_p} = d_p-1 = \|p,\Root\| -
  1 =\mind(p)$ in $\gamma$. Hence, $HC_2$ is disabled at every
  non-root process in $\gamma$.
\end{proof}

\subsection{Termination}

In this subsection, we will establish that, under a distributed unfair
daemon, all executions of $\algoU$, $\algoB(D)$, and $\algoHC(D)$ are
finite. 

The lemma given below establish that we only need to prove that the
number of $d$-variable updates is finite in any execution $e$ of $\algoU$,
$\algoB(D)$, or $\algoHC(D)$ to establish that $e$ is
finite.

\begin{lemma}\label{lem:par}
  Let $e = \gamma_0, \ldots \gamma_i, \gamma_{i+1}, \ldots$ be any
  execution of $\algoU$, $\algoB(D)$, or $\algoHC(D)$.
  If for every process $p$, $d_p$ is modified only a finite number
  of time along $e$, then $e$ is finite.
\end{lemma}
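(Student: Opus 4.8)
The plan is to argue by contradiction: suppose $e$ is infinite while every $d_p$ is modified only finitely often. Then there exists an index $j$ such that no $d$-variable is modified in any step $\gamma_i \mapsto \gamma_{i+1}$ with $i \geq j$; from $\gamma_j$ onward the only actions ever executed are "pure $par$-updates'', i.e.\ rule $U_2$ in $\algoU$, rule $B_2$ in $\algoB(D)$, or rule $HC_1$ in $\algoHC(D)$ \emph{in the special case where it does not change $d_p$} — but in fact $HC_1$ always writes $d_p \gets d_{par_p}+1$, so if $HC_1$ fires after $\gamma_j$ it must already be the case that $d_p = d_{par_p}+1$, i.e.\ $HC_1$ is disabled; hence after $\gamma_j$ only $U_2$ or $B_2$ can be executed. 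So it suffices to show that, with all $d$-variables frozen, only finitely many $par$-only steps are possible.

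The key observation is that, once the $d$-variables are constant, the quantity $\mind(p)$ and the set of candidate best parents $\{q \in \neig_p : d_q = \mind(p)\}$ are constant for every $p$. When $p$ executes $U_2$ (resp.\ $B_2$) it sets $par_p \gets \Best(p)$, so immediately afterward $d_{par_p} = \mind(p)$, hence $parOk(p) \equiv (d_p = d_{par_p}+1)$ becomes equivalent to $dOk(p) \equiv (d_p = \mind(p)+1)$, which is \emph{already true} since $dOk(p)$ was a precondition of $U_2$ (and of $B_2$), and $dOk(p)$ cannot change because no $d$-variable moves. Therefore after $p$ performs one $par$-only action, $\neg parOk(p)$ can never become true again (its truth value depends only on $d_p$ and $d_{par_p}$, and $d_{par_p} = \mind(p)$ is now locked in, since $d$-values are frozen), so the guard of $U_2$ / $B_2$ at $p$ stays false forever. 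Thus each process executes at most one $par$-only action after $\gamma_j$, giving at most $n$ such steps in total; after that the configuration is terminal, contradicting that $e$ is infinite.

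I would organize the write-up as: (i) reduce to the tail of $e$ where no $d$-variable changes; (ii) note $HC_1$ is necessarily disabled throughout this tail, so only $U_2$/$B_2$ remain; (iii) prove the one-shot property of $U_2$/$B_2$ per process via the $\mind(p)$-is-constant argument above; (iv) conclude a bound of $n$ remaining steps and derive the contradiction. The main obstacle — and the only subtle point — is step (ii)/(iii): one must be careful that a neighbor's $par$-update does not indirectly re-enable $U_2$/$B_2$ at $p$; this is handled precisely because the guards $dOk$, $parOk$ depend only on $d$-values (of $p$ and its neighbors) once $par_p$ has been pinned to a best neighbor, and all $d$-values are frozen in the tail, so no neighbor action can affect them. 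Everything else is bookkeeping over the three algorithms' rule sets.
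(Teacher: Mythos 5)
Your overall route is the same as the paper's: pass to the suffix of $e$ in which no $d$-variable changes, then show each process can take at most one step in that suffix, bounding the tail by $n$ steps. For $\algoU$ and $\algoB(D)$ your argument is correct and matches the paper's.

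However, your treatment of $\algoHC(D)$ has a concrete gap. You identify $HC_1$ as the ``pure $par$-update'' rule of $\algoHC(D)$ and, after correctly observing that $HC_1$ cannot fire in the $d$-frozen tail (its guard requires $\neg parOk(p)$, so firing it necessarily changes $d_p$), you conclude that only $U_2$ or $B_2$ can be executed there. But $HC_1$ never writes $par_p$ at all; the rule of $\algoHC(D)$ that writes $par_p$ is $HC_2$, whose action is $\update(p)$. Its guard $d_{par_p} > \mind(p)$ does not require $\neg dOk(p)$, so $HC_2$ \emph{can} fire in the tail without modifying $d_p$ (precisely when $d_p = \mind(p)+1$ already holds, the assignment $d_p \gets \mind(p)+1$ is a no-op and only $par_p$ changes). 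As written, your case analysis leaves these $HC_2$ steps entirely unaccounted for, so for $\algoHC(D)$ you have not bounded the tail. The gap is easily repaired: your own one-shot argument transfers verbatim, since after $p$ executes $\update(p)$ we have $par_p = \Best(p)$, hence $d_{par_p} = \mind(p)$, so the guard $d_{par_p} > \mind(p)$ is false and remains false forever because all $d$-values (in particular those of $p$'s neighbors) are frozen in the tail. This is exactly how the paper handles it, listing $U_2$, $B_2$, \emph{and} $HC_2$ uniformly as the $par$-modifying rules.
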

\begin{proof}
  Assume that every process $p$, $d_p$ is modified only a finite
  number of time along $e$. Then, there exists $i \geq 0$ such that no
  $d$-variable is modified in the suffix
  $e'=\gamma_i\gamma_{i+1}\ldots$ of $e$. By definition of the three
  algorithms, only $par$-variables can be modified along $e'$.  So the
  rules $U_1$ for $\algoU$, $B_1$ and $B_3$ for $\algoB(D)$, and
  $HC_1$ for $\algoHC(D)$ are not executed along $e'$.  Now, by
  definition of the algorithms, in $e'$, we have:
  \begin{itemize}
  \item Once the rule modifying $par_p$ (Rule $U_2$, $B_2$, or $HC_2$)
    is disabled, it remains disabled forever by $p$, because the
    values of $d$-variables are constant (in particular, those of $p$
    and its neighbors).
  \item The rule modifying $par_p$ (Rule $U_2$, $B_2$, or $HC_2$)
    becomes disabled immediately after $p$'s execution.
\end{itemize}
Consequently, each process takes at most one step along $e'$; we
conclude that the execution $e$ is finite.
\end{proof}

\begin{notation}
  For every configuration $\gamma$, for any integer $k \geq 0$, we
  denote by $Set\_d_k(\gamma)$ the set of processes $p$ such that $d_p
  = k$ in $\gamma$.
\end {notation}

\begin{remark}\label{rem:R}
  In every configuration $\gamma$, $Set\_d_0(\gamma) = \{R\}$, and
  $Set\_d_\ell(\gamma) \bigcap Set\_d_{k}(\gamma) = \emptyset$ for
  every $0\leq \ell <k$.  $V = \bigcup_{i=0}^{\infty}
  Set\_d_{i}(\gamma)$.
\end{remark}

The following lemma establishes that for every execution $e$ of
$\algoU$, $\algoB(D)$, or $\algoHC(D)$, there is a upper bound $kb$
on the values taken in $e$ by the $d$-variables of all processes.

\begin{lemma}\label{lem:2}
  Let $e = \gamma_0, \ldots \gamma_i, \gamma_{i+1}, \ldots$ be any
  execution of $\algoU$, $\algoB(D)$, or $\algoHC(D)$.  $\exists kb
  \geq 0\ |\ \forall j\geq 0, \forall \ell > kb$ we have
  $Set\_d_\ell(\gamma_j) = \emptyset$.
\end{lemma}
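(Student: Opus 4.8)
The plan is to treat the two bounded algorithms and $\algoU$ separately. For $\algoB(D)$ and $\algoHC(D)$ there is nothing to do: by definition the domain of every $d$-variable is $[1..D]$, so $d_p \leq D$ in every configuration and $kb = D$ works. All the content is in $\algoU$, where the $d$-variables are unbounded positive integers, so no execution-independent bound can exist (the initial configuration $\gamma_0$ is arbitrary). The key observation I would exploit is that the only rule that writes $d_p$ is $U_1$, which sets $d_p \gets \mind(p)+1$; thus $d$-values can only be "created" by copying-and-incrementing a neighbour's value, which lets small values propagate outward from the root but prevents large values from being generated out of nowhere. Hence the bound should be extractable from $\gamma_0$ together with the layered structure of $G$ around $\Root$.

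For $\algoU$ I would argue by induction on the distance to the root. For $i \geq 0$ let $M_i \in \mathbb{N} \cup \{+\infty\}$ be the supremum of $d_p$ over all configurations $\gamma_j$ of $e$ and all processes $p$ with $\|p,\Root\| \leq i$; the claim is that $M_i$ is finite for every $i$. For $i=0$, Remark~\ref{rem:R} and the code of the root give $Set\_d_0(\gamma_j) = \{\Root\}$ with $d_\Root = 0$ in every $\gamma_j$, so $M_0 = 0$. For the inductive step, assume $M_{i-1} < \infty$ and set $A_i = \max\{d_p \text{ in } \gamma_0 : \|p,\Root\| \leq i\}$, which is finite since $V$ is finite. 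Fix $j \geq 0$ and a process $p$ with $\|p,\Root\| = i$ (processes closer to $\Root$ are already covered by $M_{i-1}$). If $p$ executes no $U_1$ step $\gamma_s \mapsto \gamma_{s+1}$ with $s < j$, then, since only $U_1$ modifies $d_p$, the value of $d_p$ in $\gamma_j$ is still its initial value, hence at most $A_i$. Otherwise let $\gamma_t \mapsto \gamma_{t+1}$ (with $t < j$) be the last such step; then $d_p$ is constant from $\gamma_{t+1}$ to $\gamma_j$, and in $\gamma_{t+1}$ it equals $\mind(p)+1$ evaluated in $\gamma_t$. As $\|p,\Root\| = i \geq 1$, process $p$ has a neighbour $q$ with $\|q,\Root\| = i-1$ on a shortest path to $\Root$, so $\mind(p) \leq d_q \leq M_{i-1}$ in $\gamma_t$, giving $d_p \leq M_{i-1}+1$ in $\gamma_j$. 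In all cases $d_p \leq \max(A_i, M_{i-1}+1)$ in $\gamma_j$, whence $M_i \leq \max(A_i, M_{i-1}+1) < \infty$.

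To conclude, take $kb = M_\diam$ (which is $\geq 0$, as $d_\Root = 0$ is among the values considered). Since $\|p,\Root\| \leq \diam$ for every process $p$ by definition of the diameter, for every $j \geq 0$ and every $\ell > kb$ we have $d_p \leq M_\diam = kb < \ell$ in $\gamma_j$, so no process lies in $Set\_d_\ell(\gamma_j)$; as $V = \bigcup_{i \geq 0} Set\_d_i(\gamma_j)$ by Remark~\ref{rem:R}, this gives $Set\_d_\ell(\gamma_j) = \emptyset$. For $\algoB(D)$ and $\algoHC(D)$ the same conclusion follows from $kb = D$.

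The delicate point is getting the induction hypothesis right for $\algoU$: one has to realize that a uniform bound is impossible and that the quantity which actually closes under induction is "the largest $d$-value ever attained inside the ball of radius $i$ around $\Root$, over the whole execution $e$" — not the value in a single configuration. It closes precisely because $U_1$ can only raise $d_p$ to one more than some current neighbour value, and a process at distance $i \geq 1$ always has a neighbour at distance $i-1$. Everything else, including the two bounded algorithms, is routine.
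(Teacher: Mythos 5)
Your proof is correct. It rests on exactly the same one-edge propagation fact as the paper's proof --- namely that if a neighbour $q$ of $p$ satisfies $d_q \leq x$ throughout $e$, then $d_p \leq \max\{x+1,y\}$ throughout $e$, where $y$ is the initial value of $d_p$, because $U_1$ is the only rule writing $d_p$ and it sets $d_p$ to $\mind(p)+1$ --- but it packages that fact differently. The paper argues by contradiction: it defines the set $\UNBOUNDED$ of processes with no upper bound on their $d$-value in $e$, observes that $\Root \notin \UNBOUNDED$, and uses connectivity to find an edge between $\UNBOUNDED$ and its complement, which the propagation fact forbids. You instead run a direct induction on the distance to $\Root$, defining $M_i$ as the supremum of $d$-values ever attained in the ball of radius $i$ and showing $M_i \leq \max(A_i, M_{i-1}+1)$. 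The two arguments are interchangeable here, but yours is constructive and yields an explicit quantitative bound, roughly $kb \leq \max_{p}(d_p \text{ in } \gamma_0) + \diam$, whereas the paper's contradiction argument only establishes existence of $kb$; on the other hand the paper's version needs only connectivity and no bookkeeping of initial values per layer. Your handling of the bounded algorithms ($kb=D$) coincides with the paper's.
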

\begin{proof}
  By definition, the lemma is established by letting $kb = D$ if $e$
  is an execution of $\algoB(D)$ or $\algoHC(D)$.

  Consider now the case where $e$ is an execution of $\algoU$.
  $\UNBOUNDED$ is the set of processes that have no upper bound on their
  distance value in $e$, formally, $\UNBOUNDED$ $= \{p \in V$ $|$
  $\forall k \geq 0, \exists j \geq 0, \exists \ell > k\ |\ p \in
  Set\_d_\ell(\gamma_j)\}$

  Assume, by the contradiction, that $\UNBOUNDED$ is not empty.  The
  set $V \setminus \UNBOUNDED$ is not empty, by Remark~\ref{rem:R}. As
  the network is connected, there are two neighboring processes $p$
  and $q$ such that $p \in \UNBOUNDED$ and $q \in V \setminus
  \UNBOUNDED$.  By definition, $\exists x \geq 0$ such that $d_q \leq
  x$ in all configurations of $e$. Consequently, $\mind(p) \leq x$ in
  all configurations of $e$. So, we have $d_p \leq \max \{x+1,y\}$ in
  all configurations of $e$, where $y$ be the initial value of $d_p$
  (according to the rule $U_1$).  Consequently, $p \notin \UNBOUNDED$,
  a contradiction.

  Hence, $\UNBOUNDED$ is empty. Let $ub_p$ the upper bound on the
  distance values taken by the process $p$ in $e$.  The lemma holds
  for $kb = \max_{p \in V}\{ub_p\}$.
\end{proof}

Below, we show that, for every $k \geq 0$, for every execution $e$ of
$\algoU$, $\algoB(D)$, or $\algoHC(D)$, if there is a suffix $e'$
of $e$ where every $d$-variable whose value is less than $k$ is
constant, then there is a suffix $e''$ of $e'$ where no process
switches its $d$-variable from any non-$k$ value to $k$.

\begin{lemma}\label{lem:1}
  Let $e = \gamma_0, \ldots \gamma_i, \gamma_{i+1}, \ldots$ be an
  execution of $\algoU$, $\algoB(D)$, or $\algoHC(D)$. Let $k > 0$.
  If $\exists i_{k}\ |\ \forall j \geq i_k, \forall \ell \in [0..k-1]$
  we have $Set\_d_\ell(\gamma_j) = Set\_d_\ell(\gamma_{i_{k}})$, then
  $\exists \ell \geq i_k\ |\ \forall j\geq \ell$ we have
  $Set\_d_k(\gamma_{j+1}) \subseteq Set\_d_k(\gamma_{j})$.
\end{lemma}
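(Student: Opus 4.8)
The statement says: once all $d$-values below $k$ have frozen (from step $i_k$ onwards), then after some further point $\ell \ge i_k$, no process ever acquires the value $k$ from a different value; i.e., $Set\_d_k$ becomes non-increasing. The natural approach is to look at which processes can possibly enter level $k$, and show that this ``source population'' stabilizes. A process $p$ sets $d_p \gets k$ only via an $\update(p)$ move (rules $U_1$, $B_1$, $HC_2$) with $\mind(p) = k-1$, or — in the $\algoHC(D)$ case only — via $HC_1$ with $d_{par_p} = k-1$. In either case, at the moment $p$ jumps to $k$, some neighbor $q$ of $p$ has $d_q = k-1$, i.e. $q \in Set\_d_{k-1}$. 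By hypothesis, for $j \ge i_k$ the set $Set\_d_{k-1}(\gamma_j)$ is a fixed set $S_{k-1}$; so only processes with a neighbor in $S_{k-1}$ can ever enter level $k$ after $i_k$. Call this fixed set $C = \{p : \neig_p \cap S_{k-1} \neq \emptyset\}$ (for $k=1$, $S_0 = \{\Root\}$ by Remark~\ref{rem:R}).

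The key observation is then: for $p \in C$, once $p$ has value $k$ it can never leave level $k$ after $i_k$. Indeed, $p$ leaves $k$ only by an $\update(p)$ that lowers $d_p$ (to $\mind(p)+1 \le k$) — but $p \in C$ has a neighbor in $S_{k-1}$, whose value is the frozen constant $k-1$, so $\mind(p) \le k-1$; if $\mind(p) = k-1$ then $\update(p)$ leaves $d_p = k$ unchanged, and if $\mind(p) < k-1$ then some neighbor holds a value $< k-1$ that is also frozen (by hypothesis on levels $[0..k-1]$), which forces — wait, this is where care is needed: a neighbor of $p$ could have a value below $k-1$, and those are frozen too, so such a move would be a genuine decrease. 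But a value $< k-1$ held by a neighbor is frozen for all of $e' = \gamma_{i_k}\ldots$, hence present already at $\gamma_{i_k}$; so the ``danger'' is only from neighbors whose value is $< k-1$ at time $i_k$. Thus the right source set to track is not $C$ but $C' = \{p : \mind(p) \ge k-1 \text{ throughout } e'\} = \{p : \text{every neighbor of } p \text{ has frozen value} \ge k-1\}$, intersected with processes that can reach $k$. On $C'$, any $\update$ keeps $d_p \ge k$, and a move into $k$ is permanent.

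Putting it together: after $i_k$, the only processes that can ever hold value $k$ and later lose it are those in $V \setminus C'$, i.e. those with some neighbor whose frozen value is $\le k-2$. For such a process $p$, once it drops below $k$ it cannot climb back to $k$: climbing to $k$ requires $\mind(p) = k-1$ (via $\update$) or $d_{par_p} = k-1$ (via $HC_1$), and in both cases a neighbor momentarily holds value $k-1$; but $p$'s neighbors all have frozen values — and if $p \in V \setminus C'$ with the ``bad'' neighbor $r$ having $d_r \le k-2$, that doesn't prevent another neighbor from holding $k-1$... Let me re-examine: the cleanest route is to argue by a potential/finiteness argument. For each $p$ with a neighbor in $S_{k-1}$, the number of times $p$ enters level $k$ after $i_k$ is finite — because each entry is via $\update(p)$ or $HC_1$, each such rule execution by $p$ is immediately followed by the rule becoming disabled (as the $d$-values it reads among neighbors with values $\le k-1$ are frozen, only the $\ge k$ ones or $p$'s own can change), and the guard re-enabling requires a change in a neighbor's $d$-value that is $\ge k$. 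One shows that the total number of $d$-updates landing in level $k$, summed over all processes, is finite, by induction bundled with Lemma~\ref{lem:par} and Lemma~\ref{lem:2}; the bound $kb$ from Lemma~\ref{lem:2} caps how high values can be, so there are finitely many ``higher'' levels that could feed back.

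\textbf{Main obstacle.} The delicate point is the feedback from \emph{above}: a process at level $k$ can be pushed off $k$ when a neighbor at a higher (not-yet-frozen) level changes, and then later return to $k$. One must show this oscillation terminates. The clean way is a well-founded argument: define, for the suffix $e'$ starting at $\gamma_{i_k}$, a measure that strictly decreases at every step in which some process enters level $k$ — for instance, track the multiset of $d$-values strictly greater than $k$ (which live in the bounded range $(k, kb]$ by Lemma~\ref{lem:2}), ordered so that each $\update$ is non-increasing on values and strictly decreasing when it actually moves a value down, then argue that only finitely many steps can move a value from $>k$ to $=k$ or to $<k$, hence after finitely many steps no process ever sets its $d$-value to $k$ from elsewhere, giving the desired $\ell$. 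I would present this as: (i) identify that entering level $k$ needs a neighbor at $k-1$, which by hypothesis is a fixed set; (ii) show no process \emph{leaves} $k$ except toward values $<k$ or (via $\update$) staying put, never "replaced" by something that then returns without a higher-level change; (iii) invoke boundedness (Lemma~\ref{lem:2}) to bound the higher levels and run a finiteness/potential argument on them to kill the feedback; (iv) conclude $Set\_d_k$ is eventually non-increasing.
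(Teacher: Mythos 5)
Your proposal starts from the right observation (a process can only enter level $k$ from above, and only when some neighbour sits at the frozen level $k-1$), but it then stalls on a difficulty that does not actually exist, and the alternative you sketch to get around it is never carried out. The ``feedback from above'' you identify as the main obstacle --- a process entering level $k$, being pushed off it by changes at higher levels, and later returning --- cannot happen, and seeing why is precisely the missing key idea. When $p$ enters level $k$ via an $\update$ rule, we have $\mind(p)=k-1$ at that step; since the hypothesis freezes \emph{all} the sets $Set\_d_\ell$ for $\ell\le k-1$, no neighbour of $p$ can ever leave level $k-1$ nor newly enter a level $\le k-2$, so $\mind(p)=k-1$ holds \emph{forever} afterwards. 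Consequently $dOk(p)$ holds forever and every rule that could modify $d_p$ is permanently disabled (for $HC_1$, $parOk(p)$ holds with a frozen parent value; for $HC_2$, the guard $d_{par_p}>\mind(p)$ is false, and any execution of a rule that would send $d_p$ below $k$ would insert $p$ into a frozen set, contradicting the hypothesis on $e$ itself). So a process that enters level $k$ is locked there for the rest of the execution; each process can therefore cause $Set\_d_k(\gamma_{j+1})\nsubseteq Set\_d_k(\gamma_j)$ at most once, there are at most $n$ such steps, and the desired index $\ell$ exists. This is the paper's argument, and it needs neither Lemma~\ref{lem:2} nor any well-founded measure on the values above $k$.

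As written, your text is not a proof: step (iii) of your outline (``invoke boundedness and run a finiteness/potential argument on the higher levels to kill the feedback'') is exactly the part that would need to be proved, and you explicitly leave it as a plan. It is also misdirected --- for $\algoU$ Lemma~\ref{lem:2} gives a bound $kb$ that depends on the execution, and a potential on the multiset of values in $(k,kb]$ would still have to rule out a value bouncing between $k$ and higher levels, which is the very point at issue. Replace the whole ``oscillation'' discussion by the locking argument above and the lemma follows in a few lines.
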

\begin{proof}
  Let $\gamma_j \mapsto \gamma_{j+1}$ be any step in the suffix of $e$
  starting in $\gamma_{i_k}$ where $Set\_d_k(\gamma_{j+1}) \nsubseteq
  Set\_d_k(\gamma_{j})$.  There is at least a process $p \neq \Root$
  such that $p \notin Set\_d_k(\gamma_j) \wedge p \in
  Set\_d_k(\gamma_{j+1})$.  In $\gamma_j$, we have $d_p > k$, as
  $Set\_d_\ell(\gamma_j) = Set\_d_\ell(\gamma_{j+1})$ $\forall \ell
  \in [0..k-1]$ and $p \notin Set\_d_k(\gamma_j)$.  Moreover, $p$
  executes a rule in $\gamma_j \mapsto \gamma_{j+1}$.  In the
  following, we prove that $p$ will no more change its $d$-variable in
  $e$ after this step.

\begin{itemize}
\item Consider first $\algoU$.  We have $\mind(p) = k-1$ in
  $\gamma_j$. Moreover, by hypothesis, $\mind(p) = k-1$ forever from
  $\gamma_{i_k}$ (so, in particular from $\gamma_j$).  So, $p$ will no
  more change its distance value after $\gamma_{j+1}$.

\item Consider $\algoB(D)$.  If $k=D$, then $d_p$ should be greater
  than $D$ in $\gamma_j$, a contradiction.  So, $k < D$ and $d_p > k$
  in $\gamma_j$.  So $p$ executes $B_1$ in $\gamma_j \mapsto
  \gamma_{j+1}$, and similarly to the previous case, $\gamma_j \mapsto
  \gamma_{j+1}$ is the only step in the suffix of $e$ starting in
  $\gamma_{i_k}$ where $p$ sets $d_{p}$ to $k$.

\item Finally, consider $\algoHC(D)$.
We have to study the two following cases:
\begin{itemize}
\item Assume that $p$ executes $HC_2$ to set $par_{p}$ to $q$ in
  $\gamma_j \mapsto \gamma_{j+1}$.  By definition, $\mind(p) = k-1 =
  d_{q}$ holds in $\gamma_{i_k}$ and all subsequent configurations.
  So, $p$ is disabled forever from $\gamma_{j+1}$.

\item Assume that $p$ executes $HC_1$ to set $par_{p}$ to $q$ in
  $\gamma_j \mapsto \gamma_{j+1}$ By definition, $\mind(p) \leq k-1 =
  d_{q} < D$ in $\gamma_{i_k}$ and all subsequent configurations. So,
  until $p$ next action, we have $parOk(p)$ and $d_p = k$.  So, $p$
  next action is necessarily $HC_2$ to set $d_{p}$ to a value smaller
  than $k$, a contradiction.  So, $p$ cannot execute any rule in the
  suffix starting from $\gamma_{j+1}$.
\end{itemize}
\end{itemize}
Hence, in the suffix of $e$ starting in $\gamma_{i_k}$, there is at
most $n$ steps $\gamma_j \mapsto \gamma_{j+1}$ where
$Set\_d_k(\gamma_{j+1}) \nsubseteq Set\_d_k(\gamma_{j})$.
\end{proof}

Below, we show that, for every $k \geq 0$, for every execution $e$ of
$\algoU$, $\algoB(D)$, or $\algoHC(D)$, if eventually every
$d$-variable whose value is less than $k$ becomes constant, then
eventually every $d$-variable whose value is $k$ becomes constant.

\begin{lemma}\label{lem:1b}
  Let $e = \gamma_0, \ldots \gamma_i, \gamma_{i+1}, \ldots$ be an
  execution of $\algoU$, $\algoB(D)$, or $\algoHC(D)$. Let $k > 0$.
  If $\exists i_{k}\ |\ \forall j \geq i_k, \forall \ell \in [0..k-1]$
  we have $Set\_d_\ell(\gamma_j) = Set\_d_\ell(\gamma_{i_{k}})$, then
 $\exists \ell \geq i_k\ |\ \forall j\geq \ell$ we have
  $Set\_d_k(\gamma_{j+1}) = Set\_d_k(\gamma_{j})$.
\end{lemma}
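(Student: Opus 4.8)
The plan is to build on Lemma~\ref{lem:1} and Lemma~\ref{lem:2}. Lemma~\ref{lem:1} already gives us a suffix of $e$, starting in some configuration $\gamma_\ell$, in which the sets $Set\_d_k$ can only shrink: $Set\_d_k(\gamma_{j+1}) \subseteq Set\_d_k(\gamma_{j})$ for all $j \geq \ell$. To upgrade this to stationarity, it suffices to prove that a descending chain of finite subsets of $V$ cannot strictly decrease infinitely often. Since $V$ is finite, $|Set\_d_k(\gamma_j)|$ is a nonincreasing sequence of nonnegative integers along the suffix, so it is eventually constant; and once the cardinality is constant while the sequence of sets is $\subseteq$-monotone, the sets themselves are constant. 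This gives some $\ell' \geq \ell \geq i_k$ from which $Set\_d_k(\gamma_{j+1}) = Set\_d_k(\gamma_j)$ for all $j \geq \ell'$, which is exactly the claimed conclusion.

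More concretely, the steps I would carry out are: (1) invoke Lemma~\ref{lem:1} with the same hypothesis $\exists i_k$ to obtain $\ell \geq i_k$ such that for all $j \geq \ell$, $Set\_d_k(\gamma_{j+1}) \subseteq Set\_d_k(\gamma_j)$; (2) observe that $f(j) := |Set\_d_k(\gamma_j)|$ is therefore a nonincreasing function of $j$ on $\{j : j \geq \ell\}$ taking values in $\{0,1,\dots,n\}$, hence there exists $\ell' \geq \ell$ with $f(j) = f(\ell')$ for all $j \geq \ell'$; (3) conclude that for each $j \geq \ell'$, the inclusion $Set\_d_k(\gamma_{j+1}) \subseteq Set\_d_k(\gamma_j)$ together with equality of cardinalities forces $Set\_d_k(\gamma_{j+1}) = Set\_d_k(\gamma_j)$. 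This yields the statement with the witness $\ell'$ in place of the generic $\ell \geq i_k$.

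One small subtlety worth addressing explicitly is that Lemma~\ref{lem:1}'s proof phrases the bound as ``at most $n$ steps $\gamma_j \mapsto \gamma_{j+1}$ with $Set\_d_k(\gamma_{j+1}) \nsubseteq Set\_d_k(\gamma_j)$'' rather than as the cleaner $\subseteq$-monotone statement in its own conclusion. I would lean on the conclusion as stated (the $\subseteq$ inclusion from some $\ell \geq i_k$ onward), since that is exactly what the lemma claims; then the finiteness-of-$V$ argument above is purely combinatorial and needs nothing about the algorithms. There is essentially no hard part here: the only thing to be careful about is keeping the quantifiers straight — the hypothesis quantifies a common ``freezing index'' $i_k$ for all small-$d$ sets, and the output index $\ell'$ must be $\geq i_k$, which is automatic since $\ell' \geq \ell \geq i_k$. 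Thus the lemma follows immediately, and (as the paper's narration suggests) it sets up an induction on $k$ combined with Lemma~\ref{lem:2}'s bound $kb$ to conclude, via Lemma~\ref{lem:par}, that every execution is finite.
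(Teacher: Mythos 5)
Your proof is correct and follows essentially the same route as the paper: invoke Lemma~\ref{lem:1} to obtain a suffix on which $Set\_d_k$ is $\subseteq$-monotone, then use finiteness of $V$ to bound the number of strict decreases and conclude the sets are eventually constant. The paper phrases the second step as ``at most $|Set\_d_k(\cdot)|$ steps where the set changes'' rather than via an eventually-constant cardinality, but the argument is the same.
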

\begin{proof}
  By lemma \ref{lem:1}, there exists a suffix $e'$ of $e$ starting in
  $\gamma_x$, such that $\forall j\geq x$, we have
  $Set\_d_k(\gamma_{j+1}) \subseteq Set\_d_k(\gamma_{j})$.  During
  $e'$, there is at most $|Set\_d_k(\gamma_{i_x})|$ steps $\gamma_j
  \mapsto \gamma_{j+1}$ where $Set\_d_k(\gamma_{j+1}) \neq
  Set\_d_k(\gamma_{j})$.
\end{proof}

From Remark \ref{rem:R}, Lemmas \ref{lem:2} and \ref{lem:1b},
 we can deduce the following
corollary.

\begin{corollary}\label{coro:d}
  For every process $p$, $d_p$ can be modified only a finite number of
  time in $e$.
\end{corollary}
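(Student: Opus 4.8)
The plan is a finite induction on $k \in \{0,1,\ldots,kb\}$, where $kb$ is the bound given by Lemma~\ref{lem:2}, proving the statement $P(k)$: there is an index $a_k$ such that $Set\_d_\ell(\gamma_j) = Set\_d_\ell(\gamma_{a_k})$ for every $j \geq a_k$ and every $\ell \in [0..k]$ (i.e.\ all $d$-values at most $k$ are eventually frozen). The base case $P(0)$ is immediate from Remark~\ref{rem:R}: $Set\_d_0(\gamma_j) = \{\Root\}$ in every configuration, so $a_0 = 0$ works. For the inductive step, given $P(k-1)$ with witness $a_{k-1}$ and $1 \le k \le kb$, I would feed $a_{k-1}$ into Lemma~\ref{lem:1b} as its hypothesis index; the lemma then returns an index $b \geq a_{k-1}$ from which $Set\_d_k$ is constant. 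Since $b \geq a_{k-1}$, all of $Set\_d_0,\ldots,Set\_d_{k-1}$ are still frozen from $\gamma_b$ on, so $a_k := b$ witnesses $P(k)$.

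Taking $k = kb$ yields an index $m := a_{kb}$ from which $Set\_d_0,\ldots,Set\_d_{kb}$ are all constant; and Lemma~\ref{lem:2} gives $Set\_d_\ell(\gamma_j) = \emptyset$ for all $j$ whenever $\ell > kb$, so in fact \emph{every} $Set\_d_\ell$ is constant from $\gamma_m$ on. To finish, I would transfer this to the variables themselves: by Remark~\ref{rem:R}, for each $j$ the sets $\{Set\_d_\ell(\gamma_j)\}_{\ell\geq 0}$ partition $V$, with $p$ lying in the class indexed by its current value $d_p$; hence a modification of $d_p$ in a step $\gamma_j \mapsto \gamma_{j+1}$ with $j \ge m$ would move $p$ from one class to another, contradicting the constancy of all the $Set\_d_\ell$ from $\gamma_m$. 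Therefore $d_p$ is constant from $\gamma_m$ on, and so it is modified at most $m$ times along $e$, which is finite.

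I do not anticipate a genuine difficulty: the only subtleties are handling $k = 0$ separately (Lemma~\ref{lem:1b} requires $k>0$) and arranging the inductive hypothesis so that a \emph{single} index $a_{k-1}$ simultaneously freezes $Set\_d_0,\ldots,Set\_d_{k-1}$, which is exactly the form Lemma~\ref{lem:1b} demands; everything else is routine bookkeeping.
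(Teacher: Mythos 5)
Your proof is correct and follows exactly the route the paper intends: the paper states the corollary follows from Remark~\ref{rem:R}, Lemma~\ref{lem:2}, and Lemma~\ref{lem:1b} without elaboration, and your finite induction on $k$ up to $kb$ (base case from the remark, inductive step from Lemma~\ref{lem:1b}, final transfer from frozen sets to frozen variables) is precisely the careful spelling-out of that deduction.
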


By Lemma~\ref{lem:par},
Corollary~\ref{coro:d}, follows:

\begin{theorem}\label{term}
  Under a distributed unfair daemon, all executions of $\algoU$,
  $\algoB(D)$, and $\algoHC(D)$ are finite.
\end{theorem}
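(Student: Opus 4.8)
The plan is to obtain Theorem~\ref{term} as an immediate consequence of the two building blocks already established: Lemma~\ref{lem:par}, which reduces the finiteness of an execution to the finiteness of the number of $d$-variable updates in it, and Corollary~\ref{coro:d}, which guarantees precisely that every $d$-variable is modified only finitely often. Concretely, I would fix an arbitrary execution $e$ of $\algoU$, $\algoB(D)$, or $\algoHC(D)$; apply Corollary~\ref{coro:d} to conclude that for every process $p$, $d_p$ changes only finitely many times along $e$; then invoke Lemma~\ref{lem:par} to conclude that $e$ is finite. Since $e$ was arbitrary, all executions of the three algorithms are finite under a distributed unfair daemon. The daemon being unfair plays no role beyond what is already absorbed in the earlier lemmas, so nothing extra is needed here.

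It is worth recalling the layered-induction structure underlying Corollary~\ref{coro:d}, since that is where the real work sits. First, Lemma~\ref{lem:2} provides a uniform finite upper bound $kb$ on all distance values occurring in $e$ (taking $kb=D$ for $\algoB(D)$ and $\algoHC(D)$, and obtaining $kb$ from a connectivity argument for $\algoU$). Then one proves by induction on $k\in[0..kb]$ that the layers $Set\_d_k$ eventually stabilize along $e$: the base case $k=0$ is Remark~\ref{rem:R}, since $Set\_d_0$ is always $\{R\}$, and the inductive step is Lemma~\ref{lem:1b}, which upgrades ``all layers below $k$ frozen'' to ``layer $k$ frozen''. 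Because $V=\bigcup_{i=0}^{kb}Set\_d_i$ is a finite union of layers, once all of them are frozen every $d$-variable is constant, i.e. each $d_p$ was modified only finitely often. Feeding this into Lemma~\ref{lem:par} then closes the argument.

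The hard part, were one to redo this from scratch instead of citing the lemmas, is Lemma~\ref{lem:1}: showing that once the layers below $k$ are frozen, any process that moves its $d$-variable \emph{down} onto value $k$ never moves again. The $\algoU$ and $\algoB(D)$ cases are routine, since such a process executes $U_1$ (resp.\ $B_1$) with $\mind(p)=k-1$ fixed forever afterwards. The delicate case is $HC_1$ in $\algoHC(D)$: after executing $HC_1$ one has $parOk(p)$ and $d_p=k$, so the only way $p$ could move again is via $HC_2$, which would set $d_p$ below $k$ and contradict the lower layers being frozen. I also expect the $\algoU$ instance of Lemma~\ref{lem:2} — proving that the set $\UNBOUNDED$ of processes with unbounded distance value is empty — to require care, as it is the one place where connectivity of $G$ is genuinely used.
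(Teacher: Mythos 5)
Your proposal is correct and matches the paper's own proof exactly: Theorem~\ref{term} is derived there as an immediate consequence of Corollary~\ref{coro:d} (each $d_p$ changes only finitely often) combined with Lemma~\ref{lem:par} (finitely many $d$-updates implies a finite execution). Your recap of the layered-induction structure behind Corollary~\ref{coro:d} (via Remark~\ref{rem:R}, Lemmas~\ref{lem:2}, \ref{lem:1}, and \ref{lem:1b}) also faithfully reflects the paper's argument.
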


\section{Stabilization Time in Rounds}\label{sect:round}

In this section, we study the stabilization time in rounds of the
three algorithms presented in Section~\ref{sect:algo}. Throughout this
section we will use the notion of {\em attractor} defined below.  

Let $\mathcal A$ be a distributed algorithm. Let $\mathcal C_1$ and
$\mathcal C_2$ be two subsets of $\mathcal C$, the set of all possible
configurations of $\mathcal A$.  $\mathcal C_2$ is an {\em attractor}
for $\mathcal C_1$ (under $\mathcal A$) if the following conditions
hold:
\begin{description}
\item[Convergence:] $\forall e = \gamma_0, \gamma_1, \ldots \in
  \mathcal{E}({\mathcal C_1}), \exists i \geq 0\ |\ \gamma_i \in \mathcal
  C_2$.
\item[Closure:] $\forall e = \gamma_0, \gamma_1, \ldots \in
  \mathcal{E}({\mathcal C_2}), \forall i \geq 0\ |\ \gamma_i \in \mathcal
  C_2$.
\end{description}
The following predicate is useful to establish a sequence of
attractors.

$$\correctb(p,i) \equiv (\|\Root,p\| \leq i \Rightarrow (d_p = \|\Root,p\| = d_{par_p}+1))$$

For every $i \geq 0$, $\LA(i)$ is the set of configurations, where
every process $p \neq \Root$ satisfies $\correctb(p,i)$.

In any configuration of $\LA(\diam)$, every process $p \neq \Root$ satisfies
$(d_p = \|\Root,p\| = d_{par_p}+1)$, moreover $d_r=0$, by definition.
So, all configurations of $\LA(\diam)$ are legitimate.  Furthermore,
every legitimate configuration is terminal in all of the three
algorithms (Theorem~\ref{legi}). Hence, the stabilization time of any
of the three algorithms is bounded by the maximum number of rounds
it requires to reach any configuration of $\LA(\diam)$ starting from
any arbitrary configuration.

\subsection{Lower Bound in $\Omega(D)$ Rounds for $\algoHC(D)$}
\label{sub:sec-cvHC}

We first show that the stabilization time in rounds of $\algoHC(D)$
actually depends on the size of the domain of the
$d$-variables. Hence, we can conclude that $\algoHC(n)$, {\em i.e.},
the algorithm proposed in~\cite{HC92}, stabilizes in $\Omega(n)$
rounds, where $n$ is the number of processes.

Let $k \geq 1$. We now exhibit a possible execution of $\algoHC(2k)$
which stabilizes in $k+1$ rounds in the 3-nodes graph given in
Figure~\ref{fig:execHC} (its diameter is 2). Notice that this
execution requires $2k$ steps.

\begin{figure}[h]
\begin{centering}
\scalebox{0.7}{\input{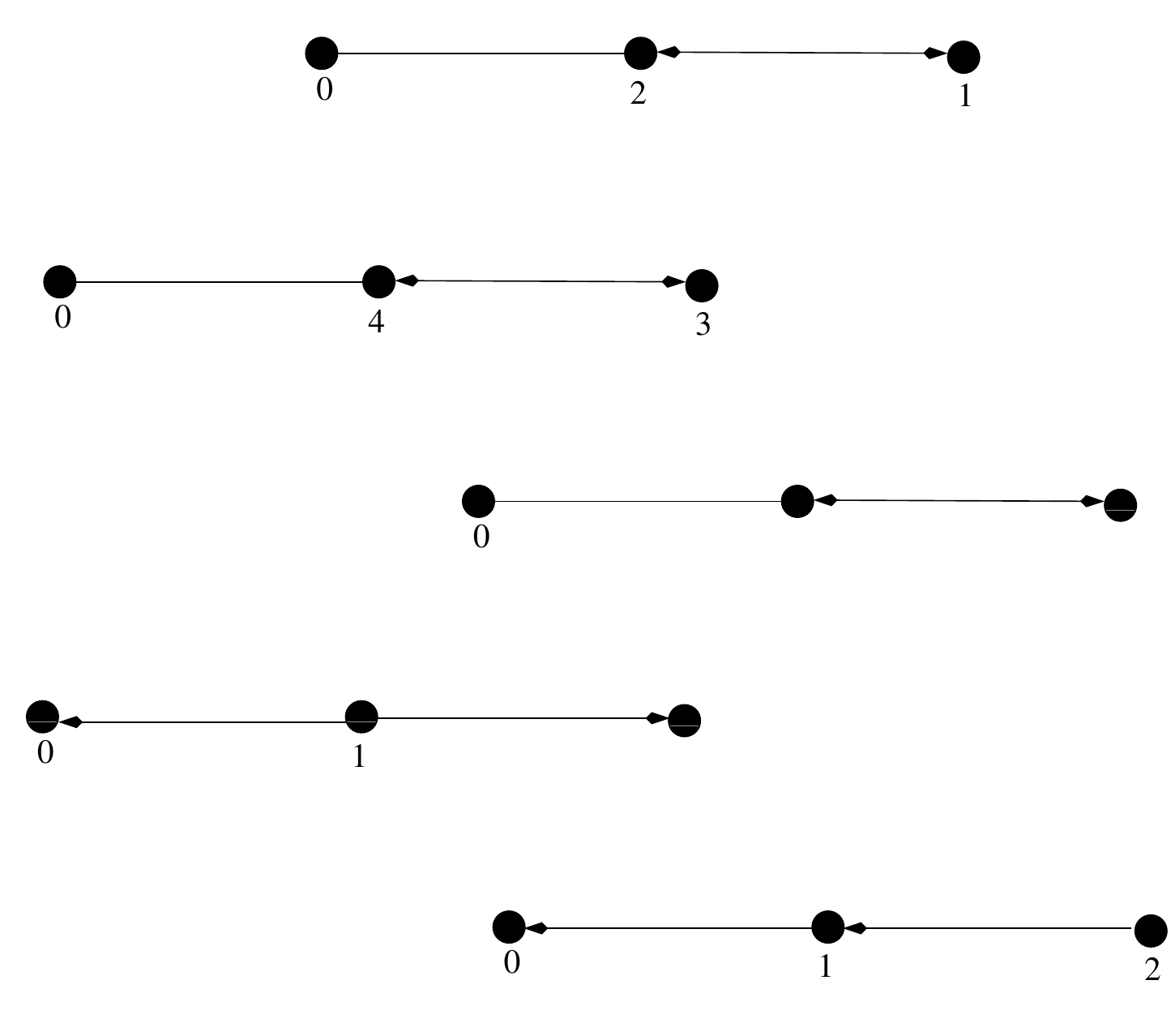_t}}
\caption{Execution of $\algoHC(2k)$ which converges in $k+1$ rounds}
\label{fig:execHC}
\end{centering}
\end{figure}

\begin{itemize} 
\item At the beginning of the $i$th round with $i \in [1,k-1]$,
  processes $a$ and $b$ are enabled.  In the first step of these
  rounds, $b$ executes $HC_1$. During the second step (the last step
  of these rounds) the node $a$ executes $HC_1$.
\item At the beginning of the $k$th round, only process $a$ is
  enabled.  During the only step of this round, $a$ executes $HC_2$
  and gets its terminal state.
\item At the beginning of the $k+1$th round, only process $b$ is
  enabled.  During the only step of this round, process $b$ executes
  and gets its terminal state.
\end{itemize}
This example can be generalized to any number of processes $n \geq
3$. Just construct a network $G$ of $n$ processes by adding $n-3$
processes to the network given in Figure \ref{fig:execHC}; those $n-3$
processes being only neighbors of $\Root$. Since the state of $\Root$
is constant, these $n-3$ processes have no impact on the behavior of
$a$ and $b$. Hence, the previous execution is a possible execution
prefix in $G$ which contains $\Omega(D)$ rounds.

Hence, the stabilization time of $\algoHC(D)$ is $\Omega(D)$ rounds.

The lower bound on the stabilization time is mainly due to the fact
that rules $HC_1$ and $HC_2$ are not mutually exclusive. Hence, when
both are enabled at the same process $p$, the daemon may choose to
activate any of them. Our lower bound is then established when the
daemon makes priority on $HC_1$.

In the following subsection, we show that this lower bound can be
easily circumvented to obtain the stabilization time in
$\Theta(\diam)$ rounds.

\subsection{Fast Implementation of
  $\algoHC(D)$}\label{sub:FHC}

\subsubsection{Algorithm~$\algoFHC(D)$}

Below, we propose a variant of $\algoHC(D)$ where we have modified $HC_1$
into $FHC_1$, so that $FHC_1$ and $HC_2$ are now mutually
exclusive. The modification of $HC_1$ into $FHC_1$ gives it priority on
$HC_2$. In the following, this variant will be denoted by $\algoFHC(D)$ and called {\em fast implementation of $\algoHC(D)$}. 

\begin{tabular}{cclcl}
$FHC_1$ & $::$ & $\neg parOk(p) \wedge  d_{par_p} < D
\wedge d_{par_p} = \mind(p)$ & $\to$ & $d_p \gets d_{par_p}+1$\\
$HC_2$ & $::$ & $d_{par_p} > \mind(p) $ & $\to$ & $update(p)$\\
\end{tabular}\\

The lemma given below show the close relationship between
$\algoFHC(D)$ and $\algoB(D)$.

\begin{lemma}
\label{lem:fast-impl1}
  If $\gamma \mapsto \gamma'$ is a step of $\algoFHC(D)$ containing
  execution of rules $HC2$ only, then $\gamma \mapsto \gamma'$ is a
  possible step of $\algoB(D)$.
\end{lemma}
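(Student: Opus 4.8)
The plan is to show that a step of $\algoFHC(D)$ in which every activated process executes $HC_2$ is indistinguishable, as a state transition, from a legal step of $\algoB(D)$. Since both algorithms share the same state space (each non-root $p$ has $d_p \in [1..D]$ and $par_p \in \neig_p$), it suffices to check two things for each process $p$ activated in $\gamma \mapsto \gamma'$: first, that $p$ is enabled in $\gamma$ under $\algoB(D)$; and second, that the action $p$ performs under $HC_2$ coincides with the action of the $\algoB(D)$ rule that is enabled at $p$. The key observation is that the action of $HC_2$ is exactly $\update(p)$, which is also the action of $B_1$ (and $B_2$ only changes $par_p$, a subset of what $\update(p)$ does when $dOk(p)$ already holds).

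\textbf{Key steps.} First I would fix a process $p$ activated by $HC_2$ in the step, so $d_{par_p} > \mind(p)$ holds in $\gamma$. I would then argue $\mind(p) < D$: indeed $d_{par_p} \leq D$ by the domain restriction, and $d_{par_p} > \mind(p)$ forces $\mind(p) < D$. Now I split on whether $dOk(p)$ holds in $\gamma$. If $\neg dOk(p)$, then together with $\mind(p) < D$ the guard of $B_1$ is satisfied, and $B_1$'s action is $\update(p)$, which is precisely what $HC_2$ does — so the effect on $p$'s state is identical. If $dOk(p)$ holds, then $d_p = \mind(p)+1$; combined with $d_{par_p} > \mind(p)$ we get $d_p = \mind(p)+1 \leq d_{par_p}$, and I must check $d_p \neq d_{par_p}+1$, i.e. $\neg parOk(p)$: from $d_{par_p} > \mind(p)$ we have $d_{par_p} \geq \mind(p)+1 = d_p > d_p - 1$, hence $d_{par_p} \neq d_p - 1$, so $\neg parOk(p)$. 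Thus $B_2$ is enabled at $p$ (its guard $\mind(p) < D \wedge dOk(p) \wedge \neg parOk(p)$ holds), and $B_2$'s action sets $par_p \gets \Best(p)$. Meanwhile $HC_2$ runs $\update(p)$, which sets $d_p \gets \mind(p)+1$ and $par_p \gets \Best(p)$; but since $dOk(p)$ already holds, $d_p$ is unchanged, so the net effect on $p$'s state is again exactly that of $B_2$. In both sub-cases the state of $p$ after the step is the same under $\algoFHC(D)$ and under the enabled $\algoB(D)$ rule; since the states of the other (inactive) processes are untouched, and since this holds simultaneously for all processes activated in $\gamma \mapsto \gamma'$, the transition $\gamma \mapsto \gamma'$ is a legal step of $\algoB(D)$ under the same daemon choice.

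\textbf{Main obstacle.} The only subtle point is the second sub-case: one must notice that although $HC_2$'s action $\update(p)$ nominally rewrites $d_p$, the assumption $dOk(p)$ makes that rewrite a no-op, so the observable transition matches $B_2$ rather than $B_1$; without this remark one might wrongly think the step is not reproducible in $\algoB(D)$ because no $\algoB(D)$ rule performs a full $\update(p)$ while leaving $d_p$ fixed. Everything else is a routine guard-by-guard comparison, and the domain bound $\mind(p) < D$ (needed to enable $B_1$/$B_2$ rather than being blocked the way $B_3$ would force) follows immediately from $d_{par_p} \leq D$ together with the strict inequality in the $HC_2$ guard.
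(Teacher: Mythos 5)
Your proof is correct and follows essentially the same route as the paper's: derive $\mind(p) < d_{par_p} \leq D$ from the $HC_2$ guard, then case-split on $dOk(p)$ to match the move against $B_1$ or $B_2$. You are in fact slightly more careful than the paper in explicitly deriving $\neg parOk(p)$ in the $dOk(p)$ case and in noting that $\update(p)$ leaves $d_p$ unchanged there.
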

\begin{proof}
Let $\gamma \mapsto \gamma'$ be any step of $\algoFHC(D)$ containing
execution of rules $HC2$ only.  Consider any process $p$ that moves
during $\gamma \mapsto \gamma'$.  So, $p$ performs $HC_2$ during
$\gamma \mapsto \gamma'$ and we have $\mind(p) < d_{par_p} \leq D$ in
$\gamma$.

If $\neg dOk(p)$ is true in $\gamma$, then $B_1$ is enabled at $p$ in
$\gamma$.  Now, the action part of $B_1$ and $HC_2$ are identical.

Conversely, assume that $dOk(p)$ is true in $\gamma$.  During $\gamma
\mapsto \gamma'$, $p$ does not modify $d_p$, however $par_p$ is set to
$\Best(p)$. Now, $\neg parOk(p)$ is true in $\gamma$, so $B_2$ is
enabled while $B_1$ is not.  In this case, the action part of $B_2$
has the same effect as the action part of $HC_2$.

Hence, in any case, $\gamma \mapsto \gamma'$ is a possible step of
  $\algoB(D)$.
\end{proof}

\subsubsection{Upper Bound on Stabilization Time in Rounds of $\algoFHC(D)$}

The $\correctx(p,i)$ and $\correctc(p,i)$ predicate defined below are used 
to establish a sequence of
$\diam+1$ attractors under $\algoFHC(D)$ (with $D
\geq \diam$) ending in the set of the terminal configurations.

$$\correctx(p,i) \equiv (\|\Root,p\| > i \Rightarrow (d_p > i \vee (d_p=i \wedge (\exists q  \in \neig_p\ |\ d_q \leq i+1))))$$

$\correctx(p,i)$ means that if a process $p$ is at distance larger than $i$ from $\Root$, then either $d_p$ should be also larger than $i$, or $d_p$ should be equal to $i$ and a neighbor of $p$ should have its distance to $\Root$ smaller than or equal to $i+1$.

$$\correctc(p,i) \equiv (\|\Root,p\| \leq i \Rightarrow  d_p = \|\Root,p\|)$$ 

$\correctc(p,i)$ means that if a process $p$ is at least  at distance $i$ from $\Root$, then its distance value should be correct, {\em i.e.}, $d_p$ is equal to its distance to the $\Root$.

Below, we define some useful subsets of configurations.
\begin{itemize}
\item Let $\UBD(i)$ be the set of configurations,
where every process $p \neq \Root$ satisfies $\correctx(p,i)$.
\item Let $\CD(i)$ be the set of configurations,  
where every process $p$ satisfies $\correctc(p,i)$. 
\item Let $\B(i) = \CD(i) \cap \UBD(i)$.
\end{itemize}

Notice that $\B(0)$ is the set of all possible configurations.

\begin{observation}
\label{obs:help}
Let $p$ be an process such that $\|p,\Root\| > i+1$.  So, we have $i+2
\leq \diam$.  Let $\gamma$ be a configuration of $\B(i)$.  By
definition of $\B(i)$, we have $d_p \geq i$ and $d_q \geq i$ for every
$q \in \neig_p$ in every execution from $\gamma$.  Consequently,
$\mind(p) \geq i$ along any execution from $\gamma$.
\end{observation}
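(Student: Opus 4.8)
The plan is to read off the two distance lower bounds directly from the definition of $\UBD(i)$ and then to combine them. First I would dispose of the arithmetic remark: $\|p,\Root\| > i+1$ means that $G$ contains a vertex at distance at least $i+2$ from $\Root$, so $\diam \geq i+2$; since we work with $\algoFHC(D)$ for $D \geq \diam$, this also gives $D \geq i+2$. This is the content of the stated inequality $i+2 \leq \diam$; it is not literally needed for the bound $\mind(p)\geq i$ itself, but it records, for later use, that $D$ does not constrain the $d$-values near $p$.

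Next I would note the topological fact that drives everything: every $q \in \neig_p$ satisfies $\|q,\Root\| \geq \|p,\Root\| - 1 \geq i+1 > i$, so that $p$ and \emph{all} of its neighbors are non-root processes lying strictly farther than $i$ from $\Root$. Hence, in any configuration $\gamma' \in \UBD(i)$, the predicate $\correctx(r,i)$ applies to each $r \in \{p\}\cup\neig_p$, and because $\|\Root,r\| > i$ its consequent forces $d_r > i$ or $d_r = i$, i.e.\ $d_r \geq i$. Evaluating $\mind(p) = \min\{d_q \mid q \in \neig_p\}$ then yields $\mind(p) \geq i$ in $\gamma'$. Since $\B(i) \subseteq \UBD(i)$, this already gives the claim in the configuration $\gamma$ itself.

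To turn ``in $\gamma$'' into ``along any execution from $\gamma$'', I would invoke the closure of $\B(i)$ under $\algoFHC(D)$: every configuration reachable from $\gamma \in \B(i)$ is again in $\B(i) \subseteq \UBD(i)$, so the argument above applies at each step. This closure is the delicate point, and it is not entailed by $\UBD(i)$ alone: if a process $r$ at distance $i+1$ fires $FHC_1$ or $HC_2$ it receives $d_r \gets \mind(r)+1$, and $r$ may have a neighbor at distance exactly $i$, so without further control this update could drop $d_r$ below $i$. It is the $\CD(i)$ component of $\B(i)$---which pins every process at distance $\leq i$ to its exact distance value---that prevents this. I therefore expect the real work to live in a separate closure lemma for $\B(i)$, proved by a rule-by-rule inspection of $\algoFHC(D)$ (using that $FHC_1$ fires only with $d_{par_p} = \mind(p) < D$ and that $HC_2$ performs $\update(p)$), in which $\CD(i)$ and $\UBD(i)$ must be carried together; ruling out that a distance-$i$ process loses its exact value is the step I would watch most carefully.
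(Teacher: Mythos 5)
Your proof is correct and takes essentially the same route as the paper's inline justification: since $\|p,\Root\|>i+1$, both $p$ and all of its neighbors lie at distance at least $i+1>i$ from $\Root$, so $\correctx(\cdot,i)$ (the $\UBD(i)$ half of $\B(i)$) forces each of their $d$-values to be at least $i$, whence $\mind(p)\geq i$. Your caveat about closure is well placed but already accounted for by the paper's structure: the observation is only ever invoked inside lemmas whose hypothesis is that $\B(i)$ \emph{is} an attractor (established inductively via Corollary~\ref{cor:Bi}), so the ``along any execution from $\gamma$'' clause is licensed by that standing assumption rather than proved in the observation itself, exactly as you anticipated.
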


\begin{lemma}
 Assume that $\B(i)$ is an attractor under $\algoFHC(D)$  
with $D \geq \diam$ and $0 \leq i < \diam$.
Let $\gamma$ be a configuration of $\B(i)$.
Let $\gamma \mapsto \gamma'$ be a possible step 
where process $p$ moves.
$\correctx(p,i+1)$ holds in $\gamma'$. 
\label{lem:Bi-help}
\end{lemma}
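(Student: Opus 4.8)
The plan is to reduce to the only interesting case and then exploit the particular shape of the two rules of $\algoFHC(D)$. Since $\correctx(p,i+1)$ is vacuously true whenever $\|\Root,p\| \leq i+1$, I will assume throughout that $\|\Root,p\| \geq i+2$ (this is consistent with the hypothesis $i<\diam$). The single fact that drives everything is: \emph{whenever a process $q$ executes a rule of $\algoFHC(D)$ in a step starting from a configuration $\delta$, its new distance value equals $\mind(q)(\delta)+1$.} For $HC_2$ this is immediate, since the action is $\update(q)$. For $FHC_1$ it holds as well, because the extra conjunct $d_{par_q}=\mind(q)$ in the guard of $FHC_1$ turns the assignment $d_q \gets d_{par_q}+1$ into $d_q \gets \mind(q)(\delta)+1$. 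I will use this both for $p$ itself and later for a neighbor of $p$.

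Next I bound $d_p$ in $\gamma'$ from below. Since $\gamma\in\B(i)\subseteq\UBD(i)$ and $\|\Root,p\|>i$, predicate $\correctx(p,i)$ forces $d_p\geq i$ in $\gamma$; and every neighbor $q$ of $p$ satisfies $\|\Root,q\|\geq\|\Root,p\|-1\geq i+1>i$, so $\correctx(q,i)$ likewise gives $d_q\geq i$ in $\gamma$. Hence $\mind(p)(\gamma)\geq i$ — this is exactly Observation~\ref{obs:help} — and by the fact above $d_p(\gamma')=\mind(p)(\gamma)+1\geq i+1$. If $d_p(\gamma')>i+1$, then $\correctx(p,i+1)$ holds in $\gamma'$ and we are done.

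It remains to handle $d_p(\gamma')=i+1$. Then $\mind(p)(\gamma)=i$, so $p$ has a neighbor $q$ with $d_q(\gamma)=i$, and it suffices to show $d_q(\gamma')\leq i+2$, which furnishes precisely the witness demanded by $\correctx(p,i+1)$ in $\gamma'$. I split on whether $q$ is activated during $\gamma\mapsto\gamma'$. If $q$ does not move, $d_q(\gamma')=d_q(\gamma)=i\leq i+2$. If $q$ does move, then by the fact above $d_q(\gamma')=\mind(q)(\gamma)+1$; moreover $q\neq\Root$ (it takes a step) and $\|\Root,q\|\geq i+1>i$, so $\correctx(q,i)$ holds in $\gamma$, and since $d_q(\gamma)=i$ is not $>i$, it forces a neighbor $q'$ of $q$ with $d_{q'}(\gamma)\leq i+1$; hence $\mind(q)(\gamma)\leq i+1$ and $d_q(\gamma')\leq i+2$, as wanted.

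I do not expect a genuine obstacle here; the only point requiring care is keeping track of which configuration each (in)equality refers to, because the daemon is distributed and the neighbor $q$ witnessing $\mind(p)(\gamma)=i$ may be activated in the very same step as $p$. The invariant $\correctx(q,i)$ carried by $\gamma\in\B(i)$ is exactly what prevents $q$'s updated value from overshooting $i+2$, and recognizing that the added conjunct in $FHC_1$ makes both rules assign $\mind(\cdot)+1$ is what lets the argument for $q$ be identical to the argument for $p$.
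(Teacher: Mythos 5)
Your proof is correct and follows essentially the same route as the paper: reduce to $\|\Root,p\|>i+1$, use Observation~\ref{obs:help} to get $d_p\geq i+1$ in $\gamma'$, and in the boundary case $d_p=i+1$ exhibit a neighbor $q$ with $d_q\leq i+2$ in $\gamma'$. You are in fact somewhat more careful than the paper, which dismisses the possibility that the witness $q$ is activated in the same step with a terse appeal to ``the definition of $\B(i)$,'' whereas you spell out that both rules assign $\mind(\cdot)+1$ and that $\correctx(q,i)$ bounds $\mind(q)(\gamma)$ by $i+1$.
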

\begin{proof}
  Let $p$ be a process such that $\|p,\Root\| > i+1$ (the other case
  is trivial).
  We have $d_p \geq i+1$ in $\gamma'$ according to observation \ref{obs:help}.\\
  If $d_p > i+1$ in $\gamma'$, then $\correctx(p,i+1)$ holds.
  Otherwise, we have $d_p = i+1$ in $\gamma'$. Let $q \in \neig_p$
  such that $d_q = \mind(p)$ in $\gamma$. We have $d_q = i$ in
  $\gamma$. By definition of $\B(i)$, we have $\mind(p) \leq i+2$ in
  $\gamma'$. Hence, $\correctx(p,i+1)$ holds in $\gamma'$.
\end{proof}

\begin{lemma}
  If $\B(i)$ is an attractor under $\algoFHC(D)$  with $D
\geq \diam$ and $0 \leq i < \diam$, then $\UBD(i+1)$ is an
  attractor under $\algoFHC(D)$ from $\B(i)$
  which is reached within at most one round from $\B(i)$.
\label{lem:Bi-D}
\end{lemma}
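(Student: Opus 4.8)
The goal is to prove Lemma~\ref{lem:Bi-D}: assuming $\B(i)$ is an attractor under $\algoFHC(D)$ (with $D\geq\diam$, $0\leq i<\diam$), every execution starting in $\B(i)$ reaches a configuration of $\UBD(i+1)$ within at most one round, and $\UBD(i+1)$ is closed. I would split the argument into a \emph{closure} part and a \emph{convergence-within-one-round} part. For closure, take a step $\gamma\mapsto\gamma'$ with $\gamma\in\UBD(i+1)$ (hence, using that $\B(i)$ is an attractor, we may assume $\gamma\in\B(i)$ too). For any process $p$ with $\|p,\Root\|>i+1$ that moves during the step, Lemma~\ref{lem:Bi-help} directly gives $\correctx(p,i+1)$ in $\gamma'$. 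For a process $p$ that does \emph{not} move, I would show $\correctx(p,i+1)$ is preserved: it held in $\gamma$, and the only way it could fail in $\gamma'$ is if $d_p=i+1$ in $\gamma'$ (forced by Observation~\ref{obs:help}, since $d_p\geq i+1$ is guaranteed in $\B(i)$) while every neighbor now has distance value $>i+1$; but $d_p$ is unchanged, so $d_p$ was $i+1$ already in $\gamma$, and since $\correctx(p,i+1)$ held in $\gamma$ some neighbor $q$ had $d_q\leq i+2$ in $\gamma$; I then argue $q$'s value cannot jump above $i+2$ in one step (by Observation~\ref{obs:help}, $\mind(q)\geq i$ in $\B(i)$, and any update sets $d_q\gets\mind(q)+1$, while $FHC_1$ only sets $d_q\gets d_{par_q}+1$ with $d_{par_q}=\mind(q)\leq$ small — here one must check the $FHC_1$ case carefully, but since $q\in\B(i)$ and the guard of $FHC_1$ with $d_{par_q}=\mind(q)$ keeps $d_q$ at $\mind(q)+1\leq i+2$ unless $\mind(q)$ itself is large, which would contradict membership in $\B(i)$ via Observation~\ref{obs:help} applied to a neighbor of $q$). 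This neighbor-of-neighbor bookkeeping is the fiddly part.

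For convergence, fix an execution $e=\gamma_0\gamma_1\ldots$ with $\gamma_0\in\B(i)$, and let $\gamma_0\ldots\gamma_r$ be its first round. I want $\gamma_r\in\UBD(i+1)$. Take any $p$ with $\|p,\Root\|>i+1$ (otherwise $\correctx(p,i+1)$ is vacuous). If $\correctx(p,i+1)$ already holds at every configuration of the round we are done by the closure reasoning above; otherwise there is a first configuration $\gamma_j$ ($j\leq r$) where it holds — I must produce one. The key observation is that if $\correctx(p,i+1)$ fails in some $\gamma_\ell$, then by Observation~\ref{obs:help} we have $d_p\geq i+1$, and failure forces $d_p=i+1$ in $\gamma_\ell$ with all neighbors of $p$ having distance value $>i+1$, i.e. $\mind(p)\geq i+2$ in $\gamma_\ell$; but then, since $\B(i)$ is an attractor (closed) and $i+1<\diam$ so $p$ has a neighbor $q$ with $\|q,\Root\|\leq\|p,\Root\|-1$, hmm — actually the cleaner route: when $\correctx(p,i+1)$ fails, $d_p=i+1<\mind(p)+1=\mind(p)+1$, so $\neg dOk(p)$ in a weak sense; more precisely $d_{par_p}\geq\mind(p)>i+1>\ldots$ Let me instead argue $p$ is \emph{enabled}: with $d_p=i+1$ and $\mind(p)\geq i+2$ we get $d_{par_p}\geq\mind(p)>i+1=d_p-1\cdot$... wait $d_{par_p}\geq i+2 > i = d_p-1$, so $d_{par_p}>\mind(p)$ would need $d_{par_p}>\mind(p)$; if $d_{par_p}>\mind(p)$ then $HC_2$ is enabled, and if $d_{par_p}=\mind(p)$ then since $d_p=i+1\neq\mind(p)+1\geq i+3$ we have $\neg parOk(p)$ and $d_{par_p}=\mind(p)$, so $FHC_1$ is enabled. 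Either way $p$ is enabled in every configuration of the round where $\correctx(p,i+1)$ fails. Since the round ends when every initially-enabled process has moved or been neutralized, and $p$ stays enabled as long as $\correctx(p,i+1)$ fails, $p$ must eventually move within the round; the first time $p$ moves, Lemma~\ref{lem:Bi-help} gives $\correctx(p,i+1)$ in the resulting configuration, and by closure it persists to $\gamma_r$. If instead $\correctx(p,i+1)$ held throughout, closure again gives it at $\gamma_r$.

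The main obstacle is the closure argument's neighbor-of-neighbor step: showing that a non-moving process $p$ at the "boundary" value $d_p=i+1$ cannot be orphaned by all its neighbors simultaneously jumping their distance values above $i+2$ in a single step. This needs a uniform bound: for any $q\in\neig_p$ we have $q\in\B(i)$ (closure of $\B(i)$), so if $\|q,\Root\|>i+1$ then Observation~\ref{obs:help} applied at $q$ gives $\mind(q)\geq i$, and any rule execution at $q$ resets $d_q$ to at most $\mind(q)+1$; but I need $\mind(q)\leq i+1$, which requires knowing $p$ itself (a neighbor of $q$) has $d_p\leq i+1$ — which is exactly our hypothesis $d_p=i+1$ in $\gamma$ — so $\mind(q)\leq d_p=i+1$ and hence $d_q\leq i+2$ after $q$'s move; thus at least this particular neighbor $q$ keeps $d_q\leq i+2$, preserving $\correctx(p,i+1)$. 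I would also need to handle the $FHC_1$ branch (which sets $d_q\gets d_{par_q}+1$ rather than $\mind(q)+1$), but its guard requires $d_{par_q}=\mind(q)\leq i+1$, giving the same bound $d_q\leq i+2$. So the argument closes, but it is worth writing out these cases explicitly rather than hand-waving.
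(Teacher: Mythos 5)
Your closure argument is sound and essentially the paper's: the key point that a neighbor $q$ of $p$ with $d_q\le i+2$ cannot jump above $i+2$ in one step, because $p$ itself witnesses $\mind(q)\le d_p=i+1$ in both the $HC_2$ and $FHC_1$ branches, is exactly how the paper closes $\UBD(i+1)$.

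The convergence part, however, has two genuine gaps. First, Observation~\ref{obs:help} only guarantees $d_p\ge i$, not $d_p\ge i+1$, for a process at distance greater than $i+1$ in $\B(i)$; so failure of $\correctx(p,i+1)$ does not force $d_p=i+1$. The case $d_p=i$ is a legitimate configuration of $\B(i)$ (it satisfies $\correctx(p,i)$ as long as some neighbor has $d\le i+1$) and must be treated separately; it is patchable in your framework, since $\correctx(p,i)$ then yields $\mind(p)\le i+1<D$ and hence $p$ is enabled, but you never make that argument. Second, and more seriously, your central claim that whenever $\correctx(p,i+1)$ fails the process $p$ is enabled is false: in the sub-case $d_{par_p}=\mind(p)$ you invoke $FHC_1$ but omit its third conjunct $d_{par_p}<D$. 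Consider $d_p=i+1$ with every neighbor of $p$ (including $par_p$) holding the value $D$, where $i\le D-3$: this configuration lies in $\B(i)$, $\correctx(p,i+1)$ fails, yet both $HC_2$ (since $d_{par_p}=\mind(p)$) and $FHC_1$ (since $d_{par_p}=D$) are disabled at $p$, so $p$ need never move during the round. In that corner case convergence cannot come from $p$; it comes from the fact that every neighbor $q$ with $d_q>i+2$ is continuously enabled (because of $p$'s low value) and therefore must move within the round, after which $d_q\le\mind(q)+1\le i+2$ restores $\correctx(p,i+1)$. This is precisely the extra sub-case the paper's proof isolates (``some neighbor of $p$ moves'' versus ``neither $p$ nor its neighbors move''), and it is missing from your argument.
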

\begin{proof} ~

\begin{description}
\item[Closure.]
Let $\gamma \mapsto \gamma'$ be a possible step
from any
  configuration $\gamma$ of $\B(i)$. 
We show that for every process
  $p \neq \Root$, if $\correctx(p,i+1)$ holds in $\gamma$, then
  $\correctx(p,i+1)$ holds in $\gamma'$. 
Assume $\|p,\Root\| > i+1$
  (the other case is trivial). 
Assume that $p$ does not move during the step; otherwise
 $\correctx(p,i+1)$ holds in $\gamma'$ according to Lemma \ref{lem:Bi-help}.
If $d_p > i+1$ in $\gamma$,
$\correctx(p,i+1)$ holds in $\gamma'$.
Assume now that in
  $\gamma$, $d_p = i+1$ and $p$ has a neighbor $q$ such that $d_q \leq
  i+2$.  While $d_p = i+1$, $\mind(q) \leq i+1$, so $d_q \leq i+2$.
  Hence, we can conclude that $\correctx(p,i+1)$ still holds in $\gamma'$.

\item[Convergence.]
 We now show that for every process $p \neq \Root$,
  $\correctx(p,i+1)$ becomes true within at most one round from any
  configuration $\gamma$ of $\B(i)$. 
Assume $\|p,\Root\| > i+1$ (the other case is trivial). 
\begin{itemize}
\item If $d_p > i+1$ in $\gamma$, then $\correctx(p,i+1)$ holds.
\item If $d_p=i$ in $\gamma$, then $p$ is enabled while $d_p = i$
  because $\mind(p) \geq i$ forever from $\gamma$ (Observation
  \ref{obs:help}).  So, $p$ moves during the first round from
  $\gamma$, and we are done, by Lemma~\ref{lem:Bi-help}.
\item Assume that  $d_p = i+1$ in $\gamma$. 
\begin{itemize}
\item If $p$ moves during the first round from $\gamma$, then
  $\correctx(p,i+1)$ holds after the step, by Lemma \ref{lem:Bi-help}.
\item Assume that $p$ does not move during the first round from $\gamma$. 
  \begin{itemize}
\item If a neighbor of $p$, $q$,
 moves during a step of the round, then after this step 
$\correctx(p,i+1)$ holds because $d_q \leq i+2$.
\item Assume that neither $p$ nor its neighbors move during the round.
  So the value of $d_q$ is less than or equal to $i+2$ in $\gamma$,
  for all $q \in \neig_p$. Indeed, if some neighbor of $p$, $q$,
  satisfies $d_q > i+2$ in $\gamma$, then $q$ stay enabled along the
  round from $\gamma$, because of the state of $p$. This contradicts
  the definition of round.  Hence, the value of $d_q$ is less than or
  equal to $i+2$ in $\gamma$, for all $q \in \neig_p$, and
  consequently $\correctx(p,i+1)$ holds in $\gamma$.
\end{itemize}
\end{itemize}
\end{itemize}
\end{description}
\end{proof}

\begin{lemma}
  If $\B(i)$ is an attractor under $\algoFHC(D)$  with $D
\geq \diam$ and $0 \leq i < \diam$, then $\CD(i+1)$ is an
  attractor under $\algoFHC(D)$ from $\B(i)$
  which is reached within at most one round from any configuration of
  $\B(i)$.
\label{lem:CDi-D}
\end{lemma}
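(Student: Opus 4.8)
The plan is to follow the same pattern as the proof of Lemma~\ref{lem:Bi-D}. Since $\B(i)\subseteq\CD(i)$ and, by assumption, $\B(i)$ is an attractor (hence closed), along any execution issued from a configuration $\gamma\in\B(i)$ every process $p$ with $\|\Root,p\|\le i$ permanently satisfies $d_p=\|\Root,p\|$, and every process with $\|\Root,p\|>i+1$ satisfies $\correctc(p,i+1)$ vacuously; so the only new requirement imposed by $\CD(i+1)$ is that each process $p$ at distance exactly $i+1$ from $\Root$ have $d_p=i+1$. I would first record the key invariant: for such a $p$, $\mind(p)=i$ in every configuration reachable from $\gamma$. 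Indeed, closure of $\B(i)$ keeps every reachable configuration in $\CD(i)\cap\UBD(i)$; a neighbor $q_0$ of $p$ lying on a shortest path to $\Root$ has $\|\Root,q_0\|=i$ and hence $d_{q_0}=i$ (by $\correctc(q_0,i)$), while every neighbor $q$ of $p$ has $\|\Root,q\|\ge i$ and hence $d_q\ge i$ (by $\correctc(q,i)$ if $\|\Root,q\|=i$, by $\correctx(q,i)$ if $\|\Root,q\|>i$, and because $d_\Root=0$ in the boundary case $q=\Root$, which forces $i=0$). Thus $\mind(p)=i$ throughout any execution from $\gamma$.

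Next I would establish three elementary facts about a process $p$ with $\|\Root,p\|=i+1$ and an arbitrary configuration $\delta$ reachable from $\gamma$. \textbf{(i)} If $d_p\ne i+1$ in $\delta$, then $p$ is enabled in $\delta$: since $\correctx(p,i)$ holds we have $d_p\ge i$, so $d_p=i$ or $d_p\ge i+2$; in both cases $d_{par_p}\ge\mind(p)=i$, and if $d_{par_p}>i$ then $HC_2$ is enabled, whereas if $d_{par_p}=i$ then $\neg parOk(p)$ holds, $d_{par_p}=i<D$ (here we use $i<\diam\le D$) and $d_{par_p}=\mind(p)$, so $FHC_1$ is enabled. \textbf{(ii)} The value $i+1$ is absorbing for $d_p$: if $d_p=i+1$ in $\delta$ and $p$ moves in a step from $\delta$, it cannot fire $FHC_1$ (whose third conjunct forces $d_{par_p}=\mind(p)=i$, whence $parOk(p)$ holds, contradicting the first conjunct), so it fires $HC_2$ and resets $d_p$ to $\mind(p)+1=i+1$; and if $p$ does not move, $d_p$ is unchanged. \textbf{(iii)} As only $p$ writes $d_p$, while $d_p\ne i+1$ the value of $d_p$ stays unchanged until $p$ moves, and that move sets $d_p$ to $i+1$: if $p$ fires $HC_2$ then $d_p\gets\mind(p)+1=i+1$, and if $p$ fires $FHC_1$ then, again because its guard forces $d_{par_p}=\mind(p)=i$, we get $d_p\gets d_{par_p}+1=i+1$.

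Closure and convergence then follow. For closure, given $\gamma\in\B(i)$ with $\correctc(p,i+1)$ true for every $p$ and a step $\gamma\mapsto\gamma'$: a process at distance $\le i$ keeps $d_p=\|\Root,p\|$ because $\gamma'\in\B(i)\subseteq\CD(i)$; a process $p$ at distance $i+1$ has $d_p=i+1$ in $\gamma$, hence in $\gamma'$ by fact~(ii); so $\gamma'\in\CD(i+1)$. For convergence, fix $\gamma\in\B(i)$ and look at the first round. A process at distance $\le i$ already satisfies $\correctc(p,i+1)$, one at distance $>i+1$ satisfies it vacuously, and for $p$ at distance $i+1$: if $d_p=i+1$ in $\gamma$ it stays so by fact~(ii); otherwise, by fact~(i) $p$ is enabled in $\gamma$ and, by fact~(iii), cannot become neutralized before it executes, so the definition of a round forces it to execute during the first round, which by fact~(iii) sets $d_p=i+1$, permanently by fact~(ii). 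Hence the last configuration of the first round belongs to $\CD(i+1)$, and by the closure argument every later configuration does as well; so $\CD(i+1)$ is an attractor under $\algoFHC(D)$ from $\B(i)$, reached within one round.

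The step I expect to be the main obstacle is handling the non-mutual-exclusion of $FHC_1$ and $HC_2$, the feature inherited from the algorithm of Huang and Chen: one must verify that whichever of the two rules a distance-$(i+1)$ process happens to fire, $d_p$ is driven to exactly $i+1$ (this is precisely where the extra conjunct $d_{par_p}=\mind(p)$ of $FHC_1$ is used), and that $i+1$ is genuinely absorbing, i.e., that no firing can make such a process lose ``$d_p=i+1$''. A related subtlety is that the absence of neutralization, on which the round bound rests, requires $FHC_1$ to remain available when $par_p$ points to a minimum-distance neighbor, and hence relies on the hypothesis $D\ge\diam$ (so that $i<D$).
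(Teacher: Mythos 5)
Your proof is correct and follows essentially the same route as the paper's: establish that $\mind(p)=i$ holds forever for every process $p$ at distance $i+1$, deduce that such a $p$ stays enabled until it sets $d_p$ to $i+1$ (via either $FHC_1$ or $HC_2$), and that the value $i+1$ can no longer change. You merely spell out the rule-level case analysis (including why the extra conjunct $d_{par_p}=\mind(p)$ of $FHC_1$ forces the written value to be $i+1$) that the paper's terser proof leaves implicit.
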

\begin{proof}
  Let $p_{i+1}$ be a process at distance $i+1$ of $\Root$.  $p_{i+1}$
  has at least a neighbor $p_i$ such that $\|p_i,\Root\| = i$.  Let
  $\gamma \in \B(i)$. By definition of $ \B(i)$ and as $\B(i)$ is an
  attractor, we can deduce that $d_{p_i} = i$ and $\forall q \in
  \neig_{p_{i+1}}, d_q \geq i$ forever from $\gamma$. So, from
  $\gamma$ $\mind(p_{i+1}) = i$ forever.

 Consequently, if $d_{p_{i+1}} \neq i+1$ in $\gamma$, then $p_{i+1}$
 is enabled to execute $FHC_1$ or $HC_2$ to set $d_{p_{i+1}}$ to
 $i+1$.

 Moreover, if $d_{p_{i+1}} = i+1$ in $\gamma$, then $p_{i+1}$ cannot
 modify $d_{p_{i+1}}$ in any step from $\gamma$.

 Hence, $\CD(i+1)$ is an attractor under $\algoFHC(D)$ from $\B(i)$ which is reached within at most one round
 from any configuration of $\B(i)$.
\end{proof}

From the two previous lemmas, we can deduce the following corollary.

\begin{corollary}\label{coro:D}
  If $\B(i)$ is an attractor under  $\algoFHC(D)$  with $D
\geq \diam$ and $0 \leq i < \diam$, then $\B(i+1)$ is an attractor
  under $\algoFHC(D)$ from $\B(i)$ which is
  reached within at most one round from any configuration of $\B(i)$.
\label{cor:Bi}
\end{corollary}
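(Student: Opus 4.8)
The plan is to prove Corollary~\ref{cor:Bi} by simply combining Lemmas~\ref{lem:Bi-D} and~\ref{lem:CDi-D}, both of which are already available under the stated hypothesis that $\B(i)$ is an attractor under $\algoFHC(D)$ with $D\geq\diam$ and $0\leq i<\diam$. Recall that $\B(i+1)=\CD(i+1)\cap\UBD(i+1)$ by definition, so it suffices to establish convergence to and closure of this intersection. For convergence, Lemma~\ref{lem:Bi-D} tells us that $\UBD(i+1)$ is reached within at most one round from any configuration of $\B(i)$, and Lemma~\ref{lem:CDi-D} tells us that $\CD(i+1)$ is reached within at most one round from any configuration of $\B(i)$; taken together, after one round from any $\gamma\in\B(i)$ the configuration lies in $\UBD(i+1)\cap\CD(i+1)=\B(i+1)$.

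For closure, one observes that the closure arguments inside the proofs of Lemmas~\ref{lem:Bi-D} and~\ref{lem:CDi-D} actually show something slightly stronger than convergence alone: $\UBD(i+1)$ and $\CD(i+1)$ are each individually closed under any step taken from a configuration of $\B(i)$ (equivalently, from a configuration of $\B(i+1)\subseteq\B(i)$). Hence their intersection $\B(i+1)$ is closed as well. The main (and only) subtlety to check is that the steps involved stay within the scope where Lemmas~\ref{lem:Bi-D} and~\ref{lem:CDi-D} apply: since $\B(i)$ is an attractor, closure of $\B(i)$ guarantees that every configuration along any execution from $\B(i)$ still belongs to $\B(i)$, so both lemmas remain applicable at every step, and the ``reached within at most one round'' bounds compose to a single round.

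I expect the main obstacle to be purely bookkeeping: making sure the one-round bounds from the two lemmas genuinely compose into one round rather than two. The point is that both $\UBD(i+1)$ and $\CD(i+1)$ are attained \emph{simultaneously} within the same (first) round from $\B(i)$ — each lemma reasons about the same round of the same execution — rather than sequentially, so there is no doubling. Once this is spelled out, the corollary follows immediately; the proof is essentially a one-paragraph conjunction of the two preceding results together with the closure of $\B(i)$.

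\begin{proof}
  By definition, $\B(i+1) = \CD(i+1) \cap \UBD(i+1)$.  Let $\gamma$ be
  any configuration of $\B(i)$ and let $e = \gamma, \gamma_1,
  \gamma_2, \ldots$ be any execution starting from $\gamma$.  Since
  $\B(i)$ is an attractor, by closure every configuration of $e$
  belongs to $\B(i)$; hence Lemmas~\ref{lem:Bi-D}
  and~\ref{lem:CDi-D} apply to every step of $e$.

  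\textbf{Convergence.}  By Lemma~\ref{lem:Bi-D}, $\UBD(i+1)$ is
  reached within at most one round from $\gamma$, and by
  Lemma~\ref{lem:CDi-D}, $\CD(i+1)$ is reached within at most one
  round from $\gamma$.  Both statements refer to the first round of
  $e$; therefore, at the end of the first round of $e$ the reached
  configuration belongs to $\UBD(i+1) \cap \CD(i+1) = \B(i+1)$.  Hence
  $\B(i+1)$ is reached within at most one round from $\gamma$.

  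\textbf{Closure.}  Let $\gamma' \in \B(i+1) \subseteq \B(i)$ and let
  $\gamma' \mapsto \gamma''$ be any step.  As shown in the closure
  part of the proof of Lemma~\ref{lem:Bi-D}, for every process $p \neq
  \Root$, if $\correctx(p,i+1)$ holds in $\gamma'$ then it holds in
  $\gamma''$; hence $\gamma'' \in \UBD(i+1)$.  Similarly, as shown in
  the proof of Lemma~\ref{lem:CDi-D}, every process $p_{i+1}$ at
  distance $i+1$ from $\Root$ has $\mind(p_{i+1}) = i$ forever from
  $\gamma'$, so once $d_{p_{i+1}} = i+1$ it cannot change; moreover
  every process at distance at most $i$ already has the correct
  distance value by $\B(i)$ and keeps it by closure of $\B(i)$.
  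Hence $\correctc(p,i+1)$ holds in $\gamma''$ for every process $p$,
  that is, $\gamma'' \in \CD(i+1)$.  Therefore $\gamma'' \in \CD(i+1)
  \cap \UBD(i+1) = \B(i+1)$, which establishes closure.

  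Consequently, $\B(i+1)$ is an attractor under $\algoFHC(D)$ from
  $\B(i)$, reached within at most one round from any configuration of
  $\B(i)$.
\end{proof}
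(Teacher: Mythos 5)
Your proof is correct and follows exactly the route the paper intends: the paper states this corollary with no further argument beyond ``from the two previous lemmas,'' i.e., intersecting the attractors $\UBD(i+1)$ and $\CD(i+1)$ given by Lemmas~\ref{lem:Bi-D} and~\ref{lem:CDi-D}. Your write-up merely makes explicit the bookkeeping (simultaneity of the one-round bounds, $\B(i+1)\subseteq\B(i)$ for closure) that the paper leaves implicit.
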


The previous corollary establishes that after at most $\diam$ rounds,
the distance value in every process is accurate forever. We now show
one additional is necessary to fix the $par$-variables.

\begin{lemma}\label{plus1}
$\LA(\diam)$ is an attractor under $\algoFHC(D)$ (with $D
\geq \diam$) from $\B(\diam)$ which is 
 reached within at most one round from 
any configuration of $\B(\diam)$.
\label{lem:LDiam-HC}
\end{lemma}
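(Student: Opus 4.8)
The plan is to prove the two defining properties of an attractor separately, after first pinning down the behaviour of the $d$-variables on executions issued from $\B(\diam)$. Recall that, since $\|\Root,p\|\le\diam$ for every process, $\correctc(p,\diam)$ just says $d_p=\|\Root,p\|$; hence in any $\gamma\in\B(\diam)=\CD(\diam)\cap\UBD(\diam)$ all distance values are already correct. I would start by showing that they stay correct forever: if $d_q=\|\Root,q\|$ for all $q$, then $\mind(p)=\|\Root,p\|-1$ for every $p\neq\Root$, so $\update(p)$, the action of $HC_2$, resets $d_p$ to $\mind(p)+1=\|\Root,p\|$, i.e. does not change it, while $FHC_1$ is disabled at $p$ because $d_{par_p}=\mind(p)=\|\Root,p\|-1$ together with $d_p=\|\Root,p\|$ force $parOk(p)$, contradicting the guard of $FHC_1$. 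By induction on the length of the execution, every configuration reachable from $\gamma$ still satisfies $d_p=\|\Root,p\|$ for all $p$; in particular each $\mind(p)$ is constant and equal to $\|\Root,p\|-1$ along the whole execution.

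For \textbf{closure} I would simply observe that $\correctb(p,\diam)$ reduces to $d_p=\|\Root,p\|=d_{par_p}+1$, so $\LA(\diam)$ is exactly the set of legitimate configurations, which are terminal by Theorem~\ref{legi}; since no process is enabled in a terminal configuration, every execution starting in $\LA(\diam)$ stays there.

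For \textbf{convergence} I would fix $\gamma\in\B(\diam)$, an arbitrary execution from $\gamma$, and a non-root process $p$, and distinguish two cases. If $parOk(p)$ holds in $\gamma$ then $d_{par_p}=d_p-1=\|\Root,p\|-1=\mind(p)$, so both $FHC_1$ and $HC_2$ are disabled at $p$; since the $d$-variables are frozen and $par_p$ can only be changed by $p$ itself, $p$ remains disabled and $\correctb(p,\diam)$ holds from $\gamma$ on. If $\neg parOk(p)$ holds in $\gamma$ then $d_{par_p}\neq\|\Root,p\|-1$ while $d_{par_p}\ge\mind(p)=\|\Root,p\|-1$, hence $d_{par_p}>\mind(p)$ and $HC_2$ is the unique rule enabled at $p$; this guard cannot become false unless $p$ itself rewrites $par_p$ (because $\mind(p)$ and all $d$-values are constant), so $p$ cannot be neutralized and must execute $HC_2$ within the first round. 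After that execution $par_p$ points to some $q\in\neig_p$ with $d_q=\mind(p)=\|\Root,p\|-1$, so $parOk(p)$ holds, $p$ becomes disabled as in the first case, and $\correctb(p,\diam)$ holds thereafter. Thus at the end of the first round from $\gamma$ every $p\neq\Root$ satisfies $\correctb(p,\diam)$, i.e. the configuration belongs to $\LA(\diam)$.

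The whole argument is light once the invariance of the $d$-variables on executions from $\B(\diam)$ is available; the only subtle point — and the one I would be careful to argue explicitly — is that a process enabled for $HC_2$ in such an execution cannot be neutralized, which is exactly where the constancy of $\mind(p)$ is needed.
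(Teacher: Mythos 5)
Your proposal is correct and follows essentially the same route as the paper's proof: first freeze the $d$-variables on any execution issued from $\B(\diam)$ (so that $\mind(p)=\|\Root,p\|-1$ is constant), then observe that any $p$ with $d_{par_p}\neq\mind(p)$ has $HC_2$ continuously enabled and hence fires within one round, after which it is disabled forever. Your extra details (closure via Theorem~\ref{legi} and the explicit non-neutralization argument) only make explicit what the paper leaves implicit.
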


\begin{proof}
  In any configuration of $\B(\diam)$, $d_p = \mind(p) +1$ holds
  forever for every process $p$.  So, the distance value of any
  process stays unchanged along any execution of $\algoFHC(D)$ from a
  configuration of $\B(\diam)$.

  Let $\gamma$ be a configuration of $\B(\diam)$ where $d_{par_p} \neq
  \mind(p)$.  The rule $HC_2$ is enabled at $p$ until $p$ executes it.
  After the execution of this rule, we have $d_p = d_{par_p}+1$.  As
  no process changes its distance value in $\B(\diam)$, $p$ is become
  disabled forever.

  Hence, we conclude that $\LA(\diam)$ is an attractor which is reached
  within at most one round from $\B(\diam)$.
\end{proof}

From Corollary~\ref{coro:D} and Lemma~\ref{plus1}, we have the
following theorem:
\begin{theorem}
For every $D \geq \diam$, the stabilization time of
$\algoFHC(D)$ is at most $\diam+1$ rounds.
\end{theorem}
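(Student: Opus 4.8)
The plan is to assemble the nested attractors built in this subsection into a single sequence of length $\diam+1$, starting from the set of all configurations and ending in $\LA(\diam)$, and then to invoke the partial correctness results to conclude that $\LA(\diam)$ consists of terminal configurations. First I would note that $\B(0)$ is trivially an attractor under $\algoFHC(D)$, since $\B(0)=\mathcal C$ is the set of all possible configurations: convergence and closure both hold vacuously, and every execution is ``in'' $\B(0)$ already at round $0$. Then, by induction on $i$ using Corollary~\ref{coro:D}, I would show that for every $i$ with $0\le i\le\diam$, $\B(i)$ is an attractor under $\algoFHC(D)$ and every execution reaches a configuration of $\B(i)$ within at most $i$ rounds. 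The inductive step is exactly Corollary~\ref{coro:D}: if $\B(i)$ is an attractor reached within at most $i$ rounds and $i<\diam$, then $\B(i+1)$ is an attractor reached within at most one further round from $\B(i)$, hence within at most $i+1$ rounds overall.

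Applying the induction with $i=\diam$ yields that $\B(\diam)$ is an attractor under $\algoFHC(D)$ reached within at most $\diam$ rounds from any configuration. Next I would invoke Lemma~\ref{plus1}, which states that $\LA(\diam)$ is an attractor under $\algoFHC(D)$ from $\B(\diam)$, reached within at most one round from any configuration of $\B(\diam)$. Concatenating this with the previous step gives that every execution of $\algoFHC(D)$ reaches a configuration of $\LA(\diam)$ within at most $\diam+1$ rounds. Finally, as recalled at the beginning of Section~\ref{sub:sec-cvHC}, every configuration of $\LA(\diam)$ is legitimate, and every legitimate configuration is terminal in $\algoFHC(D)$ by Theorem~\ref{legi}; hence every execution reaches a terminal configuration within at most $\diam+1$ rounds, which is precisely the claimed bound on the stabilization time.

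The main point requiring care — more a bookkeeping obligation than a genuine difficulty — is making the round accounting additive along the sequence. To apply Corollary~\ref{coro:D} at the step $i\to i+1$ one needs an execution whose initial configuration lies in $\B(i)$; this is supplied by the closure part of the attractor statement for $\B(i)$, which guarantees that once an execution enters $\B(i)$ at the end of round $i$ it remains in $\B(i)$, so the relevant suffix is indeed an execution starting in $\B(i)$, and its first round is a sub-interval of round $i+1$ of the original execution. Granting that, the argument is just a straightforward concatenation of the already-established lemmas, so I would keep this part brief.
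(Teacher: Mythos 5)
Your proposal is correct and follows essentially the same route as the paper: the paper derives this theorem directly by chaining Corollary~\ref{coro:D} (inductively, from $\B(0)=\mathcal C$ up to $\B(\diam)$ in $\diam$ rounds) with Lemma~\ref{plus1} (one more round to reach $\LA(\diam)$), and then using that $\LA(\diam)$ consists of legitimate, hence terminal, configurations. Your extra care about the additivity of the round count via the closure property is a correct and welcome piece of bookkeeping that the paper leaves implicit.
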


\subsubsection{Lower Bound on Stabilization Time in Rounds of $\algoFHC(D)$}
Below, we show that the upper bound given in the previous theorem is
exact when $D =\diam$: $\forall \diam \geq 2$, there exists an
execution of $\algoHC(\diam)$ in a graph of diameter $\diam$ that
stabilizing in $\diam+1$ rounds.

We consider any graph $G = (V,E)$ of $\diam+2$ nodes of diameter $\diam \geq 1$, where
\begin{itemize}
\item $V = \{p_0=\Root, \ldots, p_{\diam+1}\}$, and
\item $E = \{\{p_i,p_{i+1}\}| i \in [0..\diam]\} \cup \{\{p_{\diam+1},p_{\diam-1}\}\}$.
\end{itemize}
We consider a synchronous execution ({\em i.e.} an execution where the
distributed unfair daemon activates all enabled processes at each
step) which starts from the following initial configuration:
\begin{itemize}
\item $d_\Root = 0$,
\item $\forall i \in [1..\diam-2], par_{p_i} = p_{i-1} \wedge d_{p_i} = \diam$,
\item $par_{p_{\diam-1}} = p_{\diam+1} \wedge d_{p_{\diam-1}} = \diam$, 
\item $par_{p_{\diam}} = p_{\diam+1} \wedge d_{p_{\diam}} = \diam$, and
\item $par_{p_{\diam+1}} = p_{\diam-1} \wedge d_{p_{\diam}} = \diam-1$.
\end{itemize}
An example of initial configuration is given in Figure \ref{fig:execAlgoFHC}.
Notice that in a synchronous execution, every round lasts one step. 
\begin{itemize}
\item At each round $i \in [1..\diam-2]$, $p_i$ executes $FHC_1$ to
  change $d_{p_i}$ to $i$.
\item At the $\diam-1$th round, $p_{\diam - 1}$ executes $HC_2$ to set
  $d_{p_{\diam - 1}}$ to $\diam-1$ and $par_{p_{\diam-1}}$ to
  $p_{\diam - 2}$.
\item At the $\diam$th round, $p_{\diam + 1}$ executes $FHC_1$ to set
  $d_{p_{\diam + 1}}$ to $\diam$.
\item At the $\diam+1$th round, $p_{\diam}$ executes $HC_2$ to set
  $par_{p_{\diam}}$ to $p_{\diam - 1}$.
\end{itemize}

Hence, we can conclude with the theorem below.

\begin{theorem}
  The worst case stabilization time of $\algoFHC(\diam)$ is $\diam+1$ rounds.
\end{theorem}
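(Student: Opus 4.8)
The plan is to exhibit a concrete worst-case execution achieving exactly $\diam+1$ rounds, thereby showing the upper bound of the previous theorem is tight for $D=\diam$. Since the upper bound $\diam+1$ is already established, it suffices to construct, for every $\diam \geq 2$, a graph of diameter $\diam$ together with an initial configuration and a daemon schedule such that the first legitimate (hence terminal) configuration is reached only after $\diam+1$ full rounds. The graph I would use is the ``path with a triangle at the far end'': $V = \{p_0 = \Root, \ldots, p_{\diam+1}\}$ with the path edges $\{p_i, p_{i+1}\}$ for $i \in [0..\diam]$ plus the chord $\{p_{\diam+1}, p_{\diam-1}\}$, exactly as described above. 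One first checks this graph indeed has diameter $\diam$: the eccentricity of $\Root$ is $\diam$ (both $p_\diam$ and $p_{\diam+1}$ are at distance $\diam$, since the chord gives $p_{\diam+1}$ the shorter route through $p_{\diam-1}$), and no pair is farther apart than $\diam$.

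Next I would fix the initial configuration as listed (every $p_i$ with $1 \le i \le \diam$ has $d = \diam$, $p_{\diam+1}$ has $d = \diam-1$, with the indicated parents), and run the \emph{synchronous} daemon, noting that under synchrony every round consists of exactly one step, so counting rounds reduces to counting steps. Then I would verify the claimed sequence of moves step by step: during rounds $1$ through $\diam-2$, process $p_i$ (and only $p_i$) is enabled via $FHC_1$ because its parent $p_{i-1}$ has just acquired the correct distance $i-1$ while $d_{p_i}$ is still stale, so $p_i$ sets $d_{p_i} \gets i$; at round $\diam-1$, $p_{\diam-1}$ has $d_{p_{\diam-1}} = \diam$ but now $\mind(p_{\diam-1}) = d_{p_{\diam-2}} = \diam-2 < d_{par_{p_{\diam-1}}}$, so $HC_2$ fires and it corrects both its distance (to $\diam-1$) and its parent (to $p_{\diam-2}$); at round $\diam$, $p_{\diam+1}$ sees its parent $p_{\diam-1}$ now at distance $\diam-1$ and fires $FHC_1$ to set $d_{p_{\diam+1}} \gets \diam$; finally at round $\diam+1$, $p_\diam$ still points to $p_{\diam+1}$ whose distance only just became $\diam$, so $parOk(p_\diam)$ was false and $HC_2$ fires to redirect $par_{p_\diam} \gets p_{\diam-1}$ (its distance $\diam$ was already correct). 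After this last step the configuration satisfies Definition~\ref{def:legi}, so it is legitimate and terminal by Theorem~\ref{legi}.

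The key point making this a genuine lower bound is that at each of the first $\diam$ rounds \emph{exactly one} process is enabled — the ``wave'' of correct distances propagates one hop per round down the path — so the daemon has no freedom to accelerate convergence, and the final parent fix of $p_\diam$ genuinely cannot happen before round $\diam+1$ because the guard of $HC_2$ at $p_\diam$ only becomes satisfiable once $p_{\diam+1}$'s distance has stabilized at round $\diam$. I would make sure to spell out, at each round, why no other process is enabled (e.g., the chord node $p_{\diam+1}$ is quiescent during rounds $1$ through $\diam-1$ because it already satisfies $parOk$ and $\mind$ is correct relative to its — also stale — parent, so neither $FHC_1$ nor $HC_2$ is enabled). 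The main obstacle is precisely this careful case analysis of guard evaluations at the ``junction'' nodes $p_{\diam-1}, p_\diam, p_{\diam+1}$ across consecutive configurations; the path portion is routine, but the triangle at the end is where one must be meticulous to confirm the schedule is forced and that the terminal configuration is not reached one round sooner. Combining the construction with the upper bound $\diam+1$ rounds from the preceding theorem yields that $\diam+1$ rounds is exactly the worst-case stabilization time of $\algoFHC(\diam)$.
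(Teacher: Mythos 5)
Your proposal is correct and is essentially the paper's own argument: the same $(\diam+2)$-node graph (path plus the chord $\{p_{\diam+1},p_{\diam-1}\}$), the same initial configuration, and the same synchronous schedule in which the distance wave takes $\diam-1$ rounds to reach $p_{\diam-1}$, then $p_{\diam+1}$ and finally $p_{\diam}$ each need one more round. Your added guard-by-guard verification at the triangle nodes is exactly the detail the paper leaves implicit, so no gap remains.
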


\begin{figure}
\begin{centering}
\scalebox{0.7}{\input{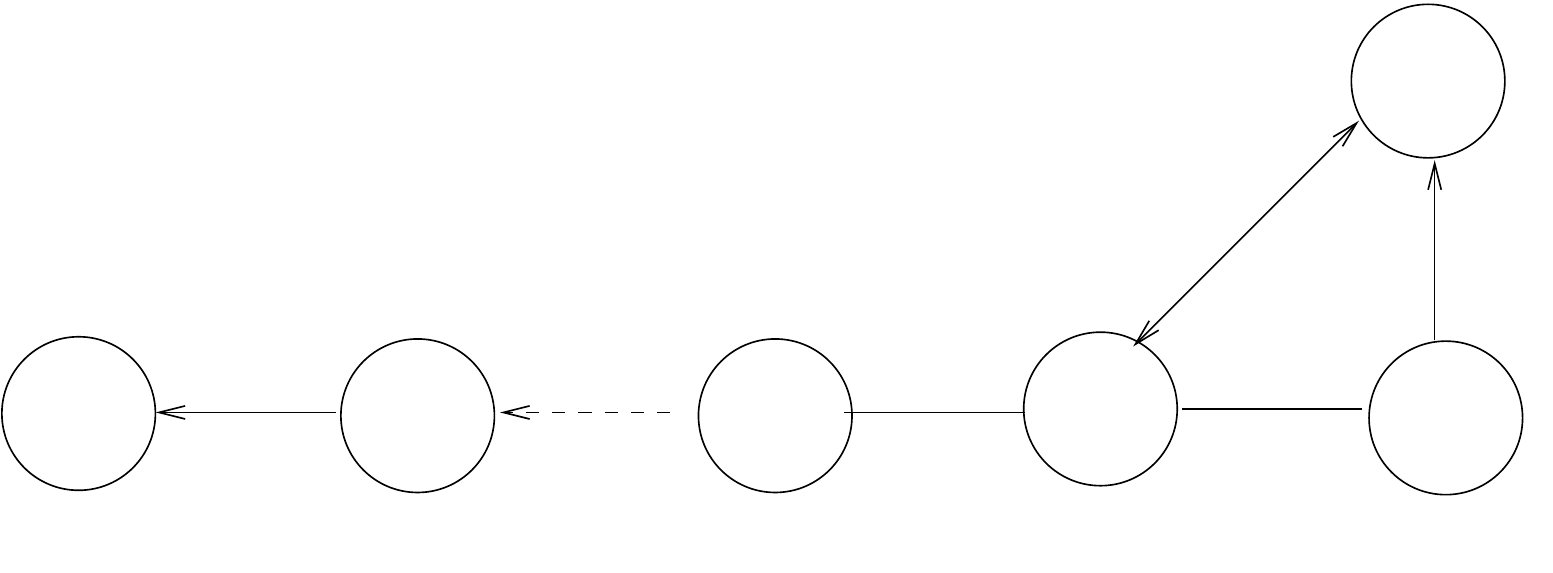_t}}
\caption{Initial configuration of a synchronous execution of
  $\algoFHC(\diam)$ which stabilizes in $\diam+1$ rounds}
\label{fig:execAlgoFHC}
\end{centering}
\end{figure}

\subsection{Algorithms $\algoU$ and $\algoB(D)$}

We now establish that the stabilization time of both $\algoU$ and
$\algoB(D)$ is exactly $\diam$ rounds in the worst case. 

\subsubsection{Upper bound on the Stabilization Time in Rounds
for both $\algoU$ and $\algoB(D)$}
We first establish that the stabilization time of both $\algoU$ and
$\algoB(D)$ is at most $\diam$ rounds in the worst case.
To that goal, we use the predicate $\correcta(p,i)$ defined below:

$$\correcta(p,i) \equiv (\|\Root,p\| > i \Rightarrow d_p > i)$$

$\correcta(p,i)$ means that $d_p$ must be larger than $i$ if the
process $p$ is at a distance larger than $i$ from $\Root$.

We will also use the following sets:
\begin{itemize}
\item Let $\UB(i)$ be the set of  configurations
where every process $p$ satisfies $\correcta(p,i))$. 
\item Let $\A(i) = \LA(i) \cap \UB(i)$.
\end{itemize}
Notice that all configurations belong to  $\UB(0)$.

\begin{lemma}
  If $\A(i)$ is an attractor under $\algoU$ (resp. $\algoB(D)$ where
  $D \geq \diam$) with $0 \leq i < \diam$, then $\LA(i+1)$ is an
  attractor under $\algoU$ (resp. $\algoB(D)$) from $\A(i)$ which is
  reached within at most one round from any configuration of $\A(i)$.
\label{lem:Ai}
\end{lemma}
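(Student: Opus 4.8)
The plan is to prove both closure and convergence for the set $\LA(i+1)$, assuming $\A(i)$ is an attractor. For the whole argument I work with an arbitrary execution starting in a configuration $\gamma \in \A(i)$; since $\A(i) = \LA(i) \cap \UB(i)$ is an attractor, every configuration reached from $\gamma$ is again in $\A(i)$, so I may freely use $\correctb(q,i)$ and $\correcta(q,i)$ at every step. The first thing I would record is the analogue of Observation~\ref{obs:help} for these algorithms: if $\|p,\Root\| > i$, then $d_p \geq i+1$ and, for a neighbor $q$ with $\|q,\Root\| = \|p,\Root\|-1$, either $q = \Root$ (so $i = 0$) or $\correctb(q,i)$/$\correcta(q,i)$ pins down $d_q$; in all cases $\mind(p) \geq i$ forever from $\gamma$, and more precisely a neighbor on a shortest path to $\Root$ has distance exactly $\|p,\Root\|-1$ whenever $\|p,\Root\| \leq i+1$.

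For \textbf{closure}: take a step $\gamma' \mapsto \gamma''$ with $\gamma' \in \LA(i+1) \cap \A(i)$, and show each $p \neq \Root$ with $\|p,\Root\| \leq i+1$ still satisfies $\correctb(p,i+1)$ in $\gamma''$. The case $\|p,\Root\| \leq i$ is handled by closure of $\LA(i)$ (part of $\A(i)$ being an attractor), so the only new case is $\|p,\Root\| = i+1$. In $\gamma'$ we have $d_p = i+1$, $par_p$ is a neighbor $r$ with $d_r = i = \|r,\Root\|$. I would argue $p$ is disabled in $\gamma'$: for $\algoU$, $dOk(p)$ holds since $\mind(p) = i$ by the observation and $d_p = i+1$, and $parOk(p)$ holds since $d_r = i$; for $\algoB(D)$ additionally $\mind(p) = i < \diam \leq D$ kills $B_3$. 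Then $p$ does not move. A neighbor could move, but any neighbor $q$ of $p$ with $d_q = i$ has $\|q,\Root\| \leq i+1 \leq \diam$ wait — more carefully, a neighbor $q$ with $d_q$ currently $i$ and $q$ the parent: if $q$ moves it can only increase towards its correct value, but by $\correcta(q,i)$ combined with $\correctb$, $q$'s distance is already correct when $d_q \le i$, so $d_q$ stays $= i$; hence $par_p = q$ keeps $parOk(p)$ true and $d_p$ unchanged, so $\correctb(p,i+1)$ survives. This is the delicate bookkeeping step.

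For \textbf{convergence}: fix $p$ with $\|p,\Root\| = i+1$ (the case $\|p,\Root\| \leq i$ is immediate from $\gamma \in \LA(i) \subseteq$ the target, and $\|p,\Root\| > i+1$ is vacuous). By the observation, $\mind(p) \geq i$ forever, and $p$'s parent-on-shortest-path neighbor $r$ has $d_r = i$ fixed forever, so $\mind(p) = i$ forever from $\gamma$. Hence the ``correct'' state of $p$ is $d_p = i+1, par_p = r'$ for some neighbor $r'$ with $d_{r'} = i$, and this is exactly the terminal state for $p$ in all three algorithms once neighbors are settled. Now: if $p$ is enabled in $\gamma$, then by $\correcta(p,i)$ (so $d_p \geq i+1$) together with $\mind(p) = i$, the only enabled rule is $U_1$/$B_1$ if $d_p > i+1$ or $U_2$/$B_2$ if $d_p = i+1$ but $par_p$ wrong; executing either yields $d_p = i+1$ and $parOk(p)$, i.e. $\correctb(p,i+1)$, and $p$ is then disabled forever (its $\mind$ is constant). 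If $p$ is disabled in $\gamma$, then $dOk(p) \wedge parOk(p)$ with $\mind(p)=i$ forces $d_p = i+1$ and $d_{par_p} = i$, so $\correctb(p,i+1)$ already holds. In either case, within the first round from $\gamma$ (during which $p$ either acts or is neutralized — but $p$ cannot be neutralized here since its enabling conditions depend only on quantities that stay fixed) $\correctb(p,i+1)$ becomes and stays true. The main obstacle I anticipate is the closure argument's neighbor-movement analysis: I must rule out a neighbor $q$ of $p$ with $\|q,\Root\| = i+1$ momentarily dropping $d_q$ below $i+1$ and thereby letting $p$ point at a too-small distance — this is exactly where $\correcta(q,i)$ (from membership in $\A(i)$) is essential, since it forbids $d_q < i+1$ for such $q$, and this is the structural reason the auxiliary predicate $\correctx$/$\correcta$ was introduced alongside $\correctb$.
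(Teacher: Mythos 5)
Your proof is correct and follows essentially the same route as the paper's: you isolate the processes at distance exactly $i+1$, use the attractor property of $\A(i)$ to pin down $\mind(p)=i$ forever and to show that any neighbor with $d$-value $i$ is genuinely at distance $i$ (so its $d$-value is frozen), and then derive closure and one-round convergence of $\correctb(\cdot,i+1)$ exactly as in Lemma~\ref{lem:Ai}. The only cosmetic remark is that the claim that $p$'s ``enabling conditions depend only on quantities that stay fixed'' should be phrased as ``$d_{par_p}$, once different from $i$, remains different from $i$'' (which your own invariants already give you), since $d_{par_p}$ itself need not be constant.
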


\begin{proof}
  Let $p_{i+1}$ be a process at distance $i+1$ of $\Root$. By
  definition, $p_{i+1}$ has at least one neighbor $p_i$ at distance
  $i$ of $\Root$.  As $\A(i)$ is an attractor, from any configuration
  of $\A(i)$, the three following conditions hold forever: $(i)$
  $d_{p_i}= i$, $(ii)$ $\mind(p_{i+1}) = i < D$, and $(iii)$ for every
  process $q$, $d_q = i \Rightarrow \| q, \Root \| = i$.

  Let $\gamma \mapsto \gamma'$ be a possible step such that $\gamma$
  is a configuration of $\A(i)$.  We first show that for every process
  $p$, if $\correctb(p_{i+1},i+1)$ holds in $\gamma$, then
  $\correctb(p_{i+1},i+1)$ holds in $\gamma'$.  From $\gamma$,
  $d_{par_{p_{i+1}}} = i$, so $d_{par_{p_{i+1}}}$ is no more modified by
  $(iii)$. In $\gamma$, $d_{p_{i+1}} = i+1$ and $d_{p_{i+1}}$ is no
  more modified by $(ii)$. Hence, $d_{p_{i+1}} = d_{par_{p_{i+1}}}+1$
  and $p$ is disabled forever from $\gamma$. Hence,
  $\correctb(p_{i+1},i+1)$ still holds in $\gamma'$.

  We now show that for every process $p$, $\correctb(p_{i+1},i+1)$
  becomes true within at most one round from any configuration
  $\gamma$ of $\A(i)$. Assume that $d_{p_{i+1}} \neq i+1$ or
  $d_{par_{p_{i+1}}} \neq i$ in $\gamma$. Then, $p_{i+1}$ is enabled
  in $\algoU$ (resp. $\algoB(D)$) until it executes an action, by
  $(i)$ and $(ii)$. Moreover, after $p_{i+1}$ move, we have
  $d_{p_{i+1}} = i +1 = d_{par_{p_{i+1}}} +1$, by $(i)$ and
  $(ii)$. Hence, $\correctb(p_{i+1},i+1)$ becomes true within at most
  one round from $\gamma$.
\end{proof}

\begin{lemma}
  If $\A(i)$ is an attractor under $\algoU$ (resp. $\algoB(D)$ where
  $D \geq \diam$) with $0 \leq i < \diam$, then $\UB(i+1)$ is an
  attractor under $\algoU$ (resp. $\algoB(D)$) from $\A(i)$ which is
  reached within at most one round from any configuration of $\A(i)$.
\label{lem:UBi}
\end{lemma}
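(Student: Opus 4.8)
The plan is to follow the pattern of the proof of Lemma~\ref{lem:Ai}, arguing process by process and splitting the argument into a closure part and a convergence part. First, dispose of the trivial case: if $i+1=\diam$, then for every process $p$ the predicate $\correcta(p,\diam)$ is vacuously true (its premise $\|\Root,p\|>\diam$ never holds), so $\UB(\diam)$ is the set of all configurations and there is nothing to prove. Hence we may assume $i+1<\diam$ and fix a process $p$ with $\|\Root,p\|>i+1$; every process not satisfying this inequality satisfies $\correcta(\cdot,i+1)$ vacuously, so it suffices to handle such a $p$.

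The key preliminary observation is that, along any execution issued from a configuration of $\A(i)$, $\mind(p)\ge i+1$ holds forever. Indeed, every neighbor $q$ of $p$ satisfies $\|\Root,q\|\ge\|\Root,p\|-1\ge i+1>i$; since $\A(i)\subseteq\UB(i)$ and $\A(i)$ is an attractor, hence closed, $\correcta(q,i)$ holds in every configuration reachable from $\A(i)$, so $d_q>i$, i.e.\ $d_q\ge i+1$, whence $\mind(p)\ge i+1$. A second observation: by the code, $d_p$ can be changed only by $\update(p)$ (rule $U_1$ in $\algoU$, rule $B_1$ in $\algoB(D)$), which sets it to $\mind(p)+1\ge i+2$, or by rule $B_3$, which sets it to $D\ge\diam>i+1$. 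Thus any modification of $d_p$ starting from $\A(i)$ leaves $d_p\ge i+2$, which immediately gives closure: if $\correcta(p,i+1)$ holds in a configuration $\gamma$ reachable from $\A(i)$ (equivalently, $d_p>i+1$ in $\gamma$), then it still holds in the next configuration, hence in all subsequent ones.

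For convergence, take any $\gamma\in\A(i)$. Since $\A(i)\subseteq\UB(i)$ and $\|\Root,p\|>i$, we have $d_p>i$ in $\gamma$, i.e.\ $d_p\ge i+1$. If $d_p>i+1$, we are done by the closure part. Otherwise $d_p=i+1$; then $d_p=i+1\ne\mind(p)+1$ because $\mind(p)\ge i+1$, so $\neg dOk(p)$ holds. Consequently $U_1$ is enabled in $\algoU$, and in $\algoB(D)$ exactly one of $B_1$ (when $\mind(p)<D$) and $B_3$ (when $\mind(p)=D$, noting $d_p=i+1<D$) is enabled, whereas $U_2$ and $B_2$ are disabled, since they require $dOk(p)$. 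As $d_p$ can change only when $p$ itself moves, the situation $d_p=i+1$---and with it the fact that $p$ is enabled---persists until $p$ moves; hence $p$ cannot become neutralized and, by definition of a round, $p$ executes an action within the first round from $\gamma$. That first action is $U_1$ (resp.\ $B_1$ or $B_3$), after which $d_p\ge i+2$, so $\correcta(p,i+1)$ holds and, by closure, keeps holding. Therefore $\UB(i+1)$ is reached within at most one round from any configuration of $\A(i)$ and is closed along executions from $\A(i)$, i.e.\ it is an attractor under $\algoU$ (resp.\ $\algoB(D)$) from $\A(i)$.

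The only delicate point---the expected main obstacle---is the bookkeeping for $\algoB(D)$: one must check that rule $B_3$ cannot violate the bound (it raises $d_p$ to $D\ge i+2$), that $\neg dOk(p)$ forces exactly one of $B_1$ and $B_3$ to be enabled when $d_p=i+1$, and that $p$ genuinely stays enabled until it moves, so that the round argument applies. The unbounded version $\algoU$ is then a strict specialization of this reasoning.
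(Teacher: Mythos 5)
Your proof is correct and follows essentially the same route as the paper's: show that $\mind(p) \geq i+1$ holds forever from $\A(i)$, deduce closure from the fact that any update of $d_p$ sets it to at least $i+2$, and deduce convergence from the fact that a process with $d_p = i+1$ remains enabled until it moves and hence acts within one round. Your write-up is in fact slightly more careful than the paper's, which lists $B_2$ instead of $B_3$ as the possible rule for the case $\mind(p)=D$ and states the post-move bound as $d_p \geq i+1$ where $d_p > i+1$ is what is needed.
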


\begin{proof}
  Let $p$ be a process such that $\|p,\Root\| > i+1$. In this case, we
  have $i+2 \leq \diam \leq D$.

  In any configuration of $\A(i)$, we have $d_p > i$ and $d_q > i$ for
  any neighbor $q$ of $p$ by definition of $\A(i)$. So, starting from
  any configuration $\gamma$ of $\A(i)$, $\mind(p) > i$ holds forever.

  Hence, if $d_p > i +1$ in $\gamma$, then $d_p > i
  +1$ forever from $\gamma$, which implies that if $\UB(i+1)$ holds in
  $\gamma$, then $\UB(i+1)$ holds forever from $\gamma$.

  Assume now that $d_p = i + 1$ in $\gamma$. Then, $d_p = i + 1 < D$
  and, as $\mind(p) > i$ holds forever from $\gamma$, $p$ is
  continuously enabled from $\gamma$ until it executes either $U_1$ in
  $\algoU$, or $B_j, j \in \{1,2\}$ in $\algoB(D)$. After $p$ move,
  $d_p \geq i +1$. Hence, $\UB(i+1)$ holds within at most one round
  from $\gamma$. 
\end{proof}

\begin{corollary}
  If $\A(i)$ is an attractor under $\algoU$ (resp. of $\algoB(D)$
  where $D \geq \diam$) with $0 \leq i < \diam$, then $\A(i+1)$ is an
  attractor under $\algoU$ (resp. $\algoB(D)$) from $\A(i)$ which is
  reached within at most one round from any configuration of $\A(i)$.
\label{cor:Ai}
\end{corollary}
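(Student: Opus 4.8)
The plan is to derive Corollary~\ref{cor:Ai} directly from Lemmas~\ref{lem:Ai} and~\ref{lem:UBi}, exploiting that $\A(i+1) = \LA(i+1) \cap \UB(i+1)$ by definition. Lemma~\ref{lem:Ai} tells us that $\LA(i+1)$ is an attractor under $\algoU$ (resp.\ $\algoB(D)$ with $D \geq \diam$) from $\A(i)$, reached within at most one round from any configuration of $\A(i)$; Lemma~\ref{lem:UBi} asserts exactly the same for $\UB(i+1)$. So the whole argument is the standard fact that a finite intersection of attractors is again an attractor, specialized here to the case where both components are reached within one round; there is essentially no new combinatorial content.

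I would first establish closure. Take any execution $e = \gamma_0, \gamma_1, \dots$ with $\gamma_0 \in \A(i+1)$. Since $\A(i+1) \subseteq \LA(i+1)$ and $\A(i+1) \subseteq \UB(i+1)$, the execution $e$ starts both in $\LA(i+1)$ and in $\UB(i+1)$, so applying the closure clauses of Lemmas~\ref{lem:Ai} and~\ref{lem:UBi} separately yields $\gamma_j \in \LA(i+1)$ and $\gamma_j \in \UB(i+1)$ for every $j$, hence $\gamma_j \in \A(i+1)$. Then I would handle convergence within one round: fix $\gamma \in \A(i)$ and an arbitrary execution from $\gamma$, and let $\gamma'$ be the last configuration of its first round. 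By Lemma~\ref{lem:Ai} some configuration of this first-round prefix lies in $\LA(i+1)$, and by the closure of $\LA(i+1)$ the later configuration $\gamma'$ still lies in $\LA(i+1)$; symmetrically $\gamma' \in \UB(i+1)$ via Lemma~\ref{lem:UBi}. Hence $\gamma' \in \A(i+1)$, which is the ``reached within at most one round'' claim, and ordinary convergence from $\A(i)$ follows since every execution from $\A(i)$ passes through the end of its first round.

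The only subtlety---and the closest thing here to an obstacle---is the interaction between ``reached within at most one round'' and the intersection: one must note that the \emph{same} configuration $\gamma'$ (the end of the first round) lies in both $\LA(i+1)$ and $\UB(i+1)$, which requires the \emph{closure} of each component to keep the execution inside the set once it has entered, rather than just their convergence clauses; the two lemmas may place the entry into $\LA(i+1)$ and into $\UB(i+1)$ at different steps of the round. Everything else is bookkeeping: the inclusions $\A(i+1) \subseteq \LA(i+1)$ and $\A(i+1) \subseteq \UB(i+1)$ are immediate from the definition of $\A(i+1)$, and the bounded-domain constraint of $\algoB(D)$ never interferes because $i+1 < \diam \leq D$ and is already absorbed by the component lemmas.
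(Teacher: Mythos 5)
Your proposal is correct and matches the paper's (implicit) argument: the paper states Corollary~\ref{cor:Ai} as an immediate consequence of Lemmas~\ref{lem:Ai} and~\ref{lem:UBi}, relying on exactly the intersection-of-attractors reasoning you spell out. Your observation that the closure clauses are needed to place the end-of-round configuration in both $\LA(i+1)$ and $\UB(i+1)$ simultaneously is the right way to fill in the one detail the paper leaves unstated.
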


From the previous corollary and owing the fact that $\A(\diam) =
\LA(\diam)$, we can deduce the following theorem:

\begin{theorem}
The stabilization time of $\algoU$ and $\algoB(D)$ (for every $D \geq \diam$)
 is at most $\diam$ rounds.
\end{theorem}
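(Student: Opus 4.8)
The plan is to chain the one-round attractor steps provided by Corollary~\ref{cor:Ai} into a single induction on $i$, and then use the partial-correctness machinery to turn ``a configuration of $\A(\diam)$ is reached'' into ``a terminal configuration is reached''. Concretely, I would prove by induction on $i \in \{0, 1, \dots, \diam\}$ the statement: \emph{$\A(i)$ is an attractor (for the set of all configurations) under $\algoU$ (resp. $\algoB(D)$ with $D \geq \diam$), and it is reached within at most $i$ rounds from any configuration.}

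For the base case $i = 0$, I would first observe that $\A(0)$ is the set of \emph{all} configurations: $\UB(0)$ already contains every configuration (as noted just before the lemmas), and for $\LA(0)$ note that any non-root process $p$ satisfies $\|\Root,p\| \geq 1 > 0$, so the premise of $\correctb(p,0)$ is never met and $\correctb(p,0)$ holds vacuously; hence $\LA(0)$, and therefore $\A(0) = \LA(0) \cap \UB(0)$, is the whole configuration set. So the base case holds trivially with $0$ rounds. For the inductive step, assuming $\A(i)$ is such an attractor with $i < \diam$, Corollary~\ref{cor:Ai} gives that $\A(i+1)$ is an attractor from $\A(i)$ reached within at most one round from any configuration of $\A(i)$. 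I would then compose the two statements: convergence composes (reach $\A(i)$ within $i$ rounds, then $\A(i+1)$ within one more, for a total of $i+1$ rounds), and closure composes because $\A(i+1) \subseteq \A(i)$. The inclusion itself follows from $\LA(i+1) \subseteq \LA(i)$ (the premise $\|\Root,p\|\le i$ implies $\|\Root,p\|\le i+1$) together with $\A(i+1) \subseteq \UB(i)$ (inside $\A(i+1)$, a process at distance exactly $i+1$ has $d_p = i+1 > i$ by $\correctb(p,i+1)$, and a process at distance $> i+1$ has $d_p > i+1 > i$ by $\correcta(p,i+1)$). This closes the induction.

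Finally, since $\A(\diam) = \LA(\diam)$ (as already observed when $\UB$ was defined, because $\correcta(p,\diam)$ is vacuous), the induction shows that every execution of $\algoU$ (resp. $\algoB(D)$) reaches a configuration of $\LA(\diam)$ within at most $\diam$ rounds. Every configuration of $\LA(\diam)$ is legitimate (as argued when $\LA$ was introduced), and by Theorem~\ref{legi} every legitimate configuration is terminal for both $\algoU$ and $\algoB(D)$. Hence a terminal configuration is reached within at most $\diam$ rounds, which is exactly the claimed bound; the ``resp.'' throughout lets us treat both algorithms in one pass. The only real subtlety is the bookkeeping of the attractor composition — verifying that the per-step ``within one round'' bounds genuinely add up to ``within $i$ rounds'' and that the nested sets $\A(\diam) \subseteq \cdots \subseteq \A(1) \subseteq \A(0)$ are closed — since all the substantive dynamical content is already carried by Lemmas~\ref{lem:Ai} and~\ref{lem:UBi} and Corollary~\ref{cor:Ai}.
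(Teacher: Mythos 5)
Your proposal is correct and follows essentially the same route as the paper, which derives the theorem in one line from Corollary~\ref{cor:Ai} together with the facts that $\A(0)$ is the full configuration set and $\A(\diam)=\LA(\diam)\subseteq$ the legitimate (hence terminal, by Theorem~\ref{legi}) configurations. Your write-up merely makes explicit the induction, the vacuous base case, and the attractor-composition bookkeeping that the paper leaves implicit.
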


\subsubsection{Lower Bound on the Stabilization Time in Rounds 
for both $\algoU$ and $\algoB(D)$}
Below, we show that the upper bound is exact for both $\algoU$ and
$\algoB(D)$ when $D \geq \diam$.

Consider first $\algoU$. Let $G = (V,E)$ be any line graph of diameter
$\diam$, {\em i.e.}, $V=\{p_0=\Root, p_1, \ldots, p_\diam\}$ and $E =
\{\{p_i,p_{i+1}\}\ |\ i \in [0..\diam-1]\}$. Consider the initial
configuration where $d_{p_0} = 0$ and $\forall i \in [1..\diam]$,
$d_{p_i} = X$, where $X > \diam$. ($Par$-variables have arbitrary
values). Consider a synchronous execution starting from that initial
configuration. Then, at each round $i$, with $i \in [1..\diam]$, $p_i$
executes $U_1$ to definitely set $d_{p_i}$ to $i$ and $par_{p_i}$ to
$p_{i-1}$. Moreover, $\forall j \in [i+1..\diam]$, $p_j$ increments
$d_{p_j}$ by $U_1$. Hence, after $\diam$ rounds the system is in a terminal
configuration.


Consider now $\algoB(D)$ with any value $D \geq \diam$. Consider the
same graph as for $\algoFHC(\diam)$. However, we consider now a
synchronous execution starting from any configuration where:
\begin{itemize}
\item $d_\Root = 0$,
\item $\forall i \in [1..\diam+1], d_{p_i} = D$, 
\item $par_{p_{\diam}} = p_{\diam+1}$, and
\item $par_{p_{\diam+1}} = p_{\diam}$.
\end{itemize}
An example of initial configuration is given in Figure
\ref{fig:execAlgoB}.  The synchronous execution starting from that
configuration then works as follows:
\begin{itemize}
\item In round $i$, with $i \in [1..\diam-1]$, only process $p_i$
  moves. It executes rule $B_1$ to set $d_{p_i}$ to $i$ and
  $par_{p_i}$ to $p_{i-1}$.
\item In round $\diam$, only $p_\diam$ and $p_{\diam+1}$ move. Two
  cases are possibles. Either $D > \diam$ and they both execute $B_1$
  to set $d_{p_\diam}$ (resp. $d_{p_{\diam+1}}$) to $\diam$ and
  $par_{p_\diam}$ (resp. $par_{p_{\diam+1}}$) to $p_{\diam-1}$. Or, $D
  = \diam$ and they both execute $B_2$ to set $par_{p_\diam}$ and
  $par_{p_{\diam+1}}$ to $p_{\diam-1}$.
\end{itemize}

\begin{figure}
\begin{centering}
\scalebox{0.7}{\input{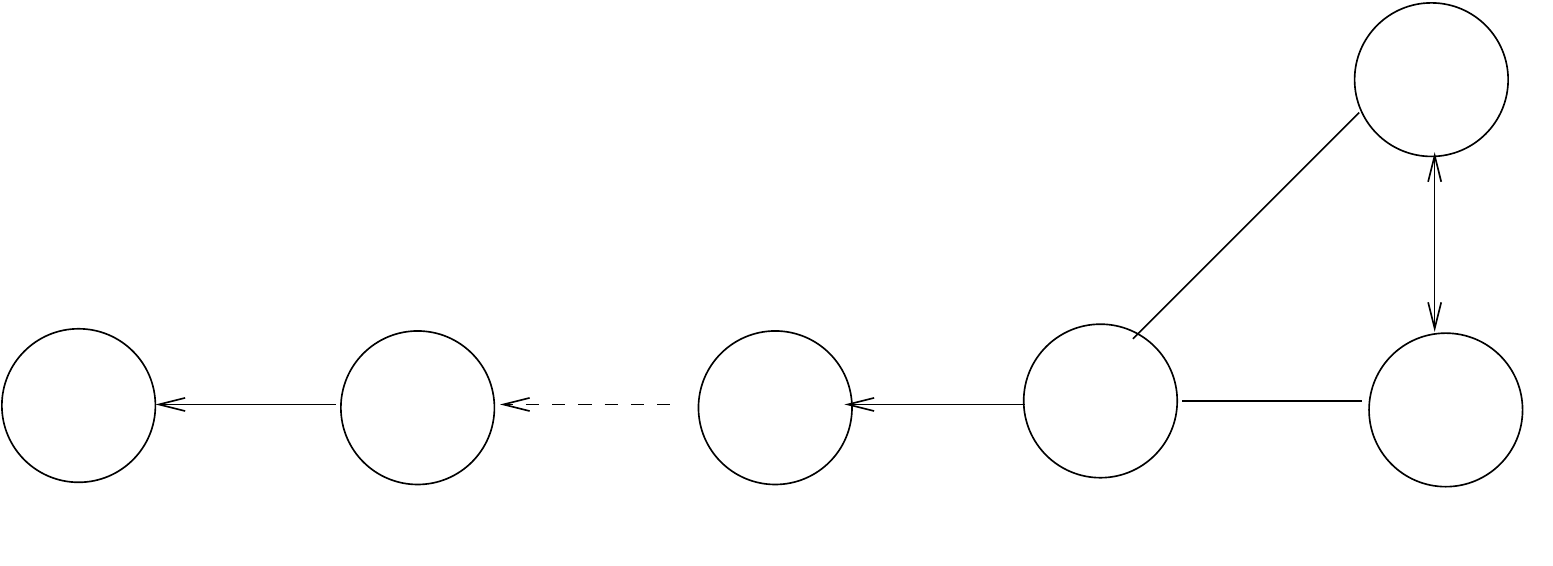_t}}
\caption{Initial configuration of a synchronous execution of
  $\algoB(D)$ which stabilizes in $\diam$ rounds}
\label{fig:execAlgoB}
\end{centering}
\end{figure}

Hence, we can conclude with the theorem below.

\begin{theorem}
  The worst case stabilization time of $\algoU$ and $\algoB(D)$ (with
  $D \geq \diam$) is $\diam$ rounds.
\end{theorem}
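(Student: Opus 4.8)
The plan is to combine the upper bound just proved with two explicit worst‑case synchronous executions, one for each algorithm --- precisely the executions sketched in the paragraphs above --- the real work being to check that these executions are legal and do not terminate before round $\diam$. Since the preceding theorem already gives that the stabilization time of $\algoU$, and of $\algoB(D)$ for every $D \geq \diam$, is at most $\diam$ rounds, it suffices to exhibit for each algorithm a network and an execution whose configuration after $\diam-1$ rounds is not terminal while its configuration after round $\diam$ is terminal.

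For $\algoU$ I would take the line $p_0 = \Root, p_1, \ldots, p_\diam$ and the initial configuration with $d_{p_i} = X$ for every $i \geq 1$, where $X > \diam$, the $par$‑variables being arbitrary, and let the daemon be synchronous. The claim to prove by induction on $i \in [1..\diam]$ is that at the end of round $i$ one has $d_{p_j} = j$ and $p_j$ disabled forever for every $j \leq i$, while $d_{p_j} = X + i > \diam$ for every $j > i$ (each such $p_j$ having executed $U_1$ at round $i$). The inductive step reduces to two observations: as long as $p_{i-1}$ has not reached value $i-1$, one has $\mind(p_i) \geq i-1$ and $d_{p_i} > \diam$, so $\neg dOk(p_i)$ and $p_i$ is continuously enabled for $U_1$; and the first activation of $p_i$ after $p_{i-1}$ is fixed sets $d_{p_i} = i$ and $par_{p_i} = p_{i-1}$, after which $dOk(p_i) \wedge parOk(p_i)$ holds permanently. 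In particular $p_\diam$ is still enabled at the start of round $\diam$, so the configuration after $\diam-1$ rounds is not terminal and the one after round $\diam$ is, which yields exactly $\diam$ rounds.

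For $\algoB(D)$ with any $D \geq \diam$ I would reuse the $(\diam+2)$‑node graph used for the $\algoFHC(\diam)$ lower bound --- the line $p_0 = \Root, p_1, \ldots, p_{\diam+1}$ together with the chord $\{p_{\diam-1}, p_{\diam+1}\}$ --- after first checking that its diameter is indeed $\diam$ (the chord reduces $\|p_{\diam+1},\Root\|$ to $\diam$ and no pair of nodes is farther apart). Starting from the configuration with $d_{p_i} = D$ for every $i \in [1..\diam+1]$, $par_{p_\diam} = p_{\diam+1}$ and $par_{p_{\diam+1}} = p_\diam$, and running synchronously, the inductive claim is that at each round $i \in [1..\diam-1]$ the only enabled process is $p_i$, which executes $B_1$ and thereby fixes $d_{p_i} = i$ and $par_{p_i} = p_{i-1}$; every other non‑root process is disabled because either its distance value is already correct and matches its parent's, or its $\mind$ still equals $D$ --- so both $B_1$ and $B_2$ are disabled --- and its own distance value still equals $D$, so $B_3$ is disabled too. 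Here the hypothesis $D \geq \diam$ is exactly what makes $d_{p_i} = D \neq i$ for $i \leq \diam-1$, so that $B_1$ genuinely fires. At round $\diam$, processes $p_\diam$ and $p_{\diam+1}$ move for the first time: if $D > \diam$ they execute $B_1$ and set their distance to $\diam$ and their parent to $p_{\diam-1}$, while if $D = \diam$ their distance already equals $D = \diam$, so they execute $B_2$ and merely reset their parent to $p_{\diam-1}$; in either case one verifies that $B_1$, $B_2$ and $B_3$ are afterwards disabled at every process, so the resulting configuration is terminal and the execution stabilizes in exactly $\diam$ rounds.

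The step I expect to be the main obstacle is the bookkeeping of the $\algoB(D)$ execution: one has to verify that $p_\diam$ and $p_{\diam+1}$ stay disabled throughout the first $\diam-1$ rounds, which relies crucially on the fact that their only common neighbor outside $\{p_\diam, p_{\diam+1}\}$ is $p_{\diam-1}$, whose distance value remains $D$ until it is fixed to $\diam-1$ at round $\diam-1$ --- which is precisely why the chord is attached to $p_{\diam-1}$ and not closer to $\Root$ --- and one must carry the two cases $D = \diam$ and $D > \diam$ separately through the last round and confirm that termination is really reached there. By contrast, the $\algoU$ case is a routine propagating‑wave argument.
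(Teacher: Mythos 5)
Your proposal is correct and follows essentially the same route as the paper: it invokes the previously established $\diam$-round upper bound and then exhibits the same two worst-case synchronous executions (the line with all $d$-values set to $X>\diam$ for $\algoU$, and the line-plus-chord graph with all $d$-values set to $D$ for $\algoB(D)$). The additional inductive bookkeeping you supply (which processes are enabled in each round, the case split $D=\diam$ versus $D>\diam$) is exactly the verification the paper leaves implicit.
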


\section{Stabilization Time in Steps}\label{sect:step}

In this section, we propose a step complexity analysis of the three
algorithms presented in Section~\ref{sect:algo}.

\subsection{A General Bound}

The theorem below exhibits a trivial upper bound on the stabilization
time in steps of every self-stabilizing algorithm working under an
unfair daemon.

\begin{theorem}
  Let $\mathcal A$ be any self-stabilizing algorithm under an unfair
  daemon,\footnote{The daemon can be central or distributed.} the
  stabilization time of $\mathcal A$ is less than or equal to
  $\prod_{p\in V} |S_p|-2$ steps, where $S_p$ is the set of possible states of
  $p$, for every process $p$.
\end{theorem}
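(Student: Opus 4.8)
The plan is to exploit the fact that the daemon is unfair together with the fact (established in Theorem~\ref{term}) that every execution is finite, so that in fact \emph{every} execution of $\mathcal A$ terminates. The key observation is that under an unfair daemon, a finite execution cannot visit the same configuration twice: if $\gamma$ appeared at two positions $\gamma_i = \gamma_j$ with $i < j$ in an execution $e = \gamma_0 \gamma_1 \cdots \gamma_m$, then $\gamma_i$ is not terminal (it has a successor $\gamma_{i+1}$), and since the daemon is unfair it could have chosen at each step from position $i$ onward to replay the cycle $\gamma_i \mapsto \gamma_{i+1} \mapsto \cdots \mapsto \gamma_j = \gamma_i$ forever, producing an infinite execution and contradicting termination. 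Hence all configurations $\gamma_0, \ldots, \gamma_m$ in any execution are pairwise distinct.

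Next I would bound the number of distinct configurations. A configuration is an assignment of a state to each process, so $|\mathcal C| = \prod_{p \in V} |S_p|$. An execution of length $m$ (i.e.\ with $m$ steps, $m+1$ configurations) that visits only distinct configurations satisfies $m + 1 \le \prod_{p \in V} |S_p|$, hence $m \le \prod_{p\in V} |S_p| - 1$. To shave off the extra $1$ and get the claimed bound $\prod_{p\in V}|S_p| - 2$, I would argue that the final configuration $\gamma_m$ is terminal and the initial configuration $\gamma_0$, if it is already terminal, gives a trivial $0$-step execution; so when counting the \emph{stabilization time} — the maximum number of steps over executions that actually do something — we may assume $\gamma_0 \neq \gamma_m$, and moreover no execution of positive length can both start and end at the two ``extremal'' configurations in a way that uses all of $\mathcal C$: more carefully, since $\gamma_m$ is terminal it has no outgoing step, so at most one configuration in $\mathcal C$ can be the last one, and there is at least one configuration (e.g.\ $\gamma_m$ itself, or any terminal one) that can never appear as a non-final configuration in any execution. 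Together these two exclusions remove two configurations from the count available to an execution of maximal length, yielding $m \le \prod_{p \in V}|S_p| - 2$.

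The main obstacle is making the ``$-2$'' rigorous rather than hand-waving: the clean statement of the distinctness argument only immediately gives $-1$, and one must be careful about the edge cases where $|\mathcal C|$ is small, where $\mathcal A$ has no terminal configuration reachable (impossible here by Theorem~\ref{term}), or where the initial configuration is itself terminal. I would handle this by taking a longest execution $e = \gamma_0 \cdots \gamma_m$ realizing the stabilization time, noting $\gamma_m$ is terminal and $\gamma_0 \neq \gamma_m$ (else $m=0 \le \prod|S_p|-2$ trivially whenever $\prod|S_p|\ge 2$), and then observing that the terminal configuration $\gamma_m$ cannot appear among $\gamma_0, \ldots, \gamma_{m-1}$ (it has no successor), while simultaneously there is at least one further configuration not on the path $\gamma_0,\ldots,\gamma_m$ — indeed, since $\gamma_0$ is non-terminal, reversing the roles, any terminal configuration $\gamma \neq \gamma_m$ is not on the path, and if $\gamma_m$ is the unique terminal configuration then the single-step count still loses two because $\gamma_m \notin \{\gamma_0,\ldots,\gamma_{m-1}\}$ and $\gamma_0 \notin$ the set of terminal configurations means $|\{\gamma_0,\ldots,\gamma_m\}| \le |\mathcal C| - 1$ can be sharpened; a short case analysis on whether $\mathcal A$ has one or more terminal configurations closes the gap. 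I would present the distinctness lemma as the conceptual heart and relegate the $-2$ bookkeeping to a couple of lines of case analysis.
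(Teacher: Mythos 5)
Your central device --- that under an unfair daemon no configuration can occur twice along an execution, since the daemon could otherwise replay the cycle $\gamma_i\mapsto\cdots\mapsto\gamma_j=\gamma_i$ forever --- is exactly the engine of the paper's proof; the paper derives the contradiction from convergence (an infinite execution that never satisfies the specification) where you derive it from finiteness of all executions, but the idea is the same. Two small remarks on that part: the finiteness you need is the Termination clause in the definition of silent self-stabilization, not Theorem~\ref{term} (which concerns only the three concrete algorithms, whereas this statement is about an arbitrary $\mathcal A$); and the distinctness argument cleanly yields $m+1\le\prod_{p\in V}|S_p|$, i.e.\ $m\le\prod_{p\in V}|S_p|-1$ steps.

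The gap is in your passage from $-1$ to $-2$. Excluding a second terminal configuration works only when $\mathcal A$ has at least two terminal configurations; when $\gamma_m$ is the \emph{unique} terminal configuration your ``sharpening'' excludes nothing, because $\gamma_m$ does appear in the execution (as its last element) and $\gamma_0$ being non-terminal is no obstruction. Concretely, nothing in your argument rules out an execution that visits every configuration of $\mathcal C$ exactly once, ending at the unique terminal one; such an execution has $|\mathcal C|-1$ steps, not $|\mathcal C|-2$. The paper reaches the $-2$ by a different decomposition: it writes $e=e'e''$ where $e'$ is the maximal prefix whose configurations do not satisfy the specification, notes that $e''$ is nonempty and disjoint from $e'$, and concludes that $e'$ has at most $|\mathcal C|-1$ configurations and hence at most $|\mathcal C|-2$ \emph{internal} steps --- the second excluded configuration is the first correct configuration, not a second terminal one. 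In other words, the $-2$ is obtained by not counting the final step into the first correct configuration; if, as in your formulation, one counts all steps up to and including the arrival at $\gamma_m$, the distinctness argument alone can only give $|\mathcal C|-1$, and your case analysis does not close that last unit.
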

\begin{proof}
  First, the number of possible configurations of $\mathcal A$ is
  $\prod_{p\in V} |S_p|$.  Let $e$ be any execution of $\mathcal A$.
  $\mathcal A$ being self-stabilizing, $e$ contain a maximal prefix of
  finite size $e'= \gamma_i,\gamma_{i+1}\ldots$ where its
  specification is not achieved. Let $e''$ such that $e = e'e''$.

  Assume, by the contradiction, that $\exists k,\ell$ such that $i\leq
  k<\ell$ and $\gamma_k = \gamma_\ell$. Then,
  $(\gamma_{k+1},\ldots,\gamma_\ell)^\infty$ is an infinite execution
  of $\mathcal A$ under the unfair daemon that never stabilized. So,
  $\mathcal A$ is not self-stabilizing under an unfair daemon, a
  contradiction.

  Hence, all configurations of $e'$ are distinct. Moreover, $|e''|
  \geq 1$ and $e'$ and $e''$ have no common configuration. Hence, $e'$
  contains at most $\prod_{p\in V} |S_p|-1$ configurations, and so at
  most $\prod_{p\in V} |S_p|-2$ steps.
\end{proof}

The previous theorem is useless when considering algorithms where at
least one variable as an infinite domain, {\em e.g.}, $\algoU$.  Now,
for $\algoB(D)$ and $\algoHC(D)$, the theorem claims that their
respective stabilization times are less than or equal to
$(n-1)^{\Delta.D}$ steps. This upper bound may appear to be
overestimated at the first glance. However, we will see in the next
subsections that those algorithms are exponential in steps in
the worst case.

\subsection{Algorithm $\algoU$}

Here we consider the unbounded version given in
Subsection~\ref{sub:u}. The following theorem shows that the step
complexity of $\algoU$ cannot be bounded by any function depending on
topological parameters, {\em e.g.}, $n$, $N$, $\diam$, or $D$\ldots

\begin{theorem}
  Let $f$ be any function mapping graphs to integers. There exists a
  graph $G$ and an execution $e$ of $\algoU$ in $G$ such that $e$
  stabilizes in more than $f(G)$ steps.
\end{theorem}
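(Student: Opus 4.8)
The plan is to exploit the one essential feature of $\algoU$ that distinguishes it from its bounded variants: the $d$-variables live in an unbounded domain, so a "bad" initial value can be made arbitrarily large, and the rule $U_1$ only ever decreases such a value by steps of size one (since $\update(p)$ sets $d_p \gets \mind(p)+1$, and $\mind(p)$ is governed by the other processes). So the idea is: given the target function $f$, pick a fixed small graph $G$ (a single non-root process adjacent to $\Root$ already suffices, or a short line if one prefers a more robust picture), and then choose an initial configuration in which the non-root process $p$ has $d_p = X$ for some huge value $X$ (to be fixed as a function of $f(G)$). The point is that the number of steps needed before $d_p$ reaches its correct value $\|p,\Root\| = 1$ grows without bound as $X$ grows, because each execution of $U_1$ at $p$ replaces $d_p$ by $\mind(p)+1 = d_\Root + 1 = 1$ in one step in this trivial graph — so a bit more care is needed: on the single-edge graph $p$ actually corrects in one step regardless of $X$. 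Hence the right construction uses a line (or star of lines) where the "wrong large value" has to propagate down, or else forces many oscillations.

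More precisely, I would take $G$ to be a line $p_0 = \Root, p_1, \ldots, p_m$ for a suitable fixed $m \geq 2$ — but a line of fixed length also stabilizes in a bounded number of steps for fixed $m$. So the genuinely correct idea must put the unbounded quantity into the \emph{values}, not the topology, in a way that the daemon can stretch: start with all non-root $d$-values equal to some enormous $X$, and let the daemon schedule processes so that, before $p_1$ is ever corrected, the far processes repeatedly increment (via $U_1$, since $\neg dOk$ holds whenever a neighbor's value changed) — each such useless increment is one step, and the daemon can force $\Theta(X)$ of them by carefully interleaving. Concretely on the 2-node graph $\{\Root, p\}$: if $d_p = X$ initially with $X > 1$, then $\neg dOk(p)$ holds, $p$ executes $U_1$, and $d_p \gets 1$ in a single step; so two nodes do not work. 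On a 3-node line $\Root, a, b$ with $d_a = d_b = X$: the daemon can move $b$ first, $d_b \gets d_a + 1 = X+1$; then move $a$, $d_a \gets 1$; then move $b$ again, $d_b \gets 2$; that's only $3$ steps. The oscillation does not amplify here either. Therefore the actual mechanism the authors must be using is that \emph{the} $par$-pointer can point "upward" at a process whose $d$ is huge, forcing a chain: with $d_{p_i} = X - i$ arranged decreasingly and $par_{p_i} = p_{i+1}$ (pointing away from the root), each process sees $\mind(p_i) = X - i - 1$ and so $U_1$ updates $d_{p_i} \gets X-i$, leaving it unchanged — i.e. the configuration with decreasing values $X, X-1, X-2, \ldots$ along the line but with pointers reversed is "locally consistent" on the $d$-side yet must eventually collapse, and the collapse takes $\Theta(X)$ steps as the correct small values creep in from the root one unit at a time while the tail keeps re-incrementing.

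So the key steps, in order, are: (1) fix $G$ to be a line (or small gadget) of some constant size $m$ depending only on needing $m \geq$ some absolute constant; (2) given $f$, set $X := f(G) + C$ for an explicit constant $C$; (3) describe the initial configuration with large, carefully-arranged $d$-values (decreasing along the line away from the root, with $par$-pointers oriented away from the root); (4) describe a daemon schedule (it may even be synchronous, or a specific sequential order) under which the "frontier" of correct values advances by exactly one hop per round while the uncorrected suffix performs $U_1$ increments, and count that this takes more than $f(G)$ steps before reaching the terminal configuration; (5) invoke Theorem~\ref{term} to know the execution does terminate (so it is a legitimate execution to exhibit), and conclude. The main obstacle — and the part I would spend the most effort getting right — is step (4): choosing the initial $d$-values and the schedule so that the number of steps provably exceeds $f(G)$ rather than collapsing in $O(m)$ steps as the naive 2- and 3-node attempts above do; this requires the initial large values to be staggered so that each one must be "visited" by $U_1$ many times, or equivalently that the correction wave and a re-growth wave chase each other, and making that count rigorous (a clean invariant: after $t$ steps, at least one process still has $d$-value $\geq X - g(t)$ for a slowly-growing $g$) is the crux.
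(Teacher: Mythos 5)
There is a genuine gap: you correctly identify the high-level strategy (exploit the unbounded domain, plant a huge value $X$ chosen as a function of $f(G)$, and force $\Theta(X)$ steps), and you correctly rule out the $2$- and $3$-node attempts, but the one concrete gadget you then commit to does not work, and you acknowledge in your last paragraph that the crux is still open. Specifically, the configuration with $d_{p_i}=X-i$ decreasing along a line and $par$-pointers oriented away from the root is almost a fixpoint: every $p_i$ with $i\geq 2$ satisfies both $dOk$ and $parOk$ (since $\mind(p_i)=d_{p_{i+1}}=X-i-1$ and $d_{p_i}=X-i$), so only $p_1$ is initially enabled. Once $p_1$ corrects to $1$, the correction wave sweeps down the line with each process executing $U_1$ exactly once (because after $p_i$ is corrected, $\mind(p_{i+1})$ is immediately the correct small value $i$), so the execution stabilizes in $O(m)$ steps regardless of $X$. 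There is no ``re-growth wave'' chasing the correction wave on a line: $\mind(p)$ can never exceed the values of $p$'s neighbors, so values cannot run away upward faster than the correction arrives.

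The missing idea is a \emph{mutually-referencing pair at small values, capped by large-valued outer neighbors}. The paper takes a line $\Root,p_1,p_2,p_3,p_4$ with $d_{p_1}=d_{p_4}=X$ and $d_{p_2}=d_{p_3}=1$, where $X\geq f(G)+1$. Then $\mind(p_2)$ is determined by $d_{p_3}$ and vice versa, so alternately activating $p_2$ and $p_3$ makes each raise the other's $\mind$ by one per step ($d_{p_2}\to 2$, $d_{p_3}\to 3$, $d_{p_2}\to 4$, \dots), and this leapfrogging continues for about $X$ steps until the values hit the ceiling $X+1$ imposed by $p_1$ and $p_4$. Two further points your sketch does not supply and that are needed to make this legal: the daemon must be unfair (it never selects $p_1$, which is enabled throughout, so the ceiling $X$ is never destroyed by a correction from the root), and no appeal to Theorem~\ref{term} is needed to ``legitimize'' the execution --- one simply exhibits a finite execution prefix of more than $f(G)$ steps that has not yet reached a terminal configuration.
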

(The following proof is illustrated with Figure~\ref{unbounded}.)\\
\begin{proof}
  Consider a line graph $G$ of 5 nodes, where $R$ is an extremity,
  {\em i.e.}, $G = \{R,p_1,p_2,p_3,p_4\}$ and $E = \{\{R,p_1\}\} \cup
  \{\{p_i,p_{i+1}\}, i \in[1..3]\}$. Let $X \geq f(G)+1$. Assume an
  initial configuration, where $d_{p_1} = d_{p_4} = X$ and $d_{p_2} =
  d_{p_3} = 1$. (The initial values of $par$ variables are arbitrary
  and, by definition, $d_R = 0$.) Initially, all processes, except $R$,
  are enabled. Assume that $p_2$ moves, then in the next
  configuration, $d_{p_2}$ takes value 2 and all processes, except $R$
  and $p_2$, are enabled. Assume that $p_3$ moves, then in the next
  configuration, $d_{p_3}$ takes value 3 and all processes, except $R$
  and $p_3$, are enabled. By alternating activations of $p_2$ and $p_3$
  the system reaches in $X \geq f(G)+1$ steps a configuration where
  $d_R=0$, $d_{p_1} = d_{p_4} = X$, and $d_{p_2} = d_{p_3} = X+1$.
\end{proof}

\begin{figure}
\centering
\includegraphics[scale=0.6]{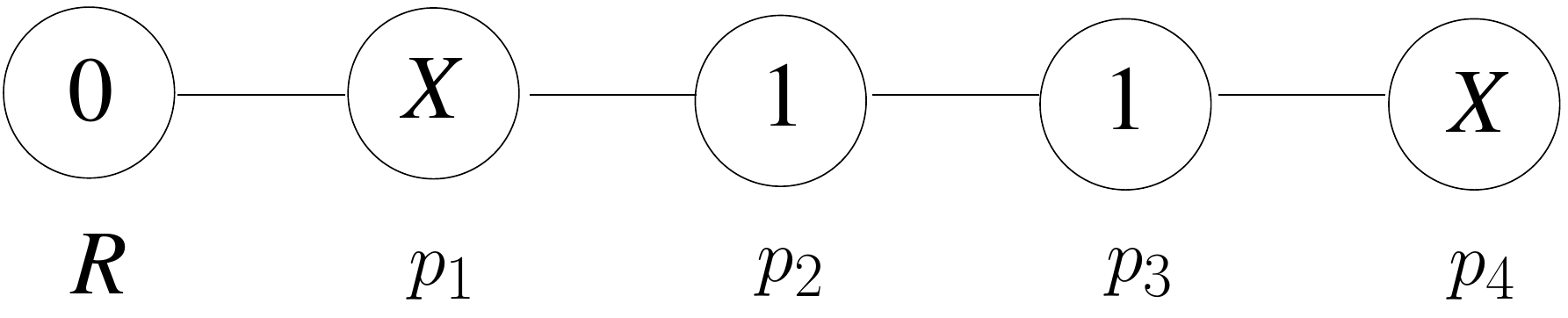}
\caption{Possible initial Configuration of the line of 5 nodes\label{unbounded}}
\end{figure}
 
\subsection{Algorithm $\algoHC(D)$}

In this subsection, we establish that the stabilization times in steps
of both $\algoHC(D)$ and $\algoB(D)$ are exponential in the worst
case.  The lowed bound is based on a family of graphs called
$\mathcal{G}_k$.  For every $k \geq 0$, the graph $\mathcal{G}_k$
contains $4k+3$ processes and has a diameter of $2k+3$.

\begin{definition}[Graph $\mathcal{G}_1$]
\label{def:def1}
Let $\mathcal{G}_1= (V_1, E_1)$ be the undirected graph, where
\begin{itemize}
\item $V_1 = \{f.0, e.1, f.1, h.0, g.1, h.1, \Root\}$ and
\item $E_1 = \{\{\Root, h.0\}, \{h.0, f.0\}, \{f.0, e.1\}, \{e.1,
  f.1\}, \{f.1, h.1\}, \{g.1, e.1\}\}$.
\end{itemize}
\end{definition}

We now consider three classes of configurations for the graph
$\mathcal{G}_1$.  In all consider configurations:
\begin{itemize} 
\item the distance value of 
$g.1$, $h.1$, and $h.0$ are $z-1$,
\item $d_{e.1}=z$ if and only if $par_{e.1} = g.1$, and
\item $d_{f.i}=z$ if and only if $par_{f.i} = h.i$, for $i \in [0,1]$.
\end{itemize}

The three classes of configurations are defined as follows where $x \geq 1$ and $ z >1$:
\begin{itemize}
\item In the configurations of $\Cb_1(x,z)$, the distance value of
  $e.1$ and $f.1$ is $z$, and the distance value of $f.0$ is $x$.
\item In the configurations of $\Cc_1(x,z)$, the distance value of $e.1$  is 
$x$, and the distance value of $f.1$ and $f.0$  is $z$.
\item In the configurations of $\Cd_1(x,z)$, the distance value of
  $e.1$ and $f.0$ is $z$, and the distance value of $f.1$ is $x$.
\end{itemize}
Except otherwise mentioned, all other variables have arbitrary values.
Notice that we have $\Cb(z,z) = \Cc_1(z,z) = \Cd_1(z,z)$.  An
illustrative example of these three types of configurations of
$\mathcal{G}_1$ is given in Figure \ref{fig:fig1}.

\begin{figure}
\begin{centering}
\scalebox{0.5}{\input{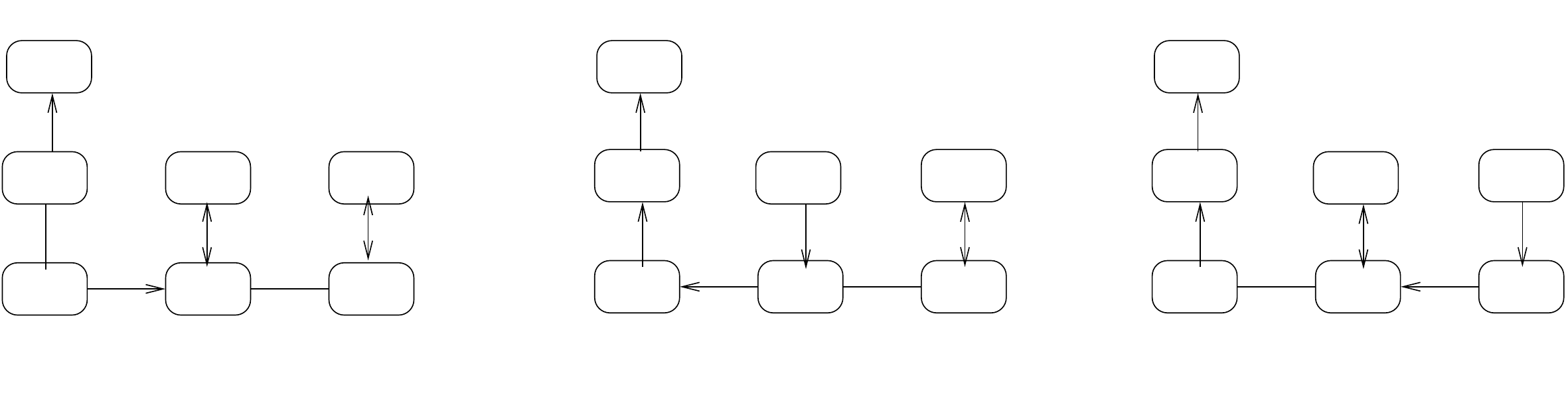_t}}
\caption{Examples of configurations of $\mathcal{G}_1$}
\label{fig:fig1}
\end{centering}
\end{figure}

\begin{observation}
\label{obs:obs1}
Let $v, z, D$ be three integers such that $1 \geq v < z \leq D$.
\begin{itemize}
\item From any configuration of $\Cc_1(v,z)$, 
a configuration of $\Cb_1(v+1,z)$ is reachable  
in a single step of $\algoHC(D)$, where 
 $e.1$ and $f.0$ execute  $HC_2$.
\item From a configuration of $\Cb_1(v,z)$, a configuration  of
$\Cc_1(v+1,z)$ is reached 
in a single step of $\algoHC(D)$, where 
 $e.1$ and $f.0$ execute  $HC_2$. 
\end{itemize}
\end{observation}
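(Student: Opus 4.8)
The plan is to prove each of the two items by exhibiting explicitly the single step of $\algoHC(D)$ in which $e.1$ and $f.0$ both perform $HC_2$, and then checking that the resulting configuration lies in the announced class. First I would fix an arbitrary configuration $\gamma$ of the source class and read off the state in $\gamma$ of every one of the seven processes of $\mathcal{G}_1$: the class definition prescribes $d_{e.1}$, $d_{f.0}$ and $d_{f.1}$, the global invariants give $d_{g.1}=d_{h.1}=d_{h.0}=z-1$, and, since $v<z$, the equivalences linking a $d$-value $z$ to a prescribed parent (together with $\neig_{f.0}=\{h.0,e.1\}$) pin down $par_{e.1}$, $par_{f.0}$ and $par_{f.1}$ up to choices that do not affect the argument; the remaining variables play no role.

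Next I would check that $HC_2$ is enabled at $e.1$ and at $f.0$ in $\gamma$. One computes $\mind(e.1)=\min\{d_{f.0},d_{f.1},d_{g.1}\}$ and $\mind(f.0)=\min\{d_{h.0},d_{e.1}\}$; using $v\le z-1$ one identifies the minimizing neighbour and observes that the current parent of $e.1$ (resp.\ of $f.0$) carries a strictly larger $d$-value, so the guard $d_{par_p}>\mind(p)$ holds. As the action of $HC_2$ writes only to variables of $p$ itself, the daemon may select exactly $\{e.1,f.0\}$ and let both perform $HC_2$. I would then compute the resulting $\gamma'$: in the composite-atomicity model both moving processes evaluate their guard and right-hand side against $\gamma$, so $e.1$ sets $d_{e.1}\gets\mind(e.1)+1$ and $par_{e.1}\gets\Best(e.1)$, $f.0$ sets $d_{f.0}\gets\mind(f.0)+1$ and $par_{f.0}\gets\Best(f.0)$, and every other variable is unchanged. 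Both new $d$-values are at most $z\le D$ and hence legal, and a short case analysis on the minimizing neighbour gives the new distances ($z$ and $v+1$ for the first item, $v+1$ and $z$ for the second) and the new parents, resolving the only genuine tie in $\Best(f.0)$ (which occurs in the first item when $v+1=z$) in favour of $h.0$.

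It then remains to verify that $\gamma'$ satisfies the definition of $\Cb_1(v+1,z)$ (resp.\ $\Cc_1(v+1,z)$) and the three global invariants. This part is direct: the distances of $g.1$, $h.1$, $h.0$ are untouched and equal $z-1$; $d_{e.1}$, $d_{f.0}$ and $d_{f.1}$ are exactly those prescribed by the target class; and each equivalence invariant holds because a $d$-value equal to $z$ is produced only together with the corresponding parent ($g.1$ for $e.1$, $h.0$ for $f.0$), while a process whose distance is already $z$ in $\gamma$ keeps both its distance and its parent.

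The arithmetic (the $\mind$ computations, the bound $\mind(p)+1\le z\le D$, the range $z-1\ge 1$) is immediate from the hypotheses on $v,z,D$. \textbf{The delicate point is the consistency of the parent pointers with the equivalence invariants}, and the only genuinely subtle case is the boundary $v+1=z$, where $\Cb_1(z,z)=\Cc_1(z,z)=\Cd_1(z,z)$: there the tie in $\Best(f.0)$ must be resolved towards $h.0$; moreover, for the second item, at that boundary $e.1$ is already in its correct state, so strictly only $f.0$ moves while the target $\Cc_1(z,z)$ is still reached. Away from this boundary the step behaves exactly as claimed, with both $e.1$ and $f.0$ executing $HC_2$.
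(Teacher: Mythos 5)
The paper states this as an Observation and gives no proof, so there is nothing to compare against except the intended direct verification, which is exactly what you carry out; your computation of the guards, of $\mind$ and $\Best$ for $e.1$ and $f.0$, and of the invariants in the resulting configuration is correct for the main case $v\leq z-2$, and your remark that the daemon may select exactly $\{e.1,f.0\}$ is the right justification for the single step.

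One detail in your treatment of the boundary $v+1=z$ is wrong, however. For the first item with $v=z-1$, you claim that $f.0$ executes $HC_2$ and that a tie in $\Best(f.0)$ between $h.0$ and $e.1$ must be resolved towards $h.0$. In fact, in $\Cc_1(z-1,z)$ we have $par_{f.0}=h.0$ (since $d_{f.0}=z$), $d_{par_{f.0}}=z-1$ and $\mind(f.0)=\min\{z-1,z-1\}=z-1$, so the guard $d_{par_{f.0}}>\mind(f.0)$ of $HC_2$ is \emph{false} and $f.0$ is not enabled at all ($HC_1$ is also disabled since $parOk(f.0)$ holds). There is no tie to resolve; the correct reading is the one you yourself give for the second item's boundary: only $e.1$ moves, and $\Cb_1(z,z)$ is nevertheless reached because $f.0$'s state already matches the target ($d_{f.0}=z$, $par_{f.0}=h.0$). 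Whenever $f.0$ genuinely executes $HC_2$ in the first item (i.e.\ $v\leq z-2$), the minimizer $e.1$ is unique and no tie occurs. This slip does not affect the conclusion of the Observation, nor its downstream use (Lemma~\ref{lem:lem1a} only invokes the two-process step for $v\leq z-2$), but the sentence about $\Best(f.0)$ should be replaced by the degenerate-case argument you already use for item two.
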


\begin{notation}\label{notation1}
  Let $v$ and $z$ be two integers such that $1 \leq v \leq z$ and $z >1$.  Let
  $\const(v,z,1)$ be the maximal number of steps of $\algoHC(D)$
  (with $D \geq z$) to reach a configuration of $\Cc_1(z,z)$ from a
  configuration of $\Cc_1(v,z)$.
\end{notation}

\begin{observation}
  Let $v$ and $z$ be two integers such that $1 \leq v \leq z$ and $z >1$. 
We have
$\const(v+2,z+2,1) = \const(v,z,1)$.
\end{observation}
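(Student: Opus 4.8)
The plan is to realize the equality as a consequence of a translation symmetry of the dynamics on $\mathcal{G}_1$. Let $\phi$ be the map on configurations of $\mathcal{G}_1$ that adds $2$ to the $d$-value of every non-root process and leaves every $par$-variable (and $d_{\Root}=0$) unchanged, and couple it with the change of parameter $D\mapsto D+2$. Then $\phi$ is a bijection on configurations sending $\Cc_1(x,z)$ onto $\Cc_1(x+2,z+2)$, $\Cb_1(x,z)$ onto $\Cb_1(x+2,z+2)$, and $\Cd_1(x,z)$ onto $\Cd_1(x+2,z+2)$; in particular it sends the source class $\Cc_1(v,z)$ onto $\Cc_1(v+2,z+2)$ and the target class $\Cc_1(z,z)$ onto $\Cc_1(z+2,z+2)$, and it maps the frozen values $d_{g.1}=d_{h.0}=d_{h.1}=z-1$ consistently to $z+1$. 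It therefore suffices to prove that $\phi$, together with $D\mapsto D+2$, is an isomorphism of the step relation on the part of the dynamics that matters, and the two suprema defining $\const(v,z,1)$ and $\const(v+2,z+2,1)$ then coincide, the constraint $D\geq z$ on the source matching $D+2\geq z+2$ on the target.

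The core observation is that the behaviour of the only processes that ever have to move in these executions, namely $e.1$, $f.0$ and $f.1$, is $\phi$-equivariant. Indeed, in $\mathcal{G}_1$ none of these three is a neighbour of $\Root$ ($\neig_{e.1}=\{f.0,f.1,g.1\}$, $\neig_{f.0}=\{h.0,e.1\}$, $\neig_{f.1}=\{e.1,h.1\}$), so every $d$-value appearing in a guard or action of rule $HC_1$ or $HC_2$ at such a process is shifted by exactly $2$ under $\phi$. Now the predicates $dOk$ and $parOk$, the test $d_{par_p}<D$, and the test $\mind(p)<d_{par_p}$ are all invariant under simultaneously adding $2$ to $d_p$, to $d_{par_p}$, to all $\{d_q\}_{q\in\neig_p}$, and to $D$; likewise $\mind$, $\Best$ (whose set of candidate neighbours is shift-invariant) and $\update$ commute with $\phi$. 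Hence any step $\gamma\mapsto\gamma'$ of $\algoHC(D)$ whose moving processes all lie in $\{e.1,f.0,f.1\}$ is carried by $\phi$ to a step $\phi(\gamma)\mapsto\phi(\gamma')$ of $\algoHC(D+2)$ in which exactly the same processes execute exactly the same rules, and conversely; moreover such steps keep $e.1,f.0,f.1$ within the $d$-range used by the three configuration classes, so $\phi$ and $\phi^{-1}$ stay inside the valid domain. This gives the sought length-preserving bijection between executions, provided both sides consist only of executions confined to $\{e.1,f.0,f.1\}$.

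The point that needs care — and the main obstacle — is precisely that proviso, since $h.0$ is adjacent to $\Root$ and a move of $h.0$ is not in general $\phi$-equivariant. One must argue that the supremum defining $\const(v,z,1)$ is already realized by executions that never activate $g.1$, $h.0$ or $h.1$. This is where the hypotheses built into the three configuration classes are used: every configuration of $\Cc_1(v,z)$ and of $\Cc_1(z,z)$ has $d_{g.1}=d_{h.0}=d_{h.1}=z-1$ with the $par$-variables of $e.1$, $f.0$, $f.1$ tied to these vertices, so activating one of $g.1$, $h.0$, $h.1$ leaves the region of configurations from which $\Cc_1(z,z)$ can still be reached and thus cannot contribute to the supremum; by Observation~\ref{obs:obs1}, the worst-case behaviour on the way from $\Cc_1(v,z)$ to $\Cc_1(z,z)$ is exactly the $\Cb_1\leftrightarrow\Cc_1$ shuttling involving $e.1$ and $f.0$ (with $f.1$ handled symmetrically). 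Restricting to these confined executions, $\phi$ maps the witnesses of $\const(v,z,1)$ bijectively and length-preservingly onto the witnesses of $\const(v+2,z+2,1)$, and the equality $\const(v+2,z+2,1)=\const(v,z,1)$ follows.
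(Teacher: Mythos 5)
The paper states this observation without proof, so the translation symmetry you construct is presumably the intended justification, and your equivariance analysis is correct as far as it goes: for any process $p$ with $\Root\notin\neig_p$, every quantity occurring in the guards and actions of $HC_1$ and $HC_2$ shifts by exactly $2$ under $\phi$ combined with $D\mapsto D+2$, so steps confined to such processes are carried bijectively and length-preservingly to steps of $\algoHC(D+2)$. (Note in passing that $g.1$ and $h.1$ also have this property --- among the six non-root vertices of $\mathcal{G}_1$ only $h.0$ is adjacent to $\Root$ --- so the equivariance failure is localized at $h.0$; $g.1$ and $h.1$ matter only because the class definitions pin their $d$-values to $z-1$.)

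The gap is in your final paragraph. The assertion that activating $g.1$, $h.0$ or $h.1$ ``leaves the region of configurations from which $\Cc_1(z,z)$ can still be reached and thus cannot contribute to the supremum'' is precisely the non-trivial content of the observation, and it is asserted rather than proved. Two things are missing. First, unreachability of the target after such a move is not evident: the classes leave $par_{h.0}$ arbitrary, and if $par_{h.0}=f.0$ then $h.0$ can execute $HC_1$ and set $d_{h.0}=d_{f.0}+1$; since $d_{f.0}$ ranges over many values during the $\Cb_1\leftrightarrow\Cc_1$ shuttling, it could conceivably later equal $z-2$, at which point a second $HC_1$ at $h.0$ restores $d_{h.0}=z-1$, so an execution that perturbs $h.0$ might still reach $\Cc_1(z,z)$ --- possibly in more steps. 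Second, since $\const$ is a maximum over all executions reaching the target, the equality needs both inequalities, and the non-equivariant detours through $h.0$ (whose $HC_2$ always yields $d_{h.0}=1$ regardless of the shift) could a priori lengthen one side more than the other; the appeal to Observation~\ref{obs:obs1}, which merely exhibits one transition out of each class, cannot substitute for a worst-case analysis. To close the argument you would need an invariant showing that any execution from $\Cc_1(v,z)$ to $\Cc_1(z,z)$ that moves one of $g.1$, $h.0$, $h.1$ either does not exist or is dominated in length by a confined one; that is where the real work of this observation lies.
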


\begin{lemma}
\label{lem:lem1a}
In $\mathcal{G}_{1}$, for every $1 \leq v \leq z-2$, there is a execution
$e_1(k)$ of $\algoHC(D)$ (with $D \geq z$), starting in a
configuration of $\Cc_{1}(v,z)$ and where only rules $HC_2$ are executed,
which reaches a configuration of $\Cc_{1}(v+2,z)$ in at least $2$
steps.
\end{lemma}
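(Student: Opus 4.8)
The plan is to produce $e_1(k)$ by chaining together the two one-step transitions supplied by Observation~\ref{obs:obs1}. Fix integers $v,z$ with $1 \leq v \leq z-2$, fix $D \geq z$, and let $\gamma_0$ be any configuration of $\Cc_1(v,z)$. Since $1 \leq v < z \leq D$, the first item of Observation~\ref{obs:obs1} gives a step $\gamma_0 \mapsto \gamma_1$ of $\algoHC(D)$ in which $e.1$ and $f.0$ execute $HC_2$ and $\gamma_1 \in \Cb_1(v+1,z)$.

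Next, from $v \leq z-2$ we get $1 \leq v+1 < z \leq D$, so the second item of Observation~\ref{obs:obs1}, applied with parameter $v+1$ to $\gamma_1 \in \Cb_1(v+1,z)$, gives a step $\gamma_1 \mapsto \gamma_2$ of $\algoHC(D)$ using only rule $HC_2$ (at $e.1$ and $f.0$) with $\gamma_2 \in \Cc_1((v+1)+1,z) = \Cc_1(v+2,z)$. Taking $e_1(k)$ to be the execution fragment $\gamma_0 \mapsto \gamma_1 \mapsto \gamma_2$ then settles the claim: it starts in $\Cc_1(v,z)$, only rules $HC_2$ are executed along it, it reaches a configuration of $\Cc_1(v+2,z)$, and it does so in exactly $2$ steps, hence in at least $2$ steps.

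There is no real obstacle here: the whole argument is a two-fold application of Observation~\ref{obs:obs1}, and all the genuine work --- checking that $HC_2$ is enabled at $e.1$ and $f.0$, that $HC_1$ is disabled there, and that the resulting configuration belongs to the announced class --- has already been carried out in that observation. The only delicate point is the two range requirements: the first application of Observation~\ref{obs:obs1} needs $v < z$ and the second needs $v+1 < z$, and both are equivalent to the hypothesis $v \leq z-2$; this is precisely why the statement asks for $v \leq z-2$ rather than $v \leq z-1$. One should also note that both transitions are steps of the same algorithm $\algoHC(D)$ for a single fixed $D \geq z$, so their concatenation is a legitimate execution fragment; this is immediate since Observation~\ref{obs:obs1} assumes a common bound $z \leq D$.
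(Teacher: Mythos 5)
Your proof is correct and is exactly the argument the paper intends: the paper's own proof is just ``Immediate from Observation~\ref{obs:obs1},'' and you have filled in the two chained applications (first item to go from $\Cc_1(v,z)$ to $\Cb_1(v+1,z)$, second item with parameter $v+1$ to reach $\Cc_1(v+2,z)$), correctly identifying that the hypothesis $v \leq z-2$ is what makes the second application's requirement $v+1 < z$ hold.
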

\begin{proof}
Immediate from Observation \ref{obs:obs1}.
\end{proof}

\begin{corollary}\label{cor:co1}~

\begin{itemize}
\item If $1 \leq v \leq z$ and $z >1$ then $\const(v,z+2,1) \geq z-v$.
\item Let $k \geq 1$. In $\mathcal{G}_{1}$, there is a execution of
  $\algoHC(D)$, with $D\geq 2k+3$, which starts in a configuration of
  $\Cc_{1}(1,2k+3)$, contains only executions of rules $HC_2$, and
  reaches a configuration of $\Cc_{1}(2k+3,2k+3)$ in at least $2k+2$
  steps.
\item $\const(1,5,1) = 4$.
\end{itemize}
\end{corollary}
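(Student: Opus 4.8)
All three items follow from the one-step transitions of Observation~\ref{obs:obs1}, chained appropriately, together with the identity $\Cb_1(z,z)=\Cc_1(z,z)=\Cd_1(z,z)$; the only genuinely new work is the matching upper bound $\const(1,5,1)\le 4$.

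For the lower bounds (the first and second items, and the ``$\ge$'' half of the third), the plan is to build, from any configuration of $\Cc_1(v,z')$ with $1\le v<z'$ and any $D\ge z'$, an execution of $\algoHC(D)$ in which only rule $HC_2$ is ever executed and which visits, one step at a time, $\Cc_1(v,z')\mapsto\Cb_1(v+1,z')\mapsto\Cc_1(v+2,z')\mapsto\cdots$, the first argument increasing by exactly $1$ at every step. Each such step is furnished by Observation~\ref{obs:obs1} as long as the current first argument is at most $z'-1$, which holds all along the path from $v$ to $z'$. After exactly $z'-v$ steps the path reaches a configuration whose relevant distance values all equal $z'$; this configuration lies in $\Cc_1(z',z')$ if $z'-v$ is even and in $\Cb_1(z',z')$ if $z'-v$ is odd, but $\Cb_1(z',z')=\Cc_1(z',z')$, so in either case $\Cc_1(z',z')$ is reached. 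Taking $z'=z+2$ (legal since $1\le v\le z<z+2$) yields an $HC_2$-only execution of length $z+2-v\ge z-v$, whence $\const(v,z+2,1)\ge z-v$ (first item); the instance $v=1$, $z'=2k+3$ gives the $HC_2$-only execution of length $2k+2$ from $\Cc_1(1,2k+3)$ to $\Cc_1(2k+3,2k+3)$ claimed in the second item; and the instance $v=1$, $z'=5$ gives the $4$-step execution $\Cc_1(1,5)\mapsto\Cb_1(2,5)\mapsto\Cc_1(3,5)\mapsto\Cb_1(4,5)\mapsto\Cc_1(5,5)$, so $\const(1,5,1)\ge 4$.

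For the upper bound $\const(1,5,1)\le 4$ I would work directly in the $7$-node graph $\mathcal{G}_1$ and bound, over all $D\ge 5$, the length of any execution of $\algoHC(D)$ that starts in a configuration of $\Cc_1(1,5)$ and eventually reaches $\Cc_1(5,5)$. The $d$-values stay confined to $[0,5]$, $d_\Root=0$ is constant, and $g.1,h.1$ have degree $1$; so only a handful of combinations of $(d_{e.1},d_{f.0},d_{f.1})$ and of the three parent pointers are reachable while an execution is still able to reach $\Cc_1(5,5)$. I would enumerate these and show that from each of them every legal step either advances one position along the cycle $\Cc_1(1,5),\Cb_1(2,5),\Cc_1(3,5),\Cb_1(4,5),\Cc_1(5,5)$, or is a ``shortcut'' that reaches $\Cc_1(5,5)$ in fewer steps, or produces a configuration from which $\Cc_1(5,5)$ can no longer be reached (typically because $HC_2$ becomes permanently disabled at all of $e.1,f.0,f.1$, or because one of $h.0,h.1,g.1$ has moved in a way that freezes its distance value away from $z-1=4$). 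Hence the longest successful execution is the cycle itself, of length $4$, and together with the lower bound this gives $\const(1,5,1)=4$.

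The main obstacle is exactly this last case analysis: one must verify, for every deviation from the four-step cycle, that it either shortens the trajectory or blocks all further progress toward $\Cc_1(5,5)$. A monotone quantity such as $\min\{d_{e.1},d_{f.0},d_{f.1}\}$ — which takes the values $1,2,3,4,5$ along the cycle — is a natural aid, but it does not by itself rule out ``wasteful'' steps (e.g. early moves of $f.0$ or $f.1$ alone, or of $g.1$), so a bounded but somewhat tedious exploration of the reachable states of $\mathcal{G}_1$ seems unavoidable.
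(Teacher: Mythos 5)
Your treatment of the two lower-bound items and of the inequality $\const(1,5,1)\ge 4$ is correct and coincides with the paper's (implicit) argument: chain the single-step transitions of Observation~\ref{obs:obs1} (packaged as Lemma~\ref{lem:lem1a}) so that the first argument climbs from $v$ to $z'$ one unit per step, and use $\Cb_1(z',z')=\Cc_1(z',z')$ to absorb the parity of $z'-v$; instantiating $z'=z+2$, $z'=2k+3$, and $z'=5$ yields the three items. The one genuine gap is the upper bound $\const(1,5,1)\le 4$ required for the equality in the third item: you only sketch an exhaustive exploration of the reachable configurations of $\mathcal{G}_1$ without carrying it out, and that case analysis is less innocuous than it may appear --- already in a configuration of $\Cc_1(1,5)$, rule $HC_2$ is enabled at $f.1$ (its parent $h.1$ has distance value $4$ while $\mind(f.1)=1$), and executing it leaves the $\Cb_1/\Cc_1/\Cd_1$ classes altogether, so one must track genuinely new configurations rather than just deviations along your five-configuration cycle. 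Note, however, that the paper itself asserts the equality without proof and only ever uses $\const(1,5,1)\ge 4$ afterwards (in the unlabeled corollary following Theorem~\ref{theo:nbTotal} and in the base case of Lemma~\ref{lem:bound}), so your argument establishes everything the rest of the paper actually relies on.
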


The following definition generalizes Definition~\ref{def:def1}.
 
\begin{definition}[Graph $\mathcal{G}_{i+1}$]
\label{def:defI}
Let $\mathcal{G}_{i+1}= (V_{i+1}, E_{i+1})$ be the undirected graph, where
\begin{itemize}
\item $V_{i+1} = V_i \cup \{e.i+1, f.i+1, g.i+1, h.i+1 \}$ and
\item $E_{i+1} = E_{i} \cup E'_{i+1}$, where $E'_{i+1} = \{\{f.i,
  e.i+1\}, \{g.i+1, e.i+1\}, \{e.i+1, f.i+1\}, \{f.i+1, h.i+1\}\}$.
\end{itemize}
\end{definition}

We mainly consider four classes of configurations for any
graph $\mathcal{G}_{i+1}$. In all consider configurations:
\begin{itemize} 
\item the distance value of $g.i+1$ and $h.i+1$ 
is $z-1$,
\item for every $ j \in [0,i+1]$, $d_{e.j}=z$ if and only if
  $par_{e.j} = g.j$, and
\item for every $ j \in [0,i+1]$, $d_{f.j}=z$ if and only if
  $par_{f.j} = h.j$.
\end{itemize}

The four classes of configurations are then defined as follows where $x \geq 1$ and $ z >1$:
\begin{itemize}
\item In the configurations of $\Ca_{i+1}(x,z)$, the configuration of the subgraph $\mathcal{G}_i$ belongs to
$\Cc_i(x,z)$, the distance value of
$e.i+1$ is $x$ and the distance value of $f.i+1$ is $z$.

\item In the configurations of  $\Cb_{i+1}(x,z)$,  the configuration of the subgraph $\mathcal{G}_i$ belongs to
$\Cd_i(x,z)$, the distance value of
$e.i+1$ and $f.i+1$ is $z$.

\item In the configurations of $\Cc_{i+1}(x,z)$, the configuration of
  the subgraph $\mathcal{G}_i$ belongs to $\Cc_i(z,z)$, the distance
  value of $e.i+1$ is $x$, and the distance value of $f.i+1$ is $z$.

\item In the configurations of $\Cd_{i+1}(x,z)$, the configuration of
  the subgraph $\mathcal{G}_i$ belongs to $\Cc_i(z,z)$, the distance
  value of $e.i+1$ is $z$, and the distance value of $f.i+1$ is $x$.
\end{itemize}
Except otherwise mentioned, all other variables have arbitrary values.
Notice that we have $\Ca_{i+1}(z,z)=\Cb_{i+1}(z,z)$ and
$\Cc_{i+1}(z,z)=\Cd_{i+1}(z,z)$. Some illustrative examples are given in
Figures~\ref{fig:fig2}.

\begin{figure}
\begin{centering}
\scalebox{0.5}{\input{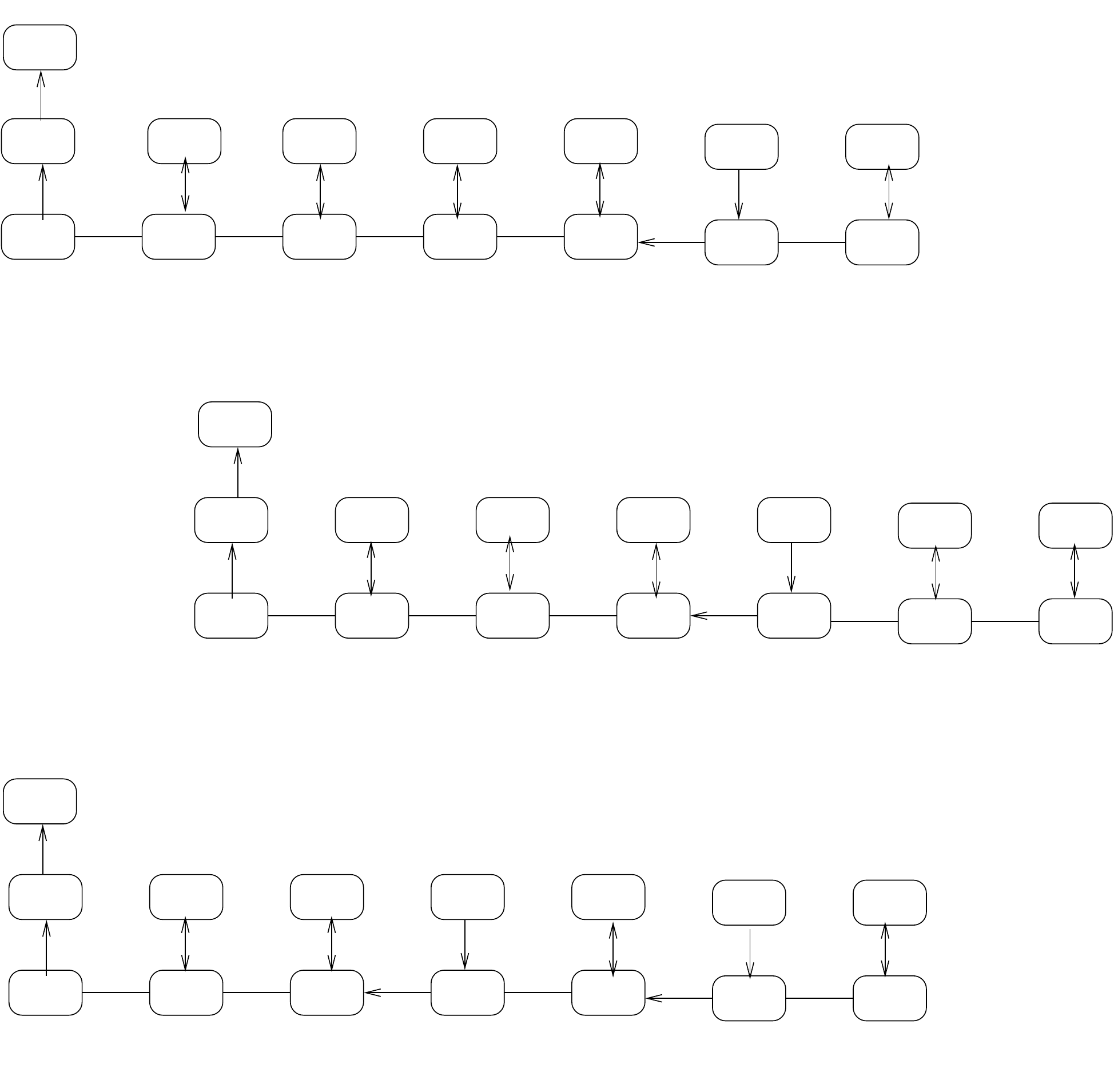_t}}
\caption{Some configurations of $\mathcal{G}_3$}
\label{fig:fig2}
\end{centering}
\end{figure}

\begin{observation}
\label{obs:obsi}
Let $v, z, D$ be three integers such that $1 \leq v < z \leq D$.
\begin{itemize}
\item From any configuration of $\Cc_{i+1}(v,z)$, 
a configuration of $\Cb_{i+1}(v+1,z)$ is reachable  
in a single step of $\algoHC(D)$, where 
 $e.i+1$ and $f.i$ execute  $HC_2$.
\item From a configuration of $\Cb_{i+1}(v,z)$, a configuration  of
$\Ca_{i+1}(v+1,z)$ is reached 
in a single step of $\algoHC(D)$, where 
  $e.i+1$, $e.i$, and $f.i$ execute  $HC_2$. 
\end{itemize}
\end{observation}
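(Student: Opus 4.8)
The plan is to reduce each of the two claims to a purely mechanical verification, after unfolding the recursive definitions of $\Cc_{i+1}$, $\Cb_{i+1}$, $\Ca_{i+1}$ one level (down to $\Cc_i$, $\Cd_i$, with the explicit classes $\Cb_1,\Cc_1,\Cd_1$ as base case) and using the two global invariants $d_{e.j}=z\Leftrightarrow par_{e.j}=g.j$ and $d_{f.j}=z\Leftrightarrow par_{f.j}=h.j$ that hold for \emph{every} index $j$ in all considered configurations (so the statement is meant for $i\ge 1$). First I would translate each source class into explicit values of $d$ and $par$ for the handful of processes living in the neighbourhoods of the announced movers, namely $e.i$, $f.i$, $e.i+1$, $f.i+1$, $g.i$, $g.i+1$, $h.i$, $h.i+1$ and $f.(i-1)$; in $\mathcal{G}_{i+1}$ these neighbourhoods are $\neig_{e.i}=\{f.(i-1),g.i,f.i\}$, $\neig_{f.i}=\{e.i,h.i,e.i+1\}$, $\neig_{e.i+1}=\{f.i,g.i+1,f.i+1\}$, all of size at most three. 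The one modelling remark that makes the ``single step'' claims work is that, under composite atomicity, every process reads its neighbours before any write of the step occurs, so the simultaneous $\update$'s at the announced processes each evaluate $\mind(\cdot)$ and $\Best(\cdot)$ in the source configuration and do not interfere.

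For the first bullet, starting from $\gamma\in\Cc_{i+1}(v,z)$ I would read off $d_{e.i}=z$, $d_{f.i}=z$ with $par_{f.i}=h.i$, $d_{h.i}=d_{g.i+1}=z-1$, $d_{f.i+1}=z$, and $d_{e.i+1}=v$ with $par_{e.i+1}\neq g.i+1$. Then $\mind(f.i)=v$ and $\mind(e.i+1)=z-1$ in $\gamma$, while $d_{par_{f.i}}=z-1$ and $d_{par_{e.i+1}}=z$ (the latter since $par_{e.i+1}\in\{f.i,f.i+1\}$); hence, using $v<z$, rule $HC_2$ is enabled at both $f.i$ and $e.i+1$. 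Letting the distributed unfair daemon activate exactly $\{f.i,e.i+1\}$, the two $\update$'s set $d_{f.i}\gets v+1$, $par_{f.i}\gets e.i+1$ and $d_{e.i+1}\gets z$, $par_{e.i+1}\gets g.i+1$ (the new parent of $e.i+1$ is forced to $g.i+1$, the strict $d$-minimiser of $\neig_{e.i+1}$ in $\gamma$, which re-establishes the $e$-invariant). Since no other process moved, the $\mathcal{G}_i$-part has gone from $\Cc_i(z,z)$ to $\Cd_i(v+1,z)$ (only $f.i$ changed there) while $d_{e.i+1}=d_{f.i+1}=z$ with $par_{e.i+1}=g.i+1$: this is exactly $\Cb_{i+1}(v+1,z)$.

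For the second bullet the argument is the same with three movers. From $\gamma'\in\Cb_{i+1}(v,z)$ one extracts $d_{e.i+1}=z$ with $par_{e.i+1}=g.i+1$, $d_{f.i+1}=z$ with $par_{f.i+1}=h.i+1$, and, from the $\mathcal{G}_i$-part being in $\Cd_i(v,z)$: $d_{e.i}=z$ with $par_{e.i}=g.i$, $d_{f.i}=v$ with $par_{f.i}\neq h.i$, $d_{f.(i-1)}=z$, and $d_{g.i}=d_{h.i}=d_{g.i+1}=z-1$. Computing in $\gamma'$ one gets $\mind(e.i)=v$, $\mind(f.i)=z-1$, $\mind(e.i+1)=v$; comparing with $d_{par}$ equal to $z-1$, $z$ and $z-1$ respectively, rule $HC_2$ is enabled at $e.i$, $f.i$, $e.i+1$. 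Activating exactly $\{e.i,f.i,e.i+1\}$, the $\update$'s give $d_{e.i}\gets v+1$, $d_{f.i}\gets z$ with $par_{f.i}\gets h.i$ (its strict $d$-minimiser, re-establishing the $f$-invariant), and $d_{e.i+1}\gets v+1$; as $f.i+1$ and all of $\mathcal{G}_{i-1}$ are untouched, the resulting configuration has its $\mathcal{G}_i$-part in $\Cc_i(v+1,z)$, $d_{e.i+1}=v+1$, $d_{f.i+1}=z$, i.e. it lies in $\Ca_{i+1}(v+1,z)$.

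The routine but bookkeeping-heavy obstacle is keeping straight which $d$- and $par$-values the class definitions (together with the two $e/f$-invariants) actually pin down versus which are left arbitrary, and checking at each step that the invariants are preserved and that no process outside the announced set is \emph{forced} to move (it may well be enabled, but the distributed daemon is free to leave it untouched). The one genuinely delicate point is the enabledness of $HC_2$ at the movers whose parent is a $g$-node, where $d_{par}=z-1$: this needs the \emph{strict} inequality $z-1>\mind(\cdot)=v$, i.e. the effective range $v\le z-2$ (the range actually used in applications, compare Lemma~\ref{lem:lem1a}); the boundary case $v=z-1$, where the target classes collapse to $\Ca_{i+1}(z,z)=\Cb_{i+1}(z,z)$, has to be excluded or argued separately.
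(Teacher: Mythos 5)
The paper states this as an \emph{observation} and supplies no proof, so there is nothing to compare against; your direct verification is exactly the computation the authors implicitly rely on, and it checks out. Your neighbourhood tables, the $\mind$/$\Best$ evaluations in the source configuration, the resulting writes, and the identification of the target classes (including the preservation of the $e/f$ parent invariants) are all correct. Your closing caveat is also well taken, with one small correction and one reassurance. The correction: in the first bullet the problematic mover is $f.i$, whose parent is the \emph{$h$-node} $h.i$ (not a $g$-node), but the issue is the same since $d_{h.i}=z-1$ and $\mind(f.i)=v$, so $HC_2$ at $f.i$ likewise needs $v<z-1$. The reassurance: the boundary $v=z-1$ is not entirely avoided downstream --- Lemma~\ref{lem:lemia} chains the two bullets starting from $v\le z-2$, so the second bullet is invoked with parameter $v+1$, which can equal $z-1$; in that case $e.i$ and $e.i+1$ already hold the target value $z$, only $f.i$ is enabled and moves, and the transition to $\Ca_{i+1}(z,z)$ is still a single step consisting only of $HC_2$ executions. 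Hence the step-counting arguments that use the observation survive, even though the literal claim about \emph{which} processes execute $HC_2$ fails at that boundary. It would be worth recording this degenerate case explicitly rather than leaving it as ``to be excluded or argued separately.''
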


The notation below generalizes Notation~\ref{notation1}.

\begin{notation}
  Let $v$ and $z$ be two integers such that $v \leq z$. Let $i\geq 1$. Let
  $\const(v,z,i)$ be the maximal number of steps of $\algoHC(D)$
  (with $D \geq z$) to reach a configuration of $\Cc_i(z,z)$ from a
  configuration of $\Cc_i(v,z)$ in the graph $\mathcal{G}_j$ with $j \geq i$.
\end{notation}

\begin{observation}\label{obs:i+1}
  Let $v$ and $z$ be two integers such that $1 \leq v \leq z$ and $1 < z$. 
We have
  $\const(v+2,z+2,i) = \const(v,z,i)$ and $\const(v, z,i+1) \geq
  \const(v,z,i)$.
\end{observation}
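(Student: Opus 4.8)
Both parts describe how $\const(v,z,i)$ — the length of a longest run of $\algoHC(D)$ driving the relevant gadget of $\mathcal G_j$ ($j\ge i$) from a configuration of $\Cc_i(v,z)$ to a configuration of $\Cc_i(z,z)$ — transforms under the two ``reindexings'' of the construction: shifting $(v,z)$ up by $2$, and raising the level from $i$ to $i+1$.

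\textbf{Part 1: $\const(v+2,z+2,i)=\const(v,z,i)$.} I would exhibit a length-preserving bijection between the runs realizing the two sides. Let $\phi$ add $2$ to $d_p$ for every non-root $p$ and leave every $par_p$ unchanged. By Definition~\ref{def:defI} and the descriptions of the classes, $\phi$ is a bijection from the configurations of $\Cc_i(v,z)$ onto those of $\Cc_i(v+2,z+2)$, and from $\Cc_i(z,z)$ onto $\Cc_i(z+2,z+2)$. The heart of the argument is that $\phi$ is a bisimulation for $\algoHC(D)$ on the region these runs visit: the guard of $HC_2$ (the strict inequality $d_{par_p}>\mind(p)$) and its action ($d_p\gets\mind(p)+1$, $par_p\gets\Best(p)$) are invariant under a uniform additive shift of all distances; the only process whose distance is pinned, and hence not shifted, is $\Root$, but along any run that ever reaches $\Cc_i(z,z)$ the vertices $g.j$, $h.j$ and $\Root$ never move (once $h.0$ leaves the value $z-1$ it is deadlocked), so $\Root$'s pinned value is invisible; and the threshold $D$ never intervenes since all distances stay in $[1,z]$ (resp.\ $[3,z+2]$) while $D\ge z+2$. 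Hence $\gamma\mapsto\gamma'$ iff $\phi(\gamma)\mapsto\phi(\gamma')$ with the same processes firing the same rules, the two families of runs are in length-preserving bijection, and taking maxima gives the equality. (The preceding observation $\const(v+2,z+2,1)=\const(v,z,1)$ is exactly the case $i=1$.)

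\textbf{Part 2: $\const(v,z,i+1)\ge\const(v,z,i)$.} The plan is to unfold one level of the recursion furnished by Observation~\ref{obs:obsi} and compare term by term. From a configuration of $\Cc_{i+1}(v,z)$, the two steps of Observation~\ref{obs:obsi} reach a configuration of $\Ca_{i+1}(v+2,z)$, whose $\mathcal G_i$-restriction lies in $\Cc_i(v+2,z)$; the daemon may now keep $e.i+1,f.i+1,g.i+1,h.i+1$ idle (inside $\mathcal G_i$ they touch only $f.i$, and they do not change what $f.i$ sees, since $f.i$ keeps the value $z$ during such a $\mathcal G_i$-run) and replay a longest $\mathcal G_i$-run from $\Cc_i(v+2,z)$ to $\Cc_i(z,z)$, reaching $\Cc_{i+1}(v+2,z)$; iterating yields $\const(v,z,i+1)\ge\sum_{\ell\ge 0}\bigl(2+\const(v+2+2\ell,z,i)\bigr)$, the sum running until the first argument reaches $z$. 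Unfolding one level down expresses $\const(v,z,i)$ through a sum of the shape $\sum_{\ell\ge 0}\bigl(c_\ell+\const(v+2+2\ell,z,i-1)\bigr)$ with $c_\ell\le 2$, and an induction on $i$ with hypothesis $\const(w,z,i)\ge\const(w,z,i-1)$ for all $w\le z$ then closes the argument by a termwise comparison of the two sums; the base case $i=1$ is checked directly from Lemma~\ref{lem:lem1a} and Observation~\ref{obs:obs1}, where $\const(v,z,1)$ is linear in $z-v$ whereas the first iteration alone already contributes $\ge z-v$ steps to $\const(v,z,2)$.

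\textbf{Expected obstacle.} The fiddly-but-routine part is the index bookkeeping: checking that after each use of Observations~\ref{obs:obs1}/\ref{obs:obsi} the configuration lands in the class with exactly the stated parameters. The genuinely delicate point, specific to Part 2, is turning the recursive \emph{lower} bound on $\const(v,z,i+1)$ into a comparison with $\const(v,z,i)$: this really needs the recursion to be an \emph{equality} — that is, a matching upper bound on $\const(v,z,i)$ for $i\ge 2$, equivalently the statement that the unfair daemon cannot prolong a run beyond the recursive value — or else a direct embedding argument that reproduces a longest level-$i$ run inside a level-$(i+1)$ run without loss. Establishing that matching upper bound is where I expect the real work to lie; once the boundary vertices $g.j,h.j,\Root$ are seen to be inert, Part 1 is essentially formal.
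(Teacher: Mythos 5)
The paper asserts this observation without any proof, so there is no ``paper proof'' to compare against; your proposal has to stand on its own. Your Part~1 is correct and is surely the argument the authors had in mind: the guards and actions of $HC_2$, and of $HC_1$ whenever the parent is not $\Root$, commute with adding a constant to every non-root $d$-value; the only non-equivariant moves are those of $h.0$ that involve its neighbour $\Root$ ($HC_2$, or $HC_1$ with $par_{h.0}=\Root$), and each of these permanently fixes $d_{h.0}=1$ with $h.0$ disabled thereafter, so no such move can occur in a run that still reaches $\Cc_i(z,z)$, where $d_{h.0}=z-1$. (Your parenthetical ``once $h.0$ leaves $z-1$ it is deadlocked'' is slightly too strong --- $h.0$ can leave and return via an equivariant $HC_1$ move with $par_{h.0}=f.0$ --- but the conclusion you need, namely that the \emph{non-equivariant} moves are dead ends, is exactly right.) So the shift is a length-preserving bijection on the runs over which $\const$ is maximised, and the first equality follows.

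Part~2 is where the genuine gap lies, and you have diagnosed it precisely rather than closed it. The inequality $\const(v,z,i+1)\geq\const(v,z,i)$ compares two maxima over \emph{all} executions, but the only handle the paper gives on these maxima is a family of \emph{lower} bounds (Observation~\ref{obs:obsi}, Lemma~\ref{lem:lemia}); unfolding them yields $\const(v,z,i+1)\geq\sum_\ell\bigl(2+\const(v+2+2\ell,z,i)\bigr)$, and nothing rules out that $\const(v,z,i)$ exceeds the analogous sum one level down, since a priori the level-$i$ maximum could be attained by an execution that does not follow the cascade at all (for instance one that uses $HC_1$ to inflate $d$-values toward $D$ before descending). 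A termwise comparison therefore needs the recursion to be an equality, i.e.\ a matching \emph{upper} bound on $\const$, which neither you nor the paper establishes --- and which, for the unrestricted maximum and arbitrary $D\geq z$, is dubious as stated (the exact value $\const(1,5,1)=4$ claimed in Corollary~\ref{cor:co1} is only plausible for cascade-style executions). The clean repair, costless for the rest of the paper, is to \emph{define} $\const(v,z,i)$ as the length of the specific nested $HC_2$-only execution constructed via Observations~\ref{obs:obs1} and~\ref{obs:obsi} (equivalently, as the value of the recursion): both halves of the observation then hold by construction and induction, and since Theorem~\ref{theo:bound} only requires a lower bound witnessed by one exhibited execution, every downstream use of $\const$ is unaffected.
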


\begin{lemma}
\label{lem:lemia}
Let $z$ and $D$ be two integers such that $z \leq D$. Let $i \geq 1$.
In the graph $\mathcal{G}_{j+1}$ with $j \geq i$, for every $1 \leq v
\leq z-2$, there is an execution of $\algoHC(D)$ starting from a
configuration of $\Cc_{i+1}(v,z)$, where only rules $HC_2$ are
executed, which reaches a configuration of $\Cc_{i+1}(v+2,z)$ in at
least $\const(v+2, z,i)+2$ steps.
\end{lemma}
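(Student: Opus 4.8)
The plan is to build the required execution in two phases. First, a fixed two-step \emph{head} drives the system from $\Cc_{i+1}(v,z)$ into $\Ca_{i+1}(v+2,z)$ using only rule $HC_2$. Second, a \emph{tail} replays, inside the subgraph $\mathcal{G}_i$ alone, a longest execution of $\algoHC(D)$ carrying $\Cc_i(v+2,z)$ to $\Cc_i(z,z)$; that tail has length $\const(v+2,z,i)$ and moves none of $e.i+1$, $f.i+1$, $g.i+1$, $h.i+1$. Concatenating the two phases yields, in $\mathcal{G}_{j+1}$, an $HC_2$-only execution from $\Cc_{i+1}(v,z)$ to $\Cc_{i+1}(v+2,z)$ of length $2+\const(v+2,z,i)$, which is what is claimed.

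For the head I would fix any configuration $\gamma_0\in\Cc_{i+1}(v,z)$ of $\mathcal{G}_{j+1}$ and apply Observation~\ref{obs:obsi} twice: a step $\gamma_0\mapsto\gamma_1$ in which $e.i+1$ and $f.i$ execute $HC_2$, with $\gamma_1\in\Cb_{i+1}(v+1,z)$, followed by a step $\gamma_1\mapsto\gamma_2$ in which $e.i+1$, $e.i$ and $f.i$ execute $HC_2$, with $\gamma_2\in\Ca_{i+1}(v+2,z)$. The hypothesis $1\le v\le z-2$ keeps every value involved in $[1..z]\subseteq[1..D]$, so Observation~\ref{obs:obsi} is applicable to both applications (the second with parameter $v+1$). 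In the boundary case $v=z-2$ the tail is empty, since $\const(z,z,i)=0$ and $\Ca_{i+1}(z,z)=\Cc_{i+1}(z,z)$, so the head alone already reaches a configuration of $\Cc_{i+1}(v+2,z)$ in $2=\const(v+2,z,i)+2$ steps.

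For the tail, observe that by definition of $\Ca_{i+1}(v+2,z)$ the restriction of $\gamma_2$ to $\mathcal{G}_i$ is a configuration of $\Cc_i(v+2,z)$, while $d_{e.i+1}=v+2$, $d_{f.i+1}=z$, $d_{g.i+1}=d_{h.i+1}=z-1$ and the parent conditions attached to $e.i+1$ and $f.i+1$ hold. By definition, $\const(v+2,z,i)$ is the maximal length of an execution of $\algoHC(D)$ (with $D\ge z$, in a graph $\mathcal{G}_{j'}$ with $j'\ge i$) driving $\Cc_i(v+2,z)$ to $\Cc_i(z,z)$; this maximum is attained because every such execution is finite (Theorem~\ref{term}). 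I would instantiate a maximum-length such execution inside $\mathcal{G}_{j+1}$ (legitimate since $j+1\ge i$), chosen so that only processes of $\mathcal{G}_i$ move along it and — by induction on $i$, with Lemma~\ref{lem:lem1a} as the base case and Observation~\ref{obs:obsi} providing the composition in the inductive step — so that it uses only rule $HC_2$. Appending it to $\gamma_2$ gives $\gamma_0\mapsto\cdots\mapsto\gamma_{2+m}$ with $m\ge\const(v+2,z,i)$; since no process of $\{e.i+1,f.i+1,g.i+1,h.i+1\}$ moved during the tail, their states are unchanged while the $\mathcal{G}_i$-part now lies in $\Cc_i(z,z)$, so $\gamma_{2+m}\in\Cc_{i+1}(v+2,z)$ and the execution has length $2+m\ge\const(v+2,z,i)+2$.

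\textbf{Main obstacle.} The delicate point is that the tail is replayed in $\mathcal{G}_{j+1}$, where $f.i$ carries the extra neighbour $e.i+1$ (edge $\{f.i, e.i+1\}$) absent from $\mathcal{G}_i$, so one must argue that a longest ``$\mathcal{G}_i$-internal'' execution stays legal and equally long here. This is exactly why the definition of $\const(v,z,i)$ ranges over all $\mathcal{G}_{j'}$ with $j'\ge i$ (cf. Observation~\ref{obs:i+1}): the frozen value $d_{e.i+1}=v+2$ can only lower $\mind(f.i)$, which never neutralises an $HC_2$ move of $f.i$ that was enabled in $\mathcal{G}_i$, and adding layers beyond $\mathcal{G}_{i+1}$ changes nothing for a $\mathcal{G}_i$-internal execution (the new nodes only touch $f.i+1$), so the maximum is already realised in $\mathcal{G}_{i+1}\subseteq\mathcal{G}_{j+1}$. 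The second thing to pin down carefully is that this maximum-length execution can indeed be taken $HC_2$-only; this is precisely the content propagated by the induction on $i$ anchored at Lemma~\ref{lem:lem1a}.
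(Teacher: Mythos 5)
Your proposal is correct and follows essentially the same route as the paper: a two-step head via Observation~\ref{obs:obsi} taking $\Cc_{i+1}(v,z)$ through $\Cb_{i+1}(v+1,z)$ to $\Ca_{i+1}(v+2,z)$, followed by a tail of at least $\const(v+2,z,i)$ steps confined to the subgraph $\mathcal{G}_i$ that carries $\Cc_i(v+2,z)$ to $\Cc_i(z,z)$ while leaving $e.i+1$, $f.i+1$, $g.i+1$, $h.i+1$ untouched. Your additional care about why the tail remains legal and $HC_2$-only in the larger graph (via induction anchored at Lemma~\ref{lem:lem1a}) makes explicit a point the paper leaves implicit in the phrase ``according to the definition of $\const(v,z,i)$'', but it does not change the argument.
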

\begin{proof}
  From a configuration of $\Cc_{i+1}(v,z)$, a configuration of
  $\Ca_{i+1}(v+2,z)$ is reached in two steps of
  $\algoHC(D)$ where only rules $HC_2$ are executed, by Observation
  \ref{obs:obsi}.  From a configuration of $\Ca_{i+1}(v+2,z)$, a
  configuration of $\Cc_{i+1}(v+2,z)$ is reached in at least
  $\const(v+2,z,i)$ steps of $\algoHC(D)$ where processes of the
  subgraph $\mathcal{G}_{i}$ only execute rules $HC_2$ (according to
  the definition of $\const(v,z,i)$).
\end{proof}

From Observation~\ref{obs:i+1} and Lemma~\ref{lem:lemia}, we can
deduce the following corollary.

\begin{corollary}
\label{cor:coric} ~

\begin{itemize}
\item Let $1 \leq v \leq z$ and $1 < z$.  
$\const(v,z+2,i+1) \geq 2+ \const(v,z,i)+
  \const(v,z,i+1)$.
\item Let $j \geq k$.  In the graph $\mathcal{G}_{j}$, there is a
  execution $e_{k}(j)$ of $\algoHC(D)$, with $D \geq 2j+3$, which
  starts in a configuration of $\Cc_{k}(1,2j+3)$, contains only
  executions of rules $HC_2$, and reaches a configuration of
  $\Cc_{k}(2j+3,2j+3)$.
\end{itemize}
\end{corollary}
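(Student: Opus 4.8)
The plan is to prove the two items separately, both times by concatenating the short $HC_2$-only executions produced by Observation~\ref{obs:obsi} and Lemma~\ref{lem:lemia} and accounting for their lengths with $\const$, which Observation~\ref{obs:i+1} then lets me rewrite. The common ingredient is that the classes $\Ca_{i+1}(\cdot,\cdot)$, $\Cb_{i+1}(\cdot,\cdot)$, $\Cc_{i+1}(\cdot,\cdot)$, $\Cd_{i+1}(\cdot,\cdot)$ constrain only the variables of the ``active'' processes $e.\bullet$, $f.\bullet$, $g.\bullet$, $h.\bullet$, and the rules executed by Observation~\ref{obs:obsi} and inside Lemma~\ref{lem:lemia} inspect only those variables; so the executions they furnish are available from any configuration of the class at hand and may be glued end to end as long as the junction configurations lie in the prescribed classes.

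For the first item, fix $1 \le v \le z$ with $z > 1$. By definition $\const(v, z+2, i+1)$ is the maximum, over $D \ge z+2$ and over graphs $\mathcal{G}_j$ with $j \ge i+1$, of the number of steps needed to reach a configuration of $\Cc_{i+1}(z+2, z+2)$ from one of $\Cc_{i+1}(v, z+2)$, so it suffices to exhibit one execution of that kind of length at least $2 + \const(v,z,i) + \const(v,z,i+1)$. In a first phase I apply Lemma~\ref{lem:lemia} with its second parameter set to $z+2$ (legitimate since $1 \le v \le z = (z+2)-2$ and $i \ge 1$): this gives an $HC_2$-only execution from $\Cc_{i+1}(v, z+2)$ to $\Cc_{i+1}(v+2, z+2)$ of length at least $\const(v+2, z+2, i) + 2$, which by Observation~\ref{obs:i+1} equals $\const(v,z,i) + 2$. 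In a second phase I start from the configuration of $\Cc_{i+1}(v+2, z+2)$ reached by the first phase and run a worst-case execution to $\Cc_{i+1}(z+2, z+2)$; by definition of $\const$ it can be taken of length $\const(v+2, z+2, i+1)$, which by Observation~\ref{obs:i+1} (applied with $i$ replaced by $i+1$) equals $\const(v,z,i+1)$. The concatenation of the two phases is an execution of the required form whose length is at least $\const(v,z,i) + 2 + \const(v,z,i+1)$, which yields $\const(v, z+2, i+1) \ge 2 + \const(v,z,i) + \const(v,z,i+1)$.

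For the second item only reachability matters, and the construction chains $j+1$ copies of a single ``$v \to v+2$'' move. If $k \ge 2$, I apply Lemma~\ref{lem:lemia} at level $k$ --- that is, with its index equal to $k-1$, which admits exactly the graphs $\mathcal{G}_j$ with $j \ge k$ --- taking second parameter $2j+3$ and $D \ge 2j+3$: for each odd $v \in \{1, 3, \dots, 2j+1\}$ it produces an $HC_2$-only execution of $\algoHC(D)$ from $\Cc_k(v, 2j+3)$ to $\Cc_k(v+2, 2j+3)$, and the concatenation of these for $v = 1, 3, \dots, 2j+1$ is the desired $e_k(j)$, going from $\Cc_k(1, 2j+3)$ through $\Cc_k(3, 2j+3), \dots$ up to $\Cc_k(2j+3, 2j+3)$. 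The case $k = 1$ is outside the scope of Lemma~\ref{lem:lemia}; there I use the second item of Corollary~\ref{cor:co1} with its parameter set to $j$, which directly supplies such an $HC_2$-only execution inside $\mathcal{G}_1$, and since that execution activates only processes of $\mathcal{G}_1$ --- whose neighbourhoods, hence rule guards, are unchanged in every $\mathcal{G}_j$ with $j \ge 1$ --- it lifts verbatim to $\mathcal{G}_j$ while its endpoints remain in the $\Cc_1(\cdot,\cdot)$ classes, which involve only $\mathcal{G}_1$-variables. The step I expect to require the most care is the gluing in the first item: since $\const(v+2, z+2, i+1)$ is a maximum over an entire class of configurations, one has to be sure that the worst-case second-phase execution is still available from the particular configuration in which the first phase ends. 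This holds because the scheme realizing $\const(v+2, z+2, i+1)$ is itself an iteration of the ``$v \to v+2$'' move and therefore begins with an Observation~\ref{obs:obsi} move, which is enabled from every configuration of $\Cc_{i+1}(v+2, z+2)$; thus the second phase's length is governed only by the coarse class of the first phase's endpoint, and the first phase may deliver whatever configuration of that class it naturally produces. The residual verifications --- that no suspended rule becomes $HC_1$-enabled and that each class is preserved step by step --- are exactly those already performed in the proofs of Observation~\ref{obs:obsi} and Lemma~\ref{lem:lemia}.
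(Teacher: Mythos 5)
Your proof is correct and follows exactly the derivation the paper intends (the paper gives no explicit proof, deducing the corollary directly from Observation~\ref{obs:i+1} and Lemma~\ref{lem:lemia}): phase one from Lemma~\ref{lem:lemia} rewritten via $\const(v+2,z+2,i)=\const(v,z,i)$, phase two from the definition of $\const$ rewritten via $\const(v+2,z+2,i+1)=\const(v,z,i+1)$, and the second item by iterating the $v\to v+2$ move (falling back on Corollary~\ref{cor:co1} for $k=1$). Your explicit treatment of the gluing issue and of the $k=1$ lifting is more careful than the paper itself; the only nitpick is that in $\mathcal{G}_j$ with $j\geq 2$ the node $f.1$ \emph{does} acquire the new neighbour $e.2$, so the lifting argument should be phrased in terms of the activated processes $e.1$ and $f.0$ (whose neighbourhoods are indeed unchanged) together with an appropriate choice of the ``arbitrary'' values outside $\mathcal{G}_1$.
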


\begin{notation}
$\nbTotal(2k+3) = \sum _{\ell=1}^{k}\const(1,2k+3,\ell)$. 
\end{notation}

\begin{definition}
  let $k \geq 1$.  Let $e^k$ be the execution of $\algoHC(D)$, with $D
  \geq 2k+3$, in the graph $\mathcal{G}_{k}$ defined as follows: $e^k$
  is the concatenation of $e_1(k) \ldots e_{k}(k)$.
 \end{definition}

 By definition, $e^k$ contains at least $\nbTotal(2k+3)$ steps,
 moreover those steps are only made of rules $HC_2$'s executions.

\begin{theorem}
\label{theo:nbTotal}
For all $k >1$, 
$\nbTotal(2k+3) = 2.\nbTotal(2k+1)+ 2k+ \const(3,2k+3,k)$.
\end{theorem}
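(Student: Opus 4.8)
The plan is to expand $\nbTotal(2k+3)$ according to its definition and then substitute the recursive expression for the summands provided by Corollary~\ref{cor:coric}, so that the resulting double sum telescopes into twice $\nbTotal(2k+1)$ plus a few boundary terms, which get identified with $2k$ and $\const(3,2k+3,k)$ via the shift identities of Observation~\ref{obs:i+1}. Concretely, I would start from $\nbTotal(2k+3) = \const(1,2k+3,1) + \sum_{\ell=2}^{k}\const(1,2k+3,\ell)$, peeling off the $\ell=1$ term since it is the only one not covered by the recurrence (whose right-hand side involves a level $i\ge 1$).

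\textbf{The computation.} For $2\le\ell\le k$ I apply Corollary~\ref{cor:coric} with $v=1$, $z=2k+1$ and $i+1=\ell$, which (read as an equality, see below) gives $\const(1,2k+3,\ell) = 2 + \const(1,2k+1,\ell-1) + \const(1,2k+1,\ell)$. The $k-1$ copies of the additive constant contribute $2(k-1)$; re-indexing, $\sum_{\ell=2}^{k}\const(1,2k+1,\ell-1) = \sum_{j=1}^{k-1}\const(1,2k+1,j) = \nbTotal(2k+1)$, while $\sum_{\ell=2}^{k}\const(1,2k+1,\ell) = \nbTotal(2k+1) + \const(1,2k+1,k) - \const(1,2k+1,1)$. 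Collecting everything yields $\nbTotal(2k+3) = 2\,\nbTotal(2k+1) + 2(k-1) + \const(1,2k+3,1) - \const(1,2k+1,1) + \const(1,2k+1,k)$. It then suffices to check $2(k-1) + \const(1,2k+3,1) - \const(1,2k+1,1) + \const(1,2k+1,k) = 2k + \const(3,2k+3,k)$. Observation~\ref{obs:i+1}, namely $\const(v+2,z+2,i)=\const(v,z,i)$ with $v=1$, $z=2k+1$, $i=k$, gives $\const(3,2k+3,k)=\const(1,2k+1,k)$, so these two terms cancel and what remains is the base-level identity $\const(1,2k+3,1) = \const(1,2k+1,1) + 2$. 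This last equality comes from the analysis of $\mathcal{G}_1$ (Observation~\ref{obs:obs1}, Lemma~\ref{lem:lem1a}): a configuration of $\Cc_1(v+2,z)$ is reached from $\Cc_1(v,z)$ in exactly two $HC_2$-steps, so $\const(1,\cdot,1)$ grows by one such round trip, i.e.\ by $2$, when $z$ increases by $2$; the value $\const(1,5,1)=4$ of Corollary~\ref{cor:co1} is consistent with this. Substituting back gives exactly $2k + \const(3,2k+3,k)$, as desired.

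\textbf{Main obstacle.} Once the two recursive identities are available \emph{as equalities}, the rest is routine re-indexing. The delicate point is precisely that Corollary~\ref{cor:coric}, Observation~\ref{obs:obs1}, and Lemma~\ref{lem:lem1a} are stated only as lower bounds (``reachable in $\ge\ldots$ steps''), so one needs the matching upper bounds: that among all executions of $\algoHC(D)$ with $D\ge z$, from every configuration of the relevant class, and restricted to steps executing only $HC_2$ (which is what $\const$ measures here), none can exceed $2+\const(1,z,i)+\const(1,z,i+1)$ steps to go from $\Cc_{i+1}(v,z)$ to $\Cc_{i+1}(v+2,z)$, respectively $\const(1,z,1)+2$ at level $1$. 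Proving this ``$\le$'' amounts to showing that the only way to push the distance values of $e.(i{+}1)$, $f.i$, etc.\ upward out of a $\Cc$-type configuration is the two-step pattern $\Cc_{i+1}\to\Cb_{i+1}/\Ca_{i+1}\to\Cc_{i+1}$ of Observation~\ref{obs:obsi}, which recursively forces the behaviour inside $\mathcal{G}_i$ to be counted by $\const(\cdot,\cdot,i)$ — i.e.\ optimality of the execution built in Lemma~\ref{lem:lemia}. If, on the contrary, only the ``$\ge$'' direction of these auxiliary facts is intended, the very same computation delivers $\nbTotal(2k+3)\ge 2\,\nbTotal(2k+1)+2k+\const(3,2k+3,k)$, which already suffices to drive the subsequent exponential lower bound; I would therefore first confirm which direction is actually established before committing to the equality.
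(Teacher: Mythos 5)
Your proposal is correct and follows essentially the same route as the paper: expand $\nbTotal(2k+3)=\sum_{\ell=1}^{k}\const(1,2k+3,\ell)$, substitute the recurrence of Corollary~\ref{cor:coric} (together with the shift identity of Observation~\ref{obs:i+1} and the base-level identity from Corollary~\ref{cor:co1}), and re-index so the sum collapses to $2\,\nbTotal(2k+1)$ plus the boundary terms $2k+\const(3,2k+3,k)$. Your closing caveat is well taken — the paper's own proof also invokes Corollaries~\ref{cor:co1} and~\ref{cor:coric} as if they were equalities even though they are stated as lower bounds, so strictly only the ``$\geq$'' direction of the theorem is established, which is indeed all that the subsequent exponential lower bound requires.
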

\begin{proof}
$\nbTotal(2k+3) = \sum _{\ell=1}^{k}\const(1,2k+3,\ell)$.\\
We have: 
\begin{itemize}
\item $\const(1,2k+3,1) = 2+ \const(1,2k+1,1)$, by Corollary \ref{cor:co1}. 
\item $\const(1,2k+3,k)
= 2 + \const(1,2k+1,k-1) +\const(3,2k+3,k) $, by Corollary \ref{cor:coric} and Observation~\ref{obs:i+1}. 
\item $\const(1,2k+3,\ell) = 2 + \const(1,2k+1,\ell-1) + \const(1,2k+1,\ell)$ 
for $\ell \in [2,k-1]$, by Corollary \ref{cor:coric}.
\end{itemize}
\noindent So, we can conclude that $\nbTotal(2k+3) = \sum
_{\ell=1}^{k-1}( 2.\const(1,2k+1,\ell))+ 2k+ \const(3,2k+3,k)$.
\end{proof}

The following corollary establishes a lower bound on the number of
steps of $e^k$ which is exponential on the graph diameter:
$2^{\frac{\diam -1}{2}}$.

\begin{corollary}
For all $k \geq 1$, 
$\nbTotal(2k+3) \geq 2^{k-1}.\const(1,5,1) \geq 2^{k+1}$.
\end{corollary}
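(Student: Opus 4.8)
The plan is to establish the first inequality $\nbTotal(2k+3) \geq 2^{k-1}\cdot\const(1,5,1)$ by induction on $k$, and then to derive the second inequality $2^{k-1}\cdot\const(1,5,1) \geq 2^{k+1}$ directly from the value $\const(1,5,1) = 4$ already computed in Corollary~\ref{cor:co1}. All the genuine combinatorial content has been packaged into Theorem~\ref{theo:nbTotal} and Corollary~\ref{cor:co1}, so the remaining work is essentially a clean unrolling of the recurrence.

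For the base case $k=1$, I would just expand the definition of $\nbTotal$: $\nbTotal(5) = \sum_{\ell=1}^{1}\const(1,5,\ell) = \const(1,5,1)$, so $\nbTotal(5) \geq 2^{0}\cdot\const(1,5,1)$ holds (with equality). Note that the base case must be treated separately, since Theorem~\ref{theo:nbTotal} is only stated for $k>1$.

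For the inductive step with $k \geq 2$, I would apply Theorem~\ref{theo:nbTotal} to get $\nbTotal(2k+3) = 2\cdot\nbTotal(2k+1) + 2k + \const(3,2k+3,k)$. Both additive terms $2k$ and $\const(3,2k+3,k)$ are nonnegative — the latter because it counts a number of steps — so $\nbTotal(2k+3) \geq 2\cdot\nbTotal(2k+1)$. The induction hypothesis for $k-1$ reads $\nbTotal(2k+1) \geq 2^{k-2}\cdot\const(1,5,1)$, whence $\nbTotal(2k+3) \geq 2\cdot 2^{k-2}\cdot\const(1,5,1) = 2^{k-1}\cdot\const(1,5,1)$, closing the induction. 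Finally, since $\const(1,5,1) = 4 = 2^{2}$ by Corollary~\ref{cor:co1}, we obtain $2^{k-1}\cdot\const(1,5,1) = 2^{k+1}$, which yields the second inequality (in fact as an equality).

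The main obstacle here is mild and purely organizational: keeping the index shift consistent (the hypothesis "for $k-1$" must be written as $\nbTotal(2k+1) \geq 2^{k-2}\cdot\const(1,5,1)$), invoking Theorem~\ref{theo:nbTotal} only in the allowed range $k>1$, and justifying that $2k$ and $\const(3,2k+3,k)$ may be dropped because they are nonnegative. No further estimation is needed.
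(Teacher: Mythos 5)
Your proof is correct and matches the intended derivation: the paper states this corollary without proof as an immediate consequence of Theorem~\ref{theo:nbTotal} and Corollary~\ref{cor:co1}, and your induction (separating the base case $k=1$, dropping the nonnegative terms $2k$ and $\const(3,2k+3,k)$, then substituting $\const(1,5,1)=4$) is exactly the expected unrolling of that recurrence.
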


We now propose a tighter bound on $\nbTotal(2k+3)$.

\begin{lemma}
\label{lem:lemVi2}
For all $k \geq 1$, 
$\const(3,2k+3,k) \geq 2.(2^{k}-1)$.
\end{lemma}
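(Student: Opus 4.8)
The plan is to prove the bound by a straightforward induction on $k$, feeding the already-established inequalities for $\const$ into one another. Write $a_k = \const(3,2k+3,k)$ for brevity; the goal is to show $a_k \geq 2(2^k-1)$ for all $k \geq 1$.

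For the base case $k=1$ I would not use Corollary~\ref{cor:co1} (which only yields $\const(3,5,1) \geq 0$), but instead Lemma~\ref{lem:lem1a} with $v=3$ and $z=5$, noting that $1 \leq v \leq z-2$. That lemma produces an execution of $\algoHC(D)$ from a configuration of $\Cc_1(3,5)$ to a configuration of $\Cc_1(5,5)=\Cc_1(z,z)$ made of at least two steps, so $a_1=\const(3,5,1) \geq 2 = 2(2^1-1)$.

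For the inductive step, assume $a_k \geq 2(2^k-1)$. Apply the first item of Corollary~\ref{cor:coric} with $v=3$, $z=2k+3$, $i=k$ (the side conditions $1 \leq v \leq z$ and $1<z$ hold), which gives
$$a_{k+1} = \const(3,2k+5,k+1) \geq 2 + \const(3,2k+3,k) + \const(3,2k+3,k+1).$$
Then use the monotonicity in the last argument from Observation~\ref{obs:i+1}, namely $\const(3,2k+3,k+1) \geq \const(3,2k+3,k) = a_k$, to get $a_{k+1} \geq 2 + 2a_k$. Substituting the induction hypothesis yields $a_{k+1} \geq 2 + 2\cdot 2(2^k-1) = 4\cdot 2^k - 2 = 2(2^{k+1}-1)$, which closes the induction.

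I do not expect any genuine obstacle here: the content is entirely in the preceding observations and corollaries, and the argument just iterates them. The only points needing a little care are checking that the side conditions of Corollary~\ref{cor:coric} and Observation~\ref{obs:i+1} are satisfied at the chosen parameters $v=3$, $z=2k+3$, $i=k$, and extracting the base case from Lemma~\ref{lem:lem1a} (rather than the weaker Corollary~\ref{cor:co1}).
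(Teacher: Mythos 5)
Your proof is correct and follows essentially the same induction as the paper's: the recurrence from Corollary~\ref{cor:coric} combined with the monotonicity in Observation~\ref{obs:i+1} gives $\const(3,2k+5,k+1) \geq 2 + 2\cdot\const(3,2k+3,k)$, exactly as in the paper. Your base case is in fact justified more carefully than the paper's, which cites Corollary~\ref{cor:co1} even though its first item as literally stated only yields $\const(3,5,1) \geq 0$; extracting $\const(3,5,1) \geq 2$ directly from Lemma~\ref{lem:lem1a} with $v=3$, $z=5$ is the right fix.
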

\begin{proof}
By induction. 
\begin{description}
\item[Base Case:] Let $k=1$. We have $\const(3,5,1) \geq 2 = 2.(2^{1}-1)$, by Corollary \ref{cor:co1}. 
\item[Induction Hypothesis:]
Assume that $\const(3,2k+3,k) \geq 2.(2^{k}-1)$. 
\item[Induction Step:] By Corollary \ref{cor:coric}, we have:
$$\const(3,2k+5,k+1)\geq 2 + \const(3,2k+3,k)+ \const(3,2k+3,k+1)$$
By Observation~\ref{obs:i+1}, we have:
$$\const(3,2k+5,k+1)\geq 2 + 2.\const(3,2k+3,k)\geq2+ 4.(2^{k}-1)
= 2.(2^{k+1}-1)$$
\end{description}
\end{proof}

\begin{lemma}\label{lem:bound}
For all $k \geq 1$, 
$\nbTotal(2k+3) \geq (2k+2)(2^{k}-1)$.
\end{lemma}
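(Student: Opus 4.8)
The plan is to prove the bound by induction on $k$, using Theorem~\ref{theo:nbTotal} as the recurrence and Lemma~\ref{lem:lemVi2} to control the extra term $\const(3,2k+3,k)$.

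For the base case $k=1$, I would invoke the third item of Corollary~\ref{cor:co1}, which gives $\const(1,5,1)=4$, hence $\nbTotal(5)=\sum_{\ell=1}^{1}\const(1,5,\ell)=\const(1,5,1)=4$. Since $(2\cdot 1+2)(2^{1}-1)=4$, the inequality holds (with equality) at $k=1$.

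For the inductive step, fix $k\geq 2$ and assume the claim for $k-1$, i.e., $\nbTotal(2k+1)\geq (2(k-1)+2)(2^{k-1}-1)=2k\,(2^{k-1}-1)$. Applying Theorem~\ref{theo:nbTotal} (valid since $k>1$) and then Lemma~\ref{lem:lemVi2} yields
\[
\nbTotal(2k+3)=2\,\nbTotal(2k+1)+2k+\const(3,2k+3,k)\;\geq\; 4k\,(2^{k-1}-1)+2k+2\,(2^{k}-1).
\]
The remainder is a routine simplification: $4k\,(2^{k-1}-1)=2k\cdot 2^{k}-4k$, so the right-hand side equals $2k\cdot 2^{k}-4k+2k+2^{k+1}-2=2^{k}(2k+2)-(2k+2)=(2k+2)(2^{k}-1)$, which closes the induction.

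There is no real obstacle here beyond bookkeeping: the only point requiring attention is that Theorem~\ref{theo:nbTotal} is stated only for $k>1$, so the base case $k=1$ must be established independently (as above), and one must line up the induction hypothesis correctly by noting $2(k-1)+3=2k+1$ so that the bound for index $k-1$ reads $\nbTotal(2k+1)\geq 2k\,(2^{k-1}-1)$. The algebra then collapses exactly to the claimed closed form.
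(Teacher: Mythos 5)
Your proof is correct and follows essentially the same route as the paper: induction with base case $\const(1,5,1)=4$ from Corollary~\ref{cor:co1}, the recurrence of Theorem~\ref{theo:nbTotal}, and Lemma~\ref{lem:lemVi2} to bound the $\const(3,\cdot,\cdot)$ term, followed by the same algebraic collapse to $(2k+2)(2^k-1)$. The only (cosmetic) difference is that you step from $k-1$ to $k$ rather than from $k$ to $k+1$, which lets you cite Lemma~\ref{lem:lemVi2} at its literal index and keeps the bookkeeping slightly cleaner than the paper's displayed chain of inequalities.
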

\begin{proof}
By induction.
\begin{description}
\item[Base Case:] Let $k=1$. 
$\nbTotal(5) = \const(1,5,1) = 4$, by Corollary \ref{cor:co1}.
\item[Induction Hypothesis:] Assume that
$\nbTotal(2k+3) \geq (2k+2)(2^{k}-1)$ for $k  \geq 1$.
\item[Induction Step:] ~

\begin{itemize}
\item $\nbTotal(2k+5) =2.\nbTotal(2k+3)+ 2k+2+ \const(3,2k+5,k)$, by Theorem \ref{theo:nbTotal}.
\item $\nbTotal(2k+5) \geq 2.(2k+2)(2^{k}-1)+ 2k+2+ \const(3,2k+5,k)$, by induction hypothesis.
\item $\nbTotal(2k+5) \geq 2.(2k+2)(2^{k}-1)+ 2k+2+ 2.(2^{k}-1)$, by Lemma \ref{lem:lemVi2}.
\item $\nbTotal(2k+5) \geq (2k+2)(2^{k+1}-2) + 2.2^{k+1} +2k$.
\item $\nbTotal(2k+5) \geq (2k+4)2^{k+1}-2.(2k+2) +2k$.
\end{itemize}
So, we conclude that
$\nbTotal(2k+5) \geq (2k+4)2^{k+1}-(2k+4)$.
\end{description}
\end{proof}

\begin{theorem}\label{theo:bound}
  Let $n \geq 7$. Let $k$ the maximum integer such that $n = 4k+3+y$
  with $y \geq 0$. For every $D \geq 2k+3$, there is an execution of
  $\algoHC(D)$ which stabilizes in at least $(2k+2)(2^{k}-1)$ steps
  containing only executions of rules $HC_2$ in an $n$-node graph of
  diameter at most $2k+4$.
\end{theorem}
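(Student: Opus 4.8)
The plan is to bolt together the machinery already developed for the family $\mathcal{G}_k$ (Definitions~\ref{def:def1} and~\ref{def:defI}) with a routine padding argument. Fix $k\ge 1$ as in the statement; this is possible since $n\ge 7$ forces $k\ge 1$. Write $y=n-(4k+3)\ge 0$. Recall that $\mathcal{G}_k$ has $4k+3$ processes and diameter $2k+3$, and that $e^k$ is, by construction, a prefix of an execution of $\algoHC(D)$ in $\mathcal{G}_k$ for every $D\ge 2k+3$, made exclusively of $HC_2$ moves, and of length at least $\nbTotal(2k+3)$. By Lemma~\ref{lem:bound}, $\nbTotal(2k+3)\ge (2k+2)(2^{k}-1)$, so $e^k$ already has at least $(2k+2)(2^{k}-1)$ steps, all of which are $HC_2$ executions.

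Next I would pad $\mathcal{G}_k$ up to $n$ nodes: add $y$ new processes $w_1,\dots,w_y$, each adjacent only to $\Root$, and call the resulting graph $G$, so $|V(G)|=n$. Any simple path leaving a new process must traverse $\Root$, hence the distance in $G$ from any $w_j$ to a vertex $v$ of $\mathcal{G}_k$ is $1+\|\Root,v\|\le 1+(2k+3)$, two new processes are at distance $2$, and distances inside $\mathcal{G}_k$ are unchanged; therefore $\diam(G)\le 2k+4$. I would then consider the execution of $\algoHC(D)$ in $G$ that starts from the configuration which agrees with the initial configuration of $e^k$ on $\mathcal{G}_k$ and sets $d_{w_j}=1,\ par_{w_j}=\Root$ for every new process. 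Since $d_\Root=0$ is a constant and no new process is a neighbor of any vertex of $\mathcal{G}_k$ except $\Root$, the new processes are disabled from the outset and the behavior of the processes of $\mathcal{G}_k$ is exactly as in $\mathcal{G}_k$ itself; thus $e^k$ remains a valid prefix of an execution of $\algoHC(D)$ in $G$, still consisting only of $HC_2$ steps and of length $\ge (2k+2)(2^{k}-1)$.

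Finally I would extend $e^k$ to a maximal execution $e$ of $\algoHC(D)$ in $G$: by Theorem~\ref{term} every execution is finite (no hypothesis on $D$ is needed there), so $e$ ends in a terminal configuration. Every process of $G$ lies within distance $2k+3\le D$ of $\Root$, so the partial-correctness argument of Section~\ref{sect:corr} applies and this terminal configuration is legitimate, i.e. $e$ stabilizes. As $e^k$ is a prefix of $e$, the execution $e$ stabilizes in at least $|e^k|\ge \nbTotal(2k+3)\ge (2k+2)(2^{k}-1)$ steps, all of which are executions of $HC_2$, on an $n$-node graph of diameter at most $2k+4$, which is the claim.

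There is essentially no hard step here: the exponential blow-up was entirely harvested in Lemmas~\ref{lem:lemia}--\ref{lem:bound}. The only points that deserve (light) care are checking that hanging the $y$ extra pendants off $\Root$ neither perturbs the dynamics of $\mathcal{G}_k$ (it does not, because $\Root$ is a fixed constant and the pendants are isolated from the rest) nor pushes the diameter above $2k+4$, and invoking Theorems~\ref{term} and~\ref{pc}, via the fact that $2k+3$ already bounds the eccentricity of $\Root$ in $G$, to upgrade the prefix $e^k$ into a genuinely stabilizing execution.
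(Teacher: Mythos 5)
Your proposal is correct and follows essentially the same route as the paper: take the execution $e^k$ in $\mathcal{G}_k$, bound its length via Lemma~\ref{lem:bound}, and pad $\mathcal{G}_k$ with $y$ pendant vertices attached to $\Root$ (which, being neighbors only of the constant-state root, cannot perturb the dynamics) to reach $n$ nodes while keeping the diameter at most $2k+4$. You merely spell out details the paper leaves implicit, such as the pendants' initial states and the extension of $e^k$ to a terminating execution via Theorem~\ref{term}.
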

\begin{proof}
  Let $\mathcal{G}_{k} = (V_k, E_k)$.  Let $\mathcal{G}_{k}^\prime =
  (V_k \cup {v_1,\ldots,v_y}, E_k \cup \{\{v_i,\Root\}, i \in
  [1..y]\})$.  Since $\mathcal{G}_{k}$ has diameter $2k+3$,
  $\mathcal{G}_{k}^\prime$ has at most diameter $2k+4$. Since
  $\mathcal{G}_{k}$ contains $4k+3$ nodes, $\mathcal{G}_{k}^\prime$
  contains $n$ nodes. Finally, nodes $v_1, \ldots, v_y$ are only
  neighbors of $\Root$ whose state is constant. So, $v_1, \ldots, v_y$
  have no impact on the behavior of nodes of $\mathcal{G}_{k}$. Hence,
  we can apply Lemma~\ref{lem:bound} and we are done.
\end{proof}

\begin{corollary}
  For every $n$-node graph $G$, $\algoHC(n)$, {\em i.e.}, the algorithm
  proposed in~\cite{HC92}, stabilizes in $\Omega(2^\diam)$ steps,
  where $\diam$ is the diameter of $G$.
\end{corollary}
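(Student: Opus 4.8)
The plan is to derive the corollary as an immediate specialization of Theorem~\ref{theo:bound}, obtained by instantiating the bound parameter $D$ with the exact network size $n$ --- the choice that makes $\algoHC(D)$ coincide with the original Huang--Chen algorithm $\algoHC(n)$ --- and then translating the resulting step lower bound into a statement about the diameter.

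First I would dispose of the small cases: for $n < 7$ there is nothing to prove, since $2^{\diam}$ is then bounded by a constant. So assume $n \geq 7$ and let $k$ be the largest integer such that $n = 4k+3+y$ for some $y \geq 0$, exactly as in Theorem~\ref{theo:bound}. The only hypothesis of that theorem to be checked is $D \geq 2k+3$; taking $D = n$ this holds trivially because $n = 4k+3+y \geq 4k+3 \geq 2k+3$. Applying Theorem~\ref{theo:bound} with $D = n$ then yields an $n$-node graph $G$ (namely $\mathcal{G}_k$ possibly augmented with pendant neighbors of $\Root$) of diameter $\diam \leq 2k+4$, together with an execution of $\algoHC(n)$ on $G$ that does not reach a terminal configuration in fewer than $(2k+2)(2^{k}-1)$ steps.

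It remains only to re-express this bound in terms of $\diam$. From $\diam \leq 2k+4$ we get $k \geq (\diam-4)/2$, hence $(2k+2)(2^{k}-1) \geq 2^{k} \geq 2^{(\diam-4)/2}$, which is $2^{\Omega(\diam)}$; this is exactly the sense in which the worst-case stabilization time of $\algoHC(n)$ is $\Omega(2^{\diam})$. One can sharpen the exponent slightly, to $2^{(\diam-1)/2}$ as quoted just before the corollary, by using the exact diameter $2k+3$ of $\mathcal{G}_k$ itself rather than the upper bound $2k+4$ for its augmentation $\mathcal{G}_k'$; but the coarser estimate already suffices for the claim.

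I do not expect any real obstacle here: all the technical content --- the recursive construction of the family $(\mathcal{G}_i)_i$, the analysis of $\const(\cdot,\cdot,\cdot)$ and $\nbTotal$, and the inductions of Lemmas~\ref{lem:lemVi2} and~\ref{lem:bound} and Theorem~\ref{theo:nbTotal} leading to $\nbTotal(2k+3) \geq (2k+2)(2^{k}-1)$ --- is already established upstream. The only care needed is arithmetic bookkeeping: verifying that $D = n$ meets the hypothesis $D \geq 2k+3$, and converting $(2k+2)(2^{k}-1)$ into a clean exponential-in-$\diam$ lower bound via $\diam \leq 2k+4$. Both are routine.
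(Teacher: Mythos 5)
Your proposal is correct and follows exactly the route the paper intends: the corollary is the instantiation of Theorem~\ref{theo:bound} with $D=n$ (which satisfies $n=4k+3+y\geq 2k+3$), followed by rewriting $(2k+2)(2^{k}-1)$ as an exponential in $\diam$ via $\diam\leq 2k+4$. You are also right to note that what this literally yields is $2^{\Omega(\diam)}$ (roughly $2^{\diam/2}$, matching the paper's own remark of $2^{(\diam-1)/2}$) rather than a base-$2$ bound $\Omega(2^{\diam})$ in the strict sense, a slight overstatement that is in the paper's wording, not in your argument.
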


\subsection{Algorithm $\algoB(D)$}

Theorem~\ref{theo:bound} exhibits an execution exponential in steps
where only rules $HC_2$ are executed. So, this execution is also an
execution of $\algoFHC(D)$ ({\em i.e.}, the fast implementation of
$\algoHC(D)$).  Moreover, this is also a definition of $\algoB(D)$, by
Lemma \ref{lem:fast-impl1}). Hence, we can conclude with the following
theorem:

\begin{theorem}
  Let $n \geq 7$. Let $k$ the maximum integer such that $n = 4k+3+y$
  with $y \geq 0$. For every $D \geq 2k+3$, there is an execution of
  $\algoB(D)$ (resp.  $\algoFHC(D)$) which stabilizes in at least
  $(2k+2)(2^{k}-1)$ steps in an $n$-node graph of diameter at most
  $2k+4$.
\end{theorem}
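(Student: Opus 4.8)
The plan is to reuse, essentially for free, the exponential execution already constructed in Theorem~\ref{theo:bound}. That theorem provides, for the given $n$, $k$ and every $D\geq 2k+3$, an $n$-node graph $\mathcal{G}_{k}^\prime$ of diameter at most $2k+4$ together with a step sequence $e^k$ of $\algoHC(D)$ in $\mathcal{G}_{k}^\prime$ of length at least $(2k+2)(2^{k}-1)$, \emph{every} step of which consists solely of executions of rule $HC_2$. All that is left is to certify that this same step sequence is legal for $\algoB(D)$ and for $\algoFHC(D)$.

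First I would observe that $e^k$ is an execution prefix of $\algoFHC(D)$. In every step of $e^k$, a moving process $p$ satisfies $d_{par_p}>\mind(p)$, which is exactly the guard of $HC_2$, and this guard is unchanged in $\algoFHC(D)$; moreover $d_{par_p}>\mind(p)$ is incompatible with the guard $d_{par_p}=\mind(p)$ of $FHC_1$, so in $\algoFHC(D)$ rule $HC_2$ is the \emph{only} enabled rule of $p$. Since the daemon is distributed and unfair, it may select precisely the set of processes that move in that step. Hence each step of $e^k$ is a step of $\algoFHC(D)$ containing executions of $HC_2$ only, so by Lemma~\ref{lem:fast-impl1} it is also a possible step of $\algoB(D)$. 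Consequently the finite step sequence $e^k$ is at once an execution prefix of $\algoHC(D)$, of $\algoFHC(D)$, and of $\algoB(D)$.

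To conclude, I would extend $e^k$, inside $\algoB(D)$ (resp.\ $\algoFHC(D)$), to a maximal execution; such an extension exists and is finite by Theorem~\ref{term}, and it is well defined because no configuration of $e^k$ before its last one is terminal for $\algoB(D)$ --- each step of $e^k$ is a genuine $\algoB(D)$-step by Lemma~\ref{lem:fast-impl1}, hence its source configuration has an enabled process. The length of this maximal execution is at least the length of $e^k$, i.e.\ at least $(2k+2)(2^{k}-1)$; therefore it reaches its terminal configuration only after at least $(2k+2)(2^{k}-1)$ steps, which is exactly the asserted lower bound on the stabilization time of $\algoB(D)$ and of $\algoFHC(D)$ in an $n$-node graph of diameter at most $2k+4$.

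The statement carries no genuine difficulty of its own: all the effort lives in Theorem~\ref{theo:bound}, namely the recursive construction of the family $\mathcal{G}_k$ and the recurrences satisfied by $\const$ and $\nbTotal$. The two points that require a line of attention --- and they form the whole of this short argument --- are that an $HC_2$-only execution transports faithfully to $\algoFHC(D)$ (same rule, and $FHC_1$ cannot be enabled at a process firing $HC_2$) and, via Lemma~\ref{lem:fast-impl1}, to $\algoB(D)$, and that completing the transported prefix to a terminating execution can only lengthen it.
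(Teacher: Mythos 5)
Your proposal is correct and follows exactly the paper's route: take the $HC_2$-only execution from Theorem~\ref{theo:bound}, note it is an execution of $\algoFHC(D)$, and transfer it to $\algoB(D)$ via Lemma~\ref{lem:fast-impl1}. Your write-up is in fact slightly more careful than the paper's, since you explicitly check that $FHC_1$ is disabled at each moving process and that the prefix extends to a finite maximal execution.
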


\section{Conclusion and Perspective}\label{sect:ccl}

In this paper, we revisited two fundamental results of the
self-stabilizing literature~\cite{HC92,DIM93}. More precisely, we
proposed three silent self-stabilizing BFS spanning tree algorithms
working in the composite atomicity model inspired from the solutions
proposed in~\cite{HC92,DIM93}: Algorithms $\algoU$, $\algoB(D)$, and
$\algoHC(D)$. We then presented a deep study of these algorithms.  Our
results are related to both correctness and complexity.

Concerning the correctness part, we proposed in particular a new,
simple, and general proof scheme to show the convergence of silent
algorithms under the distributed unfair daemon. We believe that our
approach, based on process partitioning, is versatile enough to be
applied in the convergence proof of many other silent algorithms.

Concerning the complexity part, our analysis notably shows that the
Huang and Chen's algorithm~\cite{HC92} stabilizes in $\Omega(n)$
rounds (where $n$ is the size of the network), while it confirms that
the stabilization time in rounds of the Dolev {\em et al}'s
algorithm~\cite{DIM93} is optimal (exactly $\diam$ rounds in the worst
case). Finally, our analysis reveals that the stabilization time in
steps of $\algoU$ cannot be bounded, while the stabilization time of
both $\algoB(D)$ and $\algoHC(D)$ can be exponential in $\diam$, the
diameter of the network. Our results must be put in perspective with
the complexities of the silent BFS construction proposed
in~\cite{CRV11}, which stabilizes in $O(\diam^2)$ rounds and $O(n^6)$
steps, respectively. This suggests the existence of a trade-off
between the complexity in rounds and steps for the silent construction
of a BFS tree. This conjecture would have to be investigated in future
works.

\bibliographystyle{unsrt}
\bibliography{biblio}

\end{document}